\newtheorem{thm}{Theorem}
\newtheorem{lem}{Lemma}
\newtheorem{cor}{Corollary}
\renewcommand*{\numberline}[1]{\hb@xt@1em{#1\hfil}} 
\begin{document}

\title{Symmetric channel verification for purifying noisy quantum channels}

\author{Kento Tsubouchi}
\email{tsubouchi@noneq.t.u-tokyo.ac.jp}
\affiliation{Department of Applied Physics, The University of \mbox{Tokyo, 7-3-1} Hongo, Bunkyo-ku, Tokyo 113-8656, Japan}

\author{Yosuke Mitsuhashi}
\affiliation{Department of Basic Science, The University of \mbox{Tokyo, 3-8-1} Komaba, Meguro-ku, Tokyo 153-8902, Japan}

\author{Ryuji Takagi}
\affiliation{Department of Basic Science, The University of \mbox{Tokyo, 3-8-1} Komaba, Meguro-ku, Tokyo 153-8902, Japan}

\author{Nobuyuki Yoshioka}
\email{ny.nobuyoshioka@gmail.com}
\affiliation{\mbox{International Center for Elementary Particle Physics, The University of Tokyo, 7-3-1 Hongo, Bunkyo-ku, Tokyo 113-0033, Japan}}

\begin{abstract}
Symmetry inherent in quantum states has been widely used to reduce the effect of noise in quantum error correction and a quantum error-mitigation technique known as symmetry verification.
However, these symmetry-based techniques exploit symmetry in quantum {\it states} rather than quantum {\it channels}, limiting their application to cases where the entire circuit shares the same symmetry.
In this work, we propose symmetric channel verification (SCV), a channel purification protocol that leverages the symmetry inherent in quantum channels.
By introducing different phases to each symmetric subspace and employing a quantum phase estimation-like circuit, SCV can detect and correct symmetry-breaking noise in quantum channels.
We further propose a hardware-efficient implementation of SCV at the virtual level, which requires only a single-qubit ancilla and is robust against the noise in the ancilla qubit.
Our protocol is applied to various Hamiltonian simulation circuits and phase estimation circuits, resulting in a significant reduction of errors.
Furthermore, in setups where only Clifford unitaries can be used for noise purification, which is relevant in the early fault-tolerant regime, we show that SCV under Pauli symmetry represents the optimal purification method.
\end{abstract}

\maketitle

\section{Introduction}
\label{sec_introduction}
Protecting quantum computers from the effects of noise is one of the central challenges in the field of quantum computation.
To achieve this goal, the symmetry inherent in quantum states has emerged as a powerful tool in reducing the effect of unwanted noise.
For example, quantum error correction~\cite{shor1995scheme, knill1996threshold, aharonov1997fault, lidar2013quantum, ofek2016extending, krinner2022realizing, google2023suppressing, sivak2023real, bluvstein2024logical, ai2024quantum} detects and corrects errors by encoding logical quantum states into a symmetric subspace of the entire Hilbert space of physical qubits.
Moreover, the quantum error mitigation~\cite{temme2017error, endo2021hybrid, cai2023quantum, kim2023evidence} technique known as symmetry verification~\cite{bonet2018low, mcardle2019error} leverages problem-specific symmetry to filter out the effects of symmetry-breaking noise.

Despite their effectiveness, these symmetry-based protocols are primarily designed to exploit symmetry in quantum \textit{states} rather than quantum \textit{channels}.
Therefore, for these methods to be applicable, the entire quantum circuit, including the input state, needs to share the same symmetry structure.
In other words, symmetry-based protocols become ineffective when the input states or individual channels in the quantum circuit exhibit varying symmetries.
To overcome these difficulties, we need to generalize the symmetry-based protocols to the channel level.
One possible generalization is to utilize symmetry in channel purification protocols~\cite{lee2023error, miguel2023superposed, liu2024virtual}, where the noisy channel itself is purified by coherently connecting the input and output of the noisy quantum channel.
However, these protocols mostly require multiple copies of the noisy channel, which are difficult to prepare given the limited hardware capabilities.
While there exist some channel purification protocols that use only a single copy of the noisy channel~\cite{xiong2023circuit, debroy2020extended, gonzales2023quantum, van2023single, das2024purification, yoshioka2024error}, the main focus of such protocols has been limited to utilizing Pauli symmetry for purifying near-Clifford circuits, and a comprehensive analysis of general channels and symmetries remains an open problem.

In this work, we bridge this gap by proposing \textit{symmetric channel verification} (SCV).
SCV is a novel single-copy channel purification protocol applicable to arbitrary symmetry inherent in quantum channels, and thus extends various notions of error countermeasures, such as error detection, virtual error detection, and error correction, to the channel level (see Table~\ref{tab_summary}).
Concretely, SCV detects symmetry-breaking noise in noisy channels by adding different phases to each symmetric subspace and using a quantum phase estimation-like circuit.
We characterize the sufficient conditions for the noise to be completely removed by applying SCV.
We also demonstrate that we can correct errors with no sampling overhead by applying feedback operations according to the detected errors.
Moreover, we propose \textit{virtual symmetric channel verification} (virtual SCV), a hardware-efficient implementation of SCV at the expectation value level.
Virtual SCV can virtually detect symmetry-breaking noise using only a single-qubit ancilla and is robust against noise affecting the ancilla.
This method can be regarded as an extension of the virtual implementation of symmetry verification~\cite{mcclean2020decoding, cai2021quantum, endo2022quantum, tsubouchi2023virtual} to the channel level.
Because of its robustness against the noise in the ancilla qubit, virtual SCV can be used to completely mitigate errors on idling qubits, which appear frequently in early fault-tolerant algorithms based on qubitization~\cite{low2017optimal, babbush2018encoding, low2019hamiltonian, yoshioka2024hunting}.

\begin{table*}[t]
    \centering
    \begin{tabular}{c||c|c|c}
                      & Error detection & Virtual error detection & Error correction \\ \hline
        State-level protocols   & Refs.~\cite{bonet2018low, mcardle2019error} & Refs.~\cite{mcclean2020decoding, cai2021quantum, endo2022quantum, tsubouchi2023virtual} & Refs.~\cite{shor1995scheme, knill1996threshold, aharonov1997fault, lidar2013quantum, ofek2016extending, krinner2022realizing, google2023suppressing, sivak2023real, bluvstein2024logical, ai2024quantum} \\
        Channel-level protocols & \checkmark (SCV, Sec.~\ref{sec_SCV}) & \checkmark (Virtual SCV, Sec.~\ref{sec_VSCV}) & \checkmark (SCV+feedback, Sec.~\ref{sec_SCV_correct})
    \end{tabular}
    \caption{Summary of the contributions of this work. In this work, we expand the error detection, virtual error detection, and error-correction techniques for quantum \textit{states} to quantum \textit{channels}. More specifically, we introduce single-copy channel purification protocols that can detect (symmetric channel verification, SCV), virtually detect (virtual symmetric channel verification, virtual SCV), and correct (SCV + feedback) symmetry-breaking errors happening on quantum channels. Furthermore, in setups where only Clifford unitaries can be used for noise purification, relevant in the early fault-tolerant regime, we show that symmetric channel verification under Pauli symmetry represents the optimal protocol for detecting and correcting errors (Sec.~\ref{sec_Clifford_purification}).}
    \label{tab_summary}
\end{table*}

As an application of our protocols, we apply SCV and virtual SCV under Pauli symmetry to mitigate logical errors on various early fault-tolerant quantum algorithms, such as Hamiltonian simulation and phase estimation.
We demonstrate for certain well-structured Hamiltonian simulation circuits that, by applying SCV at the logical level, arbitrary single-qubit errors on the circuit can be detected and corrected.
Moreover, we demonstrate that the effect of errors on idling qubits can be completely removed by virtual SCV.
This is particularly useful in Hamiltonian simulation and phase estimation based on qubitization, where we observe a quadratic reduction of logical errors.
We also discuss applying SCV to quantum channels under particle number conservation symmetry.

We further discuss the ability of our protocols to purify noisy logical channels in the early fault-tolerant regime.
When mitigating logical errors in this regime, it is desirable to implement Clifford unitaries~\cite{piveteau2021error, lostaglio2021error, suzuki2022quantum, tsubouchi2024symmetric}, as they are relatively easy to implement at the logical level.
Therefore, understanding the fundamental limitations in purifying noisy logical channels using Clifford unitaries is important from a practical perspective.
Here, we show that for Hamiltonian simulation circuits, the noise detectable and correctable using channel purification protocols restricted to Clifford unitaries exactly matches the noise detectable and correctable using SCV under Pauli symmetry.
Moreover, based on a resource-theoretic analysis~\cite{regula2021fundamental}, we derive an optimal bound for approximately purifying noisy Pauli rotation gates, where equality can be achieved using SCV under Pauli symmetry.
These are remarkable examples of SCV fully exploiting the available operations to their ultimate limit.

By proposing SCV and virtual SCV, we provide a unified framework of single-copy channel purification protocols utilizing symmetries inherent in quantum channels.
Thus, our results pave new ways for reducing errors in quantum computation. 
Moreover, SCV and virtual SCV offer various advantages over existing noise-reduction protocols.
Unlike traditional symmetry verification methods~\cite{bonet2018low, mcardle2019error} and their virtual variants~\cite{mcclean2020decoding, cai2021quantum, endo2022quantum, tsubouchi2023virtual}, which are state-level protocols, our channel-level protocols can be applied even when input states lack the requisite symmetry or when different quantum channels in a circuit exhibit varying symmetries.
Compared to other channel purification protocols that require multiple copies of noisy channels as input~\cite{lee2023error, miguel2023superposed, liu2024virtual}, symmetric channel verification requires only a single input, making it easier to implement in real devices.
Furthermore, existing single-copy channel purification protocols~\cite{xiong2023circuit, debroy2020extended, gonzales2023quantum, van2023single, das2024purification} can only purify noisy near-Clifford gates, whereas our method can be applied to general non-Clifford unitaries, regardless of whether the symmetry is discrete or continuous, Abelian or non-Abelian.
Such fundamental distinctions highlight the advancement achieved in our work.

This paper is organized as follows.
In Sec.~\ref{sec_channel_purification}, we introduce the setup of channel purification.
Sec.~\ref{sec_SCV} focuses on symmetric channel verification (SCV), providing theoretical details and implementation strategies.
Sec.~\ref{sec_VSCV} presents virtual symmetric channel verification (virtual SCV), a hardware-efficient variant of SCV.
In Sec.~\ref{sec_SCV_correct}, we consider correcting errors rather than detecting them using SCV.
Finally, Sec.~\ref{sec_Clifford_purification} discusses the limitations of channel purification protocols restricted to Clifford unitaries.
We make our conclusion in Sec.~\ref{sec_conclusion} by summarizing our results and discussing possible future directions.

\section{Setup of channel purification}
\label{sec_channel_purification}
We first provide the problem setup considered in the task of channel purification~\cite{lee2023error, miguel2023superposed, liu2024virtual, xiong2023circuit, debroy2020extended, gonzales2023quantum, van2023single, das2024purification}.
Specifically, we focus on single-copy channel purification protocols, where the noisy channel is accessed only once.
Let us assume that we want to implement an $n$-qubit unitary channel $\mathcal{U}(\cdot) = U\cdot U^\dag$, but the channel is affected by a noise channel $\mathcal{N}$.
As depicted in Fig.~\ref{fig_channel_purification} (a), channel purification aims to eliminate the effect of noise channel $\mathcal{N}$ under the constraint that we only have access to the noisy channel $\mathcal{U}_{\mathcal{N}}$, by coherently connecting the input and output of the noisy channel with ancilla qubits.
Concretely, we prepare $m$-qubit ancilla initialized in $\ket{0^m}$ and apply $(n+m)$-qubit unitary operations $\mathcal{U}_{\mathrm{E}}(\cdot) = U_{\mathrm{E}}\cdot U_{\mathrm{E}}^\dag$ and $\mathcal{U}_{\mathrm{D}}(\cdot) = U_{\mathrm{D}}\cdot U_{\mathrm{D}}^\dag$ before and after the noisy channel $\mathcal{U}_{\mathcal{N}}$.
Then, we measure the ancilla qubits in the computational basis and either (i) post-select, (ii) average, or (iii) discard the measurement results.
In the following, let us describe how these processes can be understood as transforming the noisy channel $\mathcal{U}_{\mathcal{N}}$ using a quantum supermap $\Theta$.

First, post-selecting the measurement results corresponds to transforming the noisy channel $\mathcal{U}_{\mathcal{N}}$ into a trace non-increasing map $\Theta^{\mathrm{det}}(\mathcal{U}_{\mathcal{N}})$ defined as
\begin{equation}
    \label{eq_purification_detection}
    \rho \mapsto \bra{0^m}\mathcal{U}_{\mathrm{D}}\circ(\mathcal{U}_{\mathcal{N}}\otimes\mathcal{I}_{\mathrm{a}})\circ\mathcal{U}_{\mathrm{E}}(\rho\otimes\ketbra{0^m})\ket{0^m}.
\end{equation}
This map outputs the normalized quantum state with probability $\mathrm{tr}[(\Theta^{\mathrm{det}}(\mathcal{U}_{\mathcal{N}}))(\rho)]$.
Here, $\ket{0^m}$ represents the initial state in the ancilla qubits and $\mathcal{I}_{\mathrm{a}}$ is the identity operation on the ancilla qubits.
Channel purification with post-selection can be regarded as detecting errors in the noise channel $\mathcal{N}$.
In Sec.~\ref{sec_SCV}, we propose a channel purification protocol that can detect symmetry-breaking errors by appropriately constructing the unitaries $U_{\mathrm{E}}$ and $U_{\mathrm{D}}$ in this setup.

\begin{figure}[t]
    \begin{center}
        \includegraphics[width=0.99\linewidth]{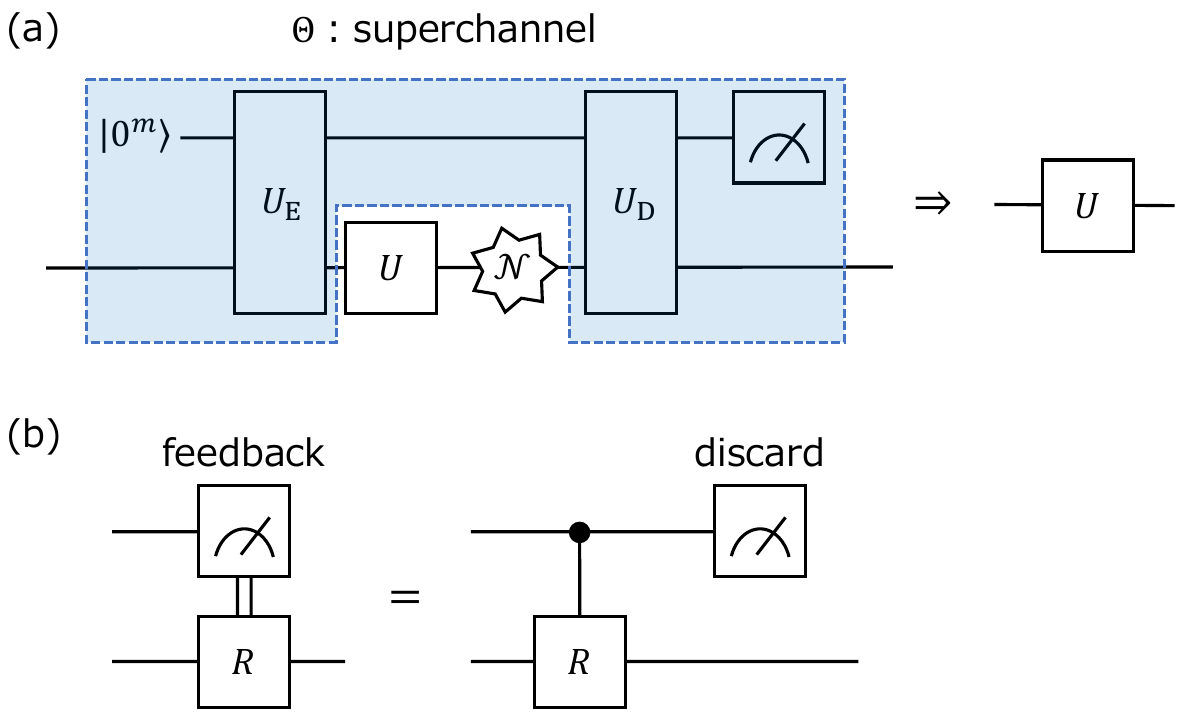}
        \caption{
        Schematic of channel purification. (a) We aim to reduce the effect of the noise $\mathcal{N}$ by preparing $m$-qubit ancilla initialized in $\ket{0^m}$, applying $(n+m)$-qubit unitary operations $U_{\mathrm{E}}$ and $U_{\mathrm{D}}$ before and after the noisy channel, and measuring the ancilla qubits in the computational basis either to (i) post-select, (ii) average, or (iii) discard the measurement results. These operations can be regarded as applying a quantum supermap $\Theta$ to the noisy channel. (b) Applying feedback control to the system can be interpreted as applying control operation and discarding the measurement results.
        }
        \label{fig_channel_purification}
    \end{center}
\end{figure}

Second, when we average the measurement results, the effect of the transformed channel can be regarded as a virtual map $\Theta^{\mathrm{vir}}(\mathcal{U}_{\mathcal{N}})$ acting as
\begin{equation}
    \label{eq_purification_virtual}
    \rho \mapsto \mathrm{tr}_{\mathrm{a}}[\mathcal{U}_{\mathrm{D}}\circ(\mathcal{U}_{\mathcal{N}}\otimes\mathcal{I}_{\mathrm{a}})\circ\mathcal{U}_{\mathrm{E}}(\rho\otimes\ketbra{0^m})(I_n\otimes Z^{\otimes m})],
\end{equation}
where $I_n$ is the identity operator of the $n$-qubit system.
Here, a virtual map means that we cannot obtain the output quantum state $(\Theta^{\mathrm{vir}}(\mathcal{U}_{\mathcal{N}}))(\rho)$ itself, but it can only be used to calculate an expectation value of an observable $O$ as $\mathrm{tr}[(\Theta^{\mathrm{vir}}(\mathcal{U}_{\mathcal{N}}))(\rho)O]$.
In Sec.~\ref{sec_VSCV}, we propose a virtual error-detection protocol that can simulate the error-detected map $\Theta^{\mathrm{det}}(\mathcal{U}_{\mathcal{N}})$ using a virtual map $\Theta^{\mathrm{vir}}(\mathcal{U}_{\mathcal{N}})$ with smaller hardware overhead.

Finally, discarding the measurement results leads to applying a superchannel $\Theta^{\mathrm{cor}}$ which maps the noisy channel $\mathcal{U}_{\mathcal{N}}$ into a valid quantum channel $\Theta^{\mathrm{cor}}(\mathcal{U}_{\mathcal{N}})$ defined as 
\begin{equation}
    \label{eq_purification_correction}
    \rho \mapsto \mathrm{tr}_{\mathrm{a}}[\mathcal{U}_{\mathrm{D}}\circ(\mathcal{U}_{\mathcal{N}}\otimes\mathcal{I}_{\mathrm{a}})\circ\mathcal{U}_{\mathrm{E}}(\rho\otimes\ketbra{0^m})].
\end{equation}
We note that measuring the ancilla qubits and applying feedback control to the system can be characterized using $\Theta^{\mathrm{cor}}$ by replacing the measurement+feedback control with controlled-unitary before the measurement (see Fig.~\ref{fig_channel_purification} (b)).
Therefore, channel purification discarding the measurement results corresponds to correcting errors in the noise channel $\mathcal{N}$.
In Sec.~\ref{sec_SCV_correct},  we construct a channel purification protocol that can correct symmetry-breaking errors in this setup.

Throughout this work, unless otherwise specified, we primarily focus on single-copy channel purification in the early fault-tolerant regime. 
In this regime, purification protocols are applied at the logical level: we assume that ideal recovery operations have already been applied to the noisy channel, such that $\mathcal{U}_{\mathcal{N}}$ can be regarded as a noisy \textit{logical} channel.
In this case, $\mathcal{N}$ represents a logical error that was not corrected by the quantum error-correction process.
A notable feature of this regime is that Clifford operations can be implemented much more efficiently than non-Clifford operations. 
Therefore, when reducing the effect of noise, it is desirable to rely on Clifford operations~\cite{piveteau2021error, lostaglio2021error, suzuki2022quantum, tsubouchi2024symmetric}. 
This implies that $U_{\mathrm{E}}$ and $U_{\mathrm{D}}$ should preferably be Clifford operations in order to minimize the hardware overhead of purification.
We discuss theoretical limitations of such Clifford-restricted channel purification protocols in Sec.~\ref{sec_Clifford_purification}.
Note that these analyses do not imply that our methods are not applicable to the near-term regime. We further discuss the applicability of our results to near-term quantum devices in Appendix~\ref{sec_NISQ}.

\section{Symmetric channel verification}
\label{sec_SCV}

\subsection{General framework}
In many quantum circuits, the unitary $U$ we aim to implement often exhibits inherent symmetry. We assume that we know an operator $S$ such that $[U, S] = 0$; for instance, when considering the time evolution of a Hamiltonian $H$ with particle number conservation symmetry, $U = e^{i\theta H}$ commutes with $S = \sum_i Z_i$.
Additionally, when $U$ is part of a quantum circuit that can be diagonalized in the computational basis, we have $[U, S] = 0$ with $S = Z_i$.
In these cases, $U$ can be block diagonalized with respect to the eigenspaces of $S$.
Let $\{\Pi_i\}_{i=1}^M$ represent the set of projectors onto the eigenspaces of $S$.
The ideal unitary $U$ can then be expressed as $U = \sum_{i} \Pi_i U \Pi_i$, representing its block diagonalization.
Thus, the ideal channel $\mathcal{U}$ can be decomposed as
\begin{equation}
    \label{eq_unitary_decomposition}
    \mathcal{U}(\cdot) = \sum_{i,j} \Pi_i\mathcal{U}(\Pi_i\cdot\Pi_j)\Pi_j,
\end{equation}
indicating that $\mathcal{U}$ maps the $(i, j)$-th block element to the $(i, j)$-th block element.

When the input state $\rho$ to the unitary $\mathcal{U}$ lies within an eigenspace of $S$ as $\rho = \Pi_i \rho \Pi_i$, the ideal output state also remains in the eigenspace as $\mathcal{U}(\rho) = \Pi_i \mathcal{U}(\rho) \Pi_i$.
This property can be utilized to detect noise $\mathcal{N}$ that breaks the symmetry by using a quantum error-mitigation technique called symmetry verification~\cite{bonet2018low, mcardle2019error}.
By measuring the noisy state $\mathcal{U}_{\mathcal{N}}$ with the POVM $\{\Pi_i\}_{i=1}^M$ and post-selecting $\Pi_i$, symmetry verification detects symmetry-breaking errors and produces a state
\begin{equation}
    \frac{\Pi_i\mathcal{U}_{\mathcal{N}}(\rho)\Pi_i}{\mathrm{tr}[\Pi_i\mathcal{U}_{\mathcal{N}}(\rho)\Pi_i]}
\end{equation}
with probability ${\mathrm{tr}[\Pi_i\mathcal{U}_{\mathcal{N}}(\rho)\Pi_i]}$.
This state has a larger overlap with the ideal state $\mathcal{U}(\rho)$ compared to the noisy state $\mathcal{U}_\mathcal{N}(\rho)$, so the impact of noise is effectively reduced.

Conversely, if the input state $\rho$ does not lie within an eigenspace of $S$, the ideal output state $\mathcal{U}(\rho)$ will also not belong to an eigenspace of $S$.
In such cases, symmetry verification cannot be applied to detect noise $\mathcal{N}$, as the measurement with the POVM $\{\Pi_i\}_{i=1}^M$ may disrupt the ideal state $\mathcal{U}(\rho)$.
These scenarios frequently arise when one wishes to simulate the time evolution of either a time-independent or dependent Hamiltonian for arbitrary states, or when $U$ is a component of a quantum circuit where other parts lack the same symmetry.
Nevertheless, the noise $\mathcal{N}$ still breaks the symmetry of the ideal channel $\mathcal{U}$, preventing the $(i, j)$-th block element from being mapped to the $(i, j)$-th block element as described in Eq.~\eqref{eq_unitary_decomposition}.
This symmetry-breaking property can be exploited to reduce noise effects by purifying the noisy channel $\mathcal{U}_{\mathcal{N}}$ itself, rather than the output state $\mathcal{U}_{\mathcal{N}}(\rho)$.

Here, we propose \textit{symmetric channel verification} (SCV), a single-copy channel purification protocol that purifies the noisy channel $\mathcal{U}_{\mathcal{N}}(\cdot)$ by detecting symmetry-breaking noise.
To implement SCV, we utilize the circuit structure depicted in Fig.~\ref{fig_SCV}.
The SCV gadget introduces different phases to each symmetric subspace and employs a quantum phase estimation-like circuit to detect changes in the symmetric subspace before and after the noisy channel.
We first prepare $m = \lceil \log_2M \rceil$-qubit ancilla and apply a Hadamard gate to each ancilla qubit.
Next, for $k=1,\ldots,m$, we sandwich the noisy channel $\mathcal{U}_{\mathcal{N}}$ between controlled-$V_S^{2^{k-1}}$ and controlled-$V_S^{\dag2^{k-1}}$,  where $k$-th ancilla qubit is taken as the control qubit and $V_S$ applies a phase $\exp[\frac{2\pi i}{2^m}j]$ to the $j$-th eigenspace as
\begin{equation}
    \label{eq_V_S}
   V_S = \sum_j\exp\qty[\frac{2\pi i}{2^m}j] \Pi_j.    
\end{equation}
Then, the inverse quantum Fourier transform is applied to the ancilla.
Finally, we measure the ancilla in the computational basis and post-select the measurement result $\ket{0^m}$.

\begin{figure}[t]
    \begin{center}
        \includegraphics[width=0.99\linewidth]{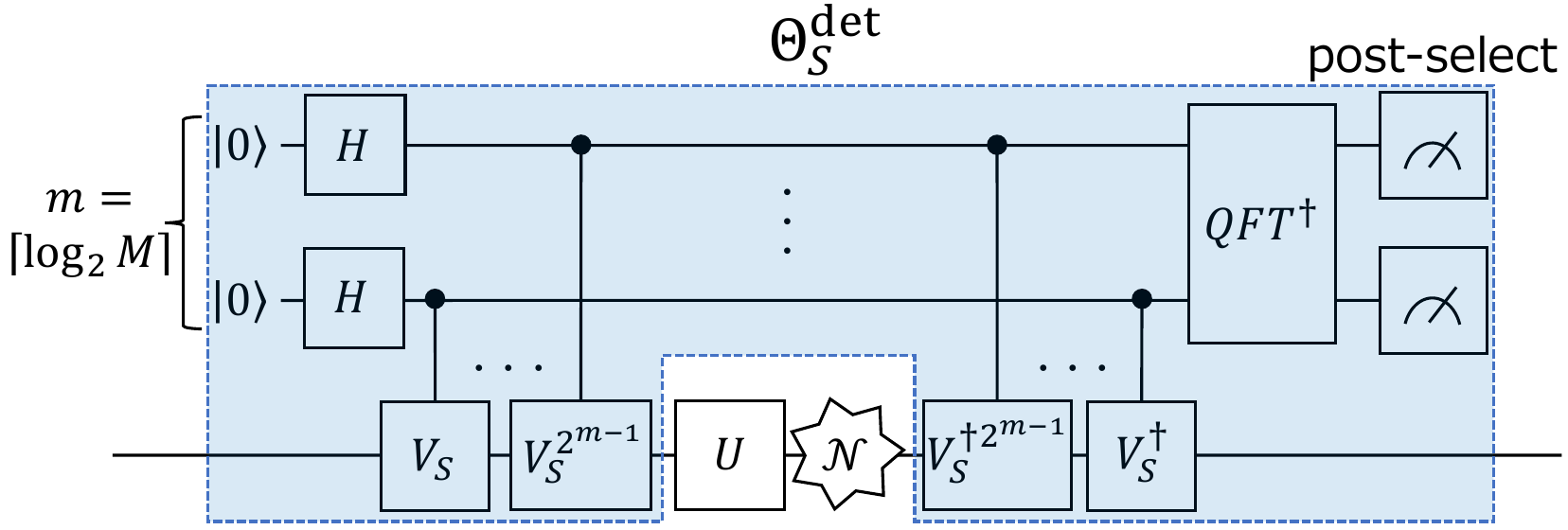}
        \caption{Circuit structure of symmetric channel verification (SCV). We prepare $m = \lceil \log_2M \rceil$-qubit ancilla, apply Hadamard gate, apply controlled-$V_S^{2^{k-1}}$ for $k=1,\ldots,m$ where the $k$-th ancilla is the control qubit and $V_S = \sum_j \exp[\frac{2\pi i}{2^m}j]\Pi_j$, apply the noisy channel $\mathcal{U}_{\mathcal{N}} = \mathcal{N}\circ\mathcal{U}$, apply controlled-$V_S^{\dag2^{k-1}}$ for $k=1,\ldots,m$, perform the inverse quantum Fourier transform, measure the ancilla qubits in the computational basis, and post-select the measurement result $\ket{0^m}$. This operation transforms the noisy channel $\mathcal{U}_\mathcal{N}(\cdot)$ into $(\Theta^{\mathrm{det}}_S(\mathcal{U}_\mathcal{N}))(\cdot)=\sum_{ij} \Pi_i \mathcal{U}_{\mathcal{N}}(\Pi_i\cdot \Pi_j)\Pi_j$.
        }
        \label{fig_SCV}
    \end{center}
\end{figure}

When a state $\rho$ is an input to this circuit, the state before measurement is
\begin{equation}
    \frac{1}{2^m} \sum_{ii'jj'kk'} \delta_{i, i'+k}\delta_{j, j'+k'} \Pi_{i'}\mathcal{U}_{\mathcal{N}}(\Pi_{i}\rho\Pi_{j})\Pi_{j'} \otimes \ketbra{k}{k'},
\end{equation}
where $\delta$ represents the Kronecker delta with $\mathrm{mod}~2^m$.
Upon post-selection with $\ket{0^m}$, the resulting state becomes
\begin{equation}
    \frac{\sum_{ij}\Pi_i\mathcal{U}_\mathcal{N}(\Pi_i \rho \Pi_j)\Pi_j}{\mathrm{tr}[\sum_{ij}\Pi_i\mathcal{U}_\mathcal{N}(\Pi_i \rho \Pi_j)\Pi_j]}
\end{equation}
with probability $\mathrm{tr}[\sum_{ij}\Pi_i\mathcal{U}_\mathcal{N}(\Pi_i \rho \Pi_j)\Pi_j]$.
Thus, SCV maps the noisy channel $\mathcal{U}_{\mathcal{N}}$ to a trace non-increasing map $(\Theta^{\mathrm{det}}_S (\mathcal{U}_{\mathcal{N}}))(\cdot) = \sum_{ij}\Pi_i\mathcal{U}_\mathcal{N}(\Pi_i \cdot \Pi_j)\Pi_j$.

Through this channel transformation, the effects of symmetry-breaking noise are removed.
To illustrate, consider a noise channel expressed as $\mathcal{N}(\cdot) = (1-p)\cdot + p \sum_j N_j \cdot N_j^\dag$ with $\sum_i \Pi_i N_j \Pi_i = 0$, indicating that with probability $p$, a quantum state in the $i$-th eigenspace transitions outside the eigenspace.
In this scenario, $\Theta^{\mathrm{det}}_S (\mathcal{U}_{\mathcal{N}}) = (1-p) \mathcal{U}$, meaning the noise effect is removed with probability $1-p$.
More generally, we establish the following theorem, whose proof is provided in Appendix~\ref{sec_select}.
\begin{thm}
    \label{thm_1}
    Let $\Theta^{\mathrm{det}}_S$ denote a quantum supermap as defined in Fig.~\ref{fig_SCV}.
    If the noise channel $\mathcal{N}(\cdot) = \sum_j N_j\cdot N_j^\dag$ satisfies $\sum_i \Pi_i N_j \Pi_i \propto I_n$, then
    \begin{equation}
        (\Theta^{\mathrm{det}}_S(\mathcal{U}_\mathcal{N}))(\cdot) = \sum_{ij}\Pi_i\mathcal{U}_\mathcal{N}(\Pi_i \cdot \Pi_j)\Pi_j \propto \mathcal{U}(\cdot).
    \end{equation}
    Consequently, when a state $\rho$ is input to the circuit in Fig.~\ref{fig_SCV}, the ideal output state $\mathcal{U}(\rho)$ is obtained with probability $\mathrm{tr}[\sum_{ij}\Pi_i\mathcal{U}_\mathcal{N}(\Pi_i \rho \Pi_j)\Pi_j]$.
\end{thm}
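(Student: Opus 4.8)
The plan is to take as given the closed form of the error-detected map that the main text has already derived, namely $(\Theta^{\mathrm{det}}_S(\mathcal{U}_\mathcal{N}))(\cdot) = \sum_{ij}\Pi_i\mathcal{U}_\mathcal{N}(\Pi_i \cdot \Pi_j)\Pi_j$, so that the first equality in the statement is established and the entire task reduces to proving the proportionality to $\mathcal{U}$. I would substitute $\mathcal{U}_\mathcal{N} = \mathcal{N}\circ\mathcal{U}$ and invoke the defining symmetry $[U,S]=0$, which forces the ideal unitary $U$ to commute with every spectral projector $\Pi_k$ of $S$. This commutation is exactly what underlies the block decomposition in Eq.~\eqref{eq_unitary_decomposition}, and it is the one structural input that makes the whole argument work; the rest is bookkeeping.

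Concretely, the first step is to push the projectors through the ideal channel: since $U\Pi_i = \Pi_i U$, one has $\mathcal{U}(\Pi_i\rho\Pi_j) = U\Pi_i\rho\Pi_j U^\dag = \Pi_i\,\mathcal{U}(\rho)\,\Pi_j$, so the ideal evolution preserves the block labels. The second step is to insert the Kraus form $\mathcal{N}(\cdot)=\sum_k N_k\cdot N_k^\dag$ and sandwich with the outer projectors, yielding
\begin{equation}
\Pi_i\,\mathcal{U}_\mathcal{N}(\Pi_i\rho\Pi_j)\,\Pi_j = \sum_k (\Pi_i N_k\Pi_i)\,\mathcal{U}(\rho)\,(\Pi_j N_k^\dag\Pi_j).
\end{equation}
The third step is to interchange the sums over $(i,j)$ and $k$, so that the projectors collect into $\sum_i\Pi_i N_k\Pi_i$ on the left and $\sum_j\Pi_j N_k^\dag\Pi_j=\bigl(\sum_j\Pi_j N_k\Pi_j\bigr)^\dag$ on the right. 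The fourth step applies the hypothesis $\sum_i\Pi_i N_k\Pi_i = c_k I_n$ for scalars $c_k$: the right factor becomes $c_k^* I_n$, each term collapses to $|c_k|^2\,\mathcal{U}(\rho)$, and summing over $k$ gives
\begin{equation}
\bigl(\Theta^{\mathrm{det}}_S(\mathcal{U}_\mathcal{N})\bigr)(\rho) = \Bigl(\sum_k |c_k|^2\Bigr)\,\mathcal{U}(\rho),
\end{equation}
which is the claimed proportionality with a real nonnegative constant. The probability statement then follows at once: normalizing the output cancels the constant, and because $\mathcal{U}$ is trace preserving the success probability of the post-selection equals $\sum_k|c_k|^2$.

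The calculation is routine linear algebra, so I expect no deep obstacle; the one genuinely delicate point is the adjoint bookkeeping in the third and fourth steps. Because the hypothesis is stated as $\propto I_n$ rather than $=I_n$, the scalars $c_k$ may differ across $k$, and one must track $c_k$ versus its conjugate $c_k^*$ carefully to land on $|c_k|^2$ and hence on a legitimate (real, nonnegative) proportionality factor. A secondary subtlety worth a remark is that a Kraus decomposition is not unique, so the hypothesis should be read as a condition on the chosen representation of $\mathcal{N}$; noting that the argument goes through verbatim for any fixed decomposition closes that gap.
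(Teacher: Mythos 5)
Your proposal is correct and follows essentially the same route as the paper's own proof: take the closed form $\sum_{ij}\Pi_i\mathcal{U}_\mathcal{N}(\Pi_i\cdot\Pi_j)\Pi_j$ as given, use $[U,\Pi_i]=0$ to pull the projectors through $\mathcal{U}$, collect the sums into $\sum_i\Pi_i N_k\Pi_i$ on each side, and apply the hypothesis. Your extra bookkeeping of the scalars $c_k$ versus $c_k^*$ (yielding the explicit nonnegative constant $\sum_k|c_k|^2$) and the remark on Kraus non-uniqueness are refinements the paper leaves implicit, but they do not change the argument.
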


SCV can be seen as a generalization of symmetry verification to the channel level.
Indeed, we can detect the effects of symmetry-breaking noise, regardless of the input state. Furthermore, SCV can detect symmetry-breaking noise occurring during the execution of $\mathcal{U}$.
For simplicity, let us assume that $\mathcal{U} = \mathcal{U}_2\circ\mathcal{U}_1$ is implemented by performing two consecutive unitary channels $\mathcal{U}_1(\cdot)=U_1\cdot U_1^\dag$ and $\mathcal{U}_2(\cdot)=U_2\cdot U_2^\dag$, with the noise channel $\mathcal{N}$ occurring between $\mathcal{U}_1$ and $\mathcal{U}_2$ as $\mathcal{U}_2 \circ\mathcal{N}\circ\mathcal{U}_1$.
Even in this case, if $U_1$ and $U_2$ both commute with the operator $S$ and the noise $\mathcal{N}$ satisfies the assumption in Theorem~\ref{thm_1}, the SCV gadget can completely eliminate the noise as $\Theta^{\mathrm{det}}_S(\mathcal{U}_2 \circ\mathcal{N}\circ\mathcal{U}_1) \propto \mathcal{U}_2\circ\mathcal{U}_1 = \mathcal{U}$.
This demonstrates that SCV can purify arbitrary symmetry-breaking noise occurring during circuit execution.

If the ideal unitary $U$ has multiple commuting operators, we can concatenate the SCV gadget for different symmetries to further purify the noisy channel.
For example, in addition to $S$, if the unitary operator $U$ also commutes with another operator $S'$, which has the set of projectors $\{\Pi_i'\}_{i=1}^{M'}$ onto its eigenspaces, the ideal channel satisfies $\mathcal{U}(\cdot) = \sum_{ijkl}\Pi_i'\Pi_j\mathcal{U}(\Pi_j\Pi_i' \cdot \Pi_k'\Pi_l)\Pi_l\Pi_k'$. Therefore, by concatenating the SCV gadget for the symmetric operators $S$ and $S'$, we obtain a purified channel
\begin{equation}
    (\Theta_{S'}^{\mathrm{det}}*\Theta_S^{\mathrm{det}}(\mathcal{U}_{\mathcal{N}}))(\cdot) = \sum_{ijkl}\Pi_i'\Pi_j\mathcal{U}_{\mathcal{N}}(\Pi_j\Pi_i' \cdot \Pi_k'\Pi_l)\Pi_l\Pi_k'.
\end{equation}
Here, the composition of supermaps is defined as $\Theta_{S'}^{\mathrm{det}}*\Theta_S^{\mathrm{det}}(\cdot) = \Theta_{S'}^{\mathrm{det}}(\Theta_S^{\mathrm{det}}(\cdot))$.
This channel is proportional to the ideal channel $\mathcal{U}$ when the noise channel $\mathcal{N}(\cdot) = \sum_k N_k \cdot N_k^\dag$ satisfies $\sum_{ij} \Pi_j'\Pi_i N_k \Pi_i\Pi_j' \propto I_n$.
This condition is less restrictive than the assumption in Theorem~\ref{thm_1}: it is satisfied when $N_k$ breaks the symmetry with respect to either $S$ or $S'$.
Therefore, we can purify a broader range of noise by concatenating SCV gadgets for different symmetries.

While SCV can effectively remove the effect of symmetry-breaking errors, the SCV gadget itself may introduce additional errors.
These errors can be categorized into two types: gate errors and idling errors on the ancilla.
Gate errors are errors in the unitaries $U_{\mathrm{E}}$ and $U_{\mathrm{D}}$, where $U_{\mathrm{E}}$ and $U_{\mathrm{D}}$ represent the unitaries before and after the noisy channel $\mathcal{U}_{\mathcal{N}}$, respectively. 
These unitaries are composed of controlled-$V_S$ gates and a quantum Fourier transform, which may necessitate many quantum gates and introduce additional noise into the circuit.
However, the hardware overhead for implementing unitaries $U_{\mathrm{E}}$ and $U_{\mathrm{D}}$ does not depend on the depth of the target unitary $U$.
Therefore, when the unitary $U$ is sufficiently deep and contains significantly more quantum gates than $U_{\mathrm{E}}$ and $U_{\mathrm{D}}$, the dominant gate errors in the circuit will be the noise $\mathcal{N}$ following the unitary $U$.
In this case, the gate errors in the SCV gadget become negligible, and the overall effect of noise can be reliably reduced.

Another source of error in the SCV gadget is idling errors in the ancilla, which occur in the idle ancilla qubits between unitaries $U_{\mathrm{E}}$ and $U_{\mathrm{D}}$. 
While Pauli-$Z$ errors in the idling qubits can be detected by the measurements, Pauli-$X$ errors become problematic: such errors are undetectable and propagate to the system qubits. 
Still, such idling errors can be detected by introducing flag qubits~\cite{chao2018quantum, chamberland2018flag, chao2020flag}. 
We discuss the use of flag qubits to reduce the effect of the idling errors in Appendix~\ref{sec_flag}. 
Furthermore, even without the flag qubits, we can use noise-robust qubits as an ancilla to reduce the effect of the idling errors. 
For example, in the early fault-tolerant regime, ancilla qubits can be prepared as logical qubits with a larger code distance, so that the errors in the system qubits become dominant and idling errors become negligible.
While this increases the overall qubit count, logical idling errors can also be mitigated without any additional space overhead by post-selecting low-confidence syndrome measurement outcomes~\cite{smith2024mitigating}.
In the case where these methods are not applicable, we can use the hardware-efficient variant of SCV we introduce in Sec.~\ref{sec_VSCV}.

In addition to noise introduced by the SCV gadget, SCV also suffers from sampling overhead.
Namely, the number of circuit runs required to obtain the desired results with target accuracy is increased, because we obtain the purified channel probabilistically.
For example, when the noise can be expressed as $\mathcal{N}(\cdot)  = (1-p)\cdot + p \sum_j N_j \cdot N_j^\dag$ with $\sum_i \Pi_i N_j \Pi_i = 0$, we obtain the noiseless purified channel with probability $1-p$. Therefore, we need approximately $(1-p)^{-1}\sim(1+p)$ times as many circuit runs as in the noiseless case.
Furthermore, when there are $L$ noisy channels $\mathcal{U}_{\mathcal{N}}$ and SCV is applied to each channel, the sampling overhead scales as $O((1+p)^L)$.
Notably, this is quadratically smaller than the lower bound on sampling overhead for quantum error mitigation, which scales as $O((1+p)^{2L})$~\cite{tsubouchi2023universal,takagi2023universal}.
This improvement mainly arises from the use of post-selection based on the symmetric structure~\cite{cai2023quantum}.
We note that a similar advantage also holds for other post-selection-based techniques, such as symmetry verification~\cite{bonet2018low, mcardle2019error}.

\subsection{Application to Pauli symmetry}
\label{sec_SCV_Pauli}
As an application of SCV, let us consider a quantum channel under Pauli symmetry, which is present in various quantum algorithms including quantum many-body simulation, quantum machine learning, and quantum information processing.
Here, we first derive a general condition to completely eliminate the effect of noise, and then later provide numerical demonstrations for practical quantum circuits.

Let us define the set of $n$-qubit Pauli operators as $\mathcal{P}_n = \{I, X, Y, Z\}^{\otimes n}$.
Here, $I$ represents the identity operator on a single-qubit system, which we distinguish from the $n$-qubit identity operator $I_n$.
For simplicity, let us analyze the case where the noise channel $\mathcal{N}$ is Pauli noise represented as
\begin{equation}
    \label{eq_Pauli_noise}
    \mathcal{N}(\cdot) = \sum_{i=0}^K p_i P_i\cdot P_i,
\end{equation}
where $p_i \neq 0$ denotes the probability of an $n$-qubit Pauli operator $P_i\in\mathcal{P}_n$ affecting the target unitary, $P_0 = I_n$ represents the $n$-qubit identity operator, and $P_i\neq P_j$ if $i\neq j$.
Although our discussion assumes Pauli noise, it can be generalized to arbitrary noise by decomposing the Kraus operators into the sum of Pauli operators.

First, we consider the simplest case where the ideal unitary operator $U$ commutes with a single nontrivial Pauli operator $Q\in\mathcal{P}_n \backslash \{I_n\}$, i.e., $[U, Q] = 0$.
The Pauli operator $Q$ has two eigenspaces with projectors $\Pi_0 = \frac{1}{2}(I+Q)$ and $\Pi_1 = \frac{1}{2}(I-Q)$.
In this scenario, the operator $V_Q$ defined in Eq.~\eqref{eq_V_S} is the Pauli operator $Q$ itself.
Therefore, implementing SCV requires only a single-qubit ancilla and two controlled-$Q$ gates.
By performing SCV, we can detect Pauli error $P_j\in\mathcal{P}_n \backslash \{I_n\}$ satisfying $\sum_i \Pi_i P_j \Pi_i \propto I_n$, i.e., Pauli error $P_j$ that anti-commute with $Q$.

Next, we consider the case where the ideal unitary operator $U$ commutes with multiple nontrivial Pauli operators.
As we have discussed, SCV gadgets can be concatenated for different commuting Pauli operators.
To formalize this analysis, we define the set of Pauli operators with non-zero coefficients in the Pauli expansion of $U$ as
\begin{equation}
    \label{eq_Q_U'}
    \mathcal{Q}_{U}' = \{Q\in\mathcal{P}_n ~|~ \mathrm{tr}[UQ] \neq 0\},
\end{equation}
the set of Pauli operators generated by $\mathcal{Q}_{U}'$ up to phase as
\begin{equation}
    \label{eq_Q_U}
    \mathcal{Q}_{U} = \{Q\in\mathcal{P}_n ~|~ \exists Q_{i_1},\ldots Q_{i_j}\in\mathcal{Q}_{U}', Q\propto Q_{i_1} \cdots Q_{i_j}\},
\end{equation}
and the set of Pauli operators that commutes with $U$ as
\begin{equation}
    \mathcal{Q}_{U}^{\mathrm{com}} = \{Q\in\mathcal{P}_n ~|~ [U, Q] = 0\}.
\end{equation}

\begin{figure}[t]
    \begin{center}
        \includegraphics[width=0.99\linewidth]{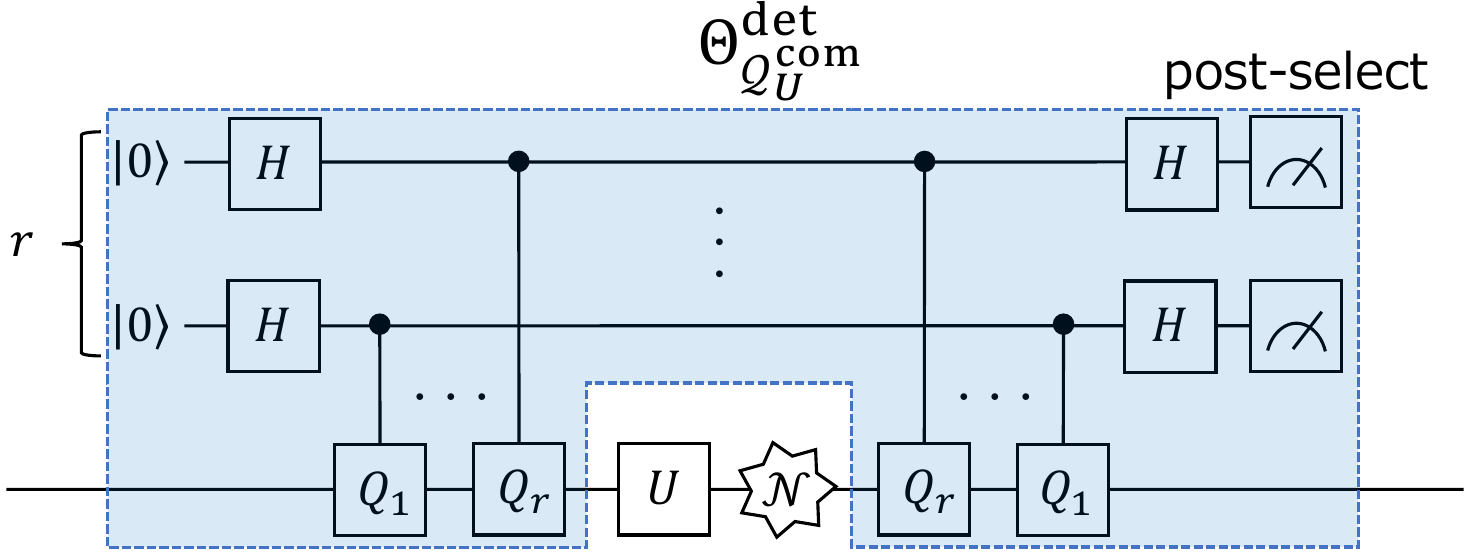}
        \caption{Circuit structure of SCV for the set of Pauli operators $\mathcal{Q}_U^{\mathrm{com}}$.
        We concatenate the SCV gadget for the generator $Q_1,\ldots,Q_r\in\mathcal{Q}_U^{\mathrm{com}}$ as $\Theta^{\mathrm{det}}_{\mathcal{Q}^{\mathrm{com}}_{U}} = \Theta^{\mathrm{det}}_{Q_1} * \cdots * \Theta^{\mathrm{det}}_{Q_r}$.
        }
        \label{fig_SCV_Pauli}
    \end{center}
\end{figure}

Let $Q_1, \ldots, Q_r \in \mathcal{Q}_{U}^{\mathrm{com}}$ be generators of the set of Pauli operators $\mathcal{Q}_{U}^{\mathrm{com}}$ up to phase.
As shown in Fig.~\ref{fig_SCV_Pauli}, we concatenate the SCV gadget for $Q_1, \ldots, Q_r$ and define the SCV gadget for the set of commuting Pauli operators $\mathcal{Q}_{U}^{\mathrm{com}}$ as
\begin{equation}
    \Theta^{\mathrm{det}}_{\mathcal{Q}^{\mathrm{com}}_{U}} = \Theta^{\mathrm{det}}_{Q_1} * \cdots * \Theta^{\mathrm{det}}_{Q_r}.
\end{equation}
Since $\Theta^{\mathrm{det}}_{Q_i}$ can detect Pauli errors that anti-commute with $Q_i$, $\Theta^{\mathrm{det}}_{\mathcal{Q}^{\mathrm{com}}_{U}}$ can detect Pauli errors that anti-commute with at least one of the generators.
In other words, the SCV gadget for $\mathcal{Q}_{U}^{\mathrm{com}}$ can detect a Pauli error $P_j$ if there exists a Pauli operator $Q \in \mathcal{Q}_{U}^{\mathrm{com}}$ that anti-commutes with $P_j$.
Thus, the performance of SCV for Pauli symmetry is characterized by the following corollary:
\begin{cor}
    \label{cor_1}
    For Pauli noise channel $\mathcal{N}(\cdot) = \sum_{i=0}^K p_i P_i\cdot P_i,$ defined in Eq.~\eqref{eq_Pauli_noise}, let $\mathcal{P}^{\mathrm{{det}}}_{\mathcal{N}} = \{P_i\}_{i=1}^K$ denote the set of Pauli operators that yield non-zero error probability.
    Then, the SCV gadget $\Theta^{\mathrm{det}}_{\mathcal{Q}^{\mathrm{com}}_{U}}$ for the set of commuting Pauli operators $\mathcal{Q}_U^{\mathrm{com}}$, as defined in Fig.~\ref{fig_SCV_Pauli}, can completely detect the Pauli noise $\mathcal{N}$, i.e., $\Theta^{\mathrm{det}}_{\mathcal{Q}^{\mathrm{com}}_{U}}(\mathcal{U}_\mathcal{N}) = p_0\mathcal{U}$, if and only if all the Pauli error $P_i\in\mathcal{P}^{\mathrm{{det}}}_{\mathcal{N}}$ anti-commutes with at least one of the $Q_j \in \mathcal{Q}^{\mathrm{com}}_{U}$:
    \begin{equation}
        \label{eq_cor_1_1}
        \forall P_i \in \mathcal{P}^{\mathrm{{det}}}_{\mathcal{N}}, \exists Q_j \in \mathcal{Q}^{\mathrm{com}}_{U}, [P_i, Q_j] \neq 0,
    \end{equation}
    or equivalently,
    \begin{equation}
        \label{eq_cor_1_2}
        \mathcal{P}^{\mathrm{{det}}}_{\mathcal{N}} \cap \mathcal{Q}_{U} = \emptyset.
    \end{equation}
\end{cor}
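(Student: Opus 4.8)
The plan is to analyze the action of a single Pauli SCV gadget on a Pauli channel, iterate this over the generators of $\mathcal{Q}_U^{\mathrm{com}}$, and finally translate the resulting commutation condition into the stated set-theoretic form using the symplectic structure of the Pauli group. The detection half rests almost entirely on Theorem~\ref{thm_1}; the structural equivalence of the two conditions is where the real work lies.

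First I would prove a single-gadget lemma. For a nontrivial Pauli $Q$ with eigenprojectors $\Pi_0 = \frac12(I_n+Q)$ and $\Pi_1 = \frac12(I_n-Q)$, any Pauli $P$ satisfies $\Pi_i P\Pi_i = P\Pi_i$ when $[P,Q]=0$ and $\Pi_i P\Pi_i = 0$ when $\{P,Q\}=0$, since $P$ either preserves or swaps the two eigenspaces of $Q$. Feeding a channel $\Lambda = \mathcal{N}'\circ\mathcal{U}$, with $\mathcal{N}'(\cdot)=\sum_k q_k P_k\cdot P_k$ Pauli and $[U,Q]=0$, into the supermap $(\Theta^{\mathrm{det}}_Q(\Lambda))(\cdot)=\sum_{ij}\Pi_i\Lambda(\Pi_i\cdot\Pi_j)\Pi_j$ of Fig.~\ref{fig_SCV}, and using $\mathcal{U}(\Pi_i\cdot\Pi_j)=\Pi_i\mathcal{U}(\cdot)\Pi_j$, gives $\Theta^{\mathrm{det}}_Q(\mathcal{N}'\circ\mathcal{U})=\mathcal{N}''\circ\mathcal{U}$, where $\mathcal{N}''$ retains exactly those $P_k$ commuting with $Q$ and kills every anti-commuting term via $\Pi_i P_k\Pi_i=0$. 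Crucially the output is again of the form (Pauli channel) $\circ\,\mathcal{U}$, so the lemma composes.

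Next I would induct over the generators $Q_1,\ldots,Q_r$. Since each $Q_j\in\mathcal{Q}_U^{\mathrm{com}}$ commutes with $U$, the lemma applies at every stage, and $\Theta^{\mathrm{det}}_{\mathcal{Q}^{\mathrm{com}}_U}(\mathcal{U}_{\mathcal{N}}) = \mathcal{N}_r\circ\mathcal{U}$, where $\mathcal{N}_r$ keeps precisely those $P_i$ commuting with every generator. Because commutation with a product of Paulis is the mod-$2$ sum of the individual commutation bits, commuting with all $Q_j$ is equivalent to commuting with all of $\mathcal{Q}_U^{\mathrm{com}}$, and the result is independent of the concatenation order. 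Complete detection means $\mathcal{N}_r\circ\mathcal{U}=p_0\mathcal{U}$; as $\mathcal{U}$ is invertible this forces $\mathcal{N}_r=p_0\mathcal{I}$, and since the Pauli-conjugation superoperators $\{P_i(\cdot)P_i\}$ are linearly independent and every $p_i\neq 0$, this holds iff no $P_i\in\mathcal{P}^{\mathrm{det}}_{\mathcal{N}}$ survives, i.e. iff each $P_i$ anti-commutes with at least one element of $\mathcal{Q}_U^{\mathrm{com}}$. This establishes Eq.~\eqref{eq_cor_1_1}. To reach Eq.~\eqref{eq_cor_1_2}, I would identify $\mathcal{Q}_U^{\mathrm{com}}$ as the symplectic complement of $\mathcal{Q}_U$: writing $U=\sum_{P\in\mathcal{Q}_U'}c_P P$ with $c_P\neq 0$ and the $P$ linearly independent, the relation $QUQ^\dagger=\sum_P(-1)^{s_P}c_P P$ shows $[U,Q]=0$ iff $Q$ commutes with every $P\in\mathcal{Q}_U'$, hence with all of $\mathcal{Q}_U$, so $\mathcal{Q}_U^{\mathrm{com}}=(\mathcal{Q}_U)^\perp$ in $\mathbb{F}_2^{2n}$ (Paulis modulo phase under the non-degenerate commutation form). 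The double-complement property then gives $(\mathcal{Q}_U^{\mathrm{com}})^\perp=\mathcal{Q}_U$, so $P_i$ commutes with all of $\mathcal{Q}_U^{\mathrm{com}}$ iff $P_i\in\mathcal{Q}_U$; negating and quantifying over $P_i\in\mathcal{P}^{\mathrm{det}}_{\mathcal{N}}$ converts Eq.~\eqref{eq_cor_1_1} into $\mathcal{P}^{\mathrm{det}}_{\mathcal{N}}\cap\mathcal{Q}_U=\emptyset$.

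I expect the main obstacle to be this final step: cleanly establishing the mutual-symplectic-complement relation $\mathcal{Q}_U^{\mathrm{com}}=(\mathcal{Q}_U)^\perp$ from linear independence of the Paulis in the expansion of $U$, and invoking non-degeneracy of the commutation form to obtain the double complement. This is the structural fact that bridges the commutation picture of Eq.~\eqref{eq_cor_1_1} and the membership picture of Eq.~\eqref{eq_cor_1_2}; by contrast, the detection analysis is a fairly direct iteration of Theorem~\ref{thm_1} combined with linear independence of Pauli-conjugation superoperators.
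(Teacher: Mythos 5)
Your proof is correct, and the detection half is essentially the paper's argument: both reduce the concatenated gadget's action to projector sandwiches $\Pi_i P_j \Pi_i$ that annihilate anti-commuting Pauli terms and preserve commuting ones, then use $p_i \neq 0$ together with linear independence of the Pauli-conjugation maps to force every surviving error term to vanish (you make the linear-independence step explicit, which the paper leaves implicit). Where you genuinely diverge is the bridge from Eq.~\eqref{eq_cor_1_1} to Eq.~\eqref{eq_cor_1_2}, i.e., the statement $P \notin \mathcal{Q}_U \Leftrightarrow \exists Q \in \mathcal{Q}_U^{\mathrm{com}},\, [P,Q] \neq 0$. The paper proves this by invoking a Clifford canonical form from Ref.~\cite{mitsuhashi2023clifford}, writing $\mathcal{Q}_U = W^\dag(\{I,X,Y,Z\}^{\otimes n_1} \otimes \{I,Z\}^{\otimes n_2} \otimes \{I\}^{\otimes n_3})W$ and reading off the commutant explicitly in that frame; you instead work in $\mathbb{F}_2^{2n}$, identify $\mathcal{Q}_U^{\mathrm{com}} = (\mathcal{Q}_U)^\perp$ via uniqueness of the Pauli expansion of $U$, and conclude $(\mathcal{Q}_U^{\mathrm{com}})^\perp = \mathcal{Q}_U$ from non-degeneracy of the symplectic form (dimension counting plus $V \subseteq (V^\perp)^\perp$). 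Your route is more self-contained---it needs only that $\mathcal{Q}_U$ is a subspace (which holds since it is defined as the group generated by $\mathcal{Q}_U'$ modulo phase) and standard facts about non-degenerate bilinear forms---whereas the paper's route outsources the bicommutant property to an external structure theorem that additionally yields the explicit normal form of both sets. Both are complete; nothing in your argument fails.
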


The proof of Corollary~\ref{cor_1}, which follows from Theorem~\ref{thm_1}, is provided in Appendix~\ref{sec_select}.
Corollary~\ref{cor_1} states that the SCV gadget for the set of Pauli operators $\mathcal{Q}_U$ can detect arbitrary Pauli errors that are not part of the ideal unitary $U$.
In this case, a noiseless channel $\mathcal{U}$ is obtained with probability $p_0$.
However, the SCV gadget cannot detect Pauli errors contained within $U$ because such errors share the same Pauli symmetry as $U$ and commute with all elements in $\mathcal{Q}_{U}^{\mathrm{com}}$.
We note that, while Corollary~\ref{cor_1} assumes Pauli noise, it can be generalized to arbitrary noise channels $\mathcal{N}(\cdot) = \sum_j N_j \cdot N_j^\dag$ by decomposing each Kraus operator $N_j$ into a sum of Pauli operators.
Specifically, the noisy channel can be purified as $\Theta^{\mathrm{det}}_{\mathcal{Q}^{\mathrm{com}}_{U}}(\mathcal{U}_\mathcal{N}) \propto \mathcal{U}$, provided that the set $\mathcal{P}^{\mathrm{{det}}}_{\mathcal{N}} = \{P\in\mathcal{P}_n-\{I_n\}~|~\exists i, \mathrm{tr}[N_i P]\neq 0\}$ satisfies the conditions in Corollary~\ref{cor_1}.
This is because if a set of Pauli operators $\mathcal{P}^{\mathrm{det}}_{\mathcal{N}}$ satisfies the conditions in Corollary~\ref{cor_1}, and hence also satisfies the condition in Theorem~\ref{thm_1}, then any linear combination of these operators also satisfies the condition in Theorem~\ref{thm_1}.

To demonstrate the power of SCV for Pauli symmetry, we consider the time-independent and time-dependent Hamiltonian dynamics simulation.
While state-level error-detection protocols~\cite{bonet2018low, mcardle2019error} can only be used when the entire circuit possesses the same symmetric structure, SCV can be applied even when the input state does not commute with the same Pauli operators or when the symmetric structure of the Hamiltonian changes over time.
Let us first discuss the time-independent Hamiltonian simulation circuit $U = e^{i\theta H}$, where $H$ represents the Hamiltonian of interest.
Note that, while Trotter or post-Trotter circuits are implemented in practice, here we neglect the effect of imperfect exponentiation for the sake of demonstrating SCV.
Under perfect exponentiation, we generally have $\mathcal{Q}_{U} = \mathcal{Q}_{H}$ and $\mathcal{Q}_{U}^{\mathrm{com}} = \mathcal{Q}_{H}^{\mathrm{com}}$, and SCV can be applied regardless of the input state.

One interesting time-independent Hamiltonian to examine is the one-dimensional (1D) Heisenberg model $H = \sum_{i}(X_{i}X_{i+1} + Y_{i}Y_{i+1} + Z_{i}Z_{i+1})$, whose dynamics simulation using quantum computer is an active topic of research~\cite{peng2022quantum, rosenberg2024dynamics}.
For this Hamiltonian, we have $\mathcal{Q}_{H}^{\mathrm{com}} = \{I^{\otimes n}, X^{\otimes n}, Y^{\otimes n}, Z^{\otimes n}\}$.
Since any weight-1 Pauli operator anti-commutes with $X^{\otimes n}$ or $Z^{\otimes n}$, we can detect single-qubit Pauli errors occurring during the time evolution.
Meanwhile, two-qubit Pauli errors such as $XX$ commute with all the elements in $\mathcal{Q}_{H}^{\mathrm{com}}$, so such errors cannot be detected.
To numerically verify the effectiveness of SCV, we consider an $n=8$ qubit 1D Heisenberg model with an open boundary condition and apply the time evolution $U = e^{i\theta H}$ to the input state $\ket{01}^{\otimes n/2}$.
We assume that the noisy circuit is represented as $\bigcirc_{l=1}^L \mathcal{N}_l \circ \mathcal{U}_l$, where $\mathcal{N}_l$ is a local depolarizing noise with an error rate $p_{\mathrm{err}}$ and $\mathcal{U}_l(\cdot) = e^{i(\theta/L) H} \cdot e^{-i(\theta/L) H}$ represents a single time step.
We also assume that the SCV gadget $\Theta^{\mathrm{det}}_{\mathcal{Q}^{\mathrm{com}}_{U}}$ is subject to noise.
Following the assumptions of the early fault-tolerant regime, we assume that $U_{\mathrm{E}}$ and $U_{\mathrm{D}}$ in the SCV gadget are affected by local depolarizing noise with an error rate $p_{\mathrm{err}}/100$.
The difference in error rates arises because approximately 100 Clifford+T gates are required to synthesize Pauli rotation gates with an accuracy of $10^{-6}$~\cite{ross2016optimal}.
We consider the case where logical qubits with a higher code distance are used for the ancilla, such that idling errors on the ancilla become negligible.
We set $\theta = 2\pi$ and $L = 100$.

\begin{figure}[t]
    \begin{center}
        \includegraphics[width=0.85\linewidth]{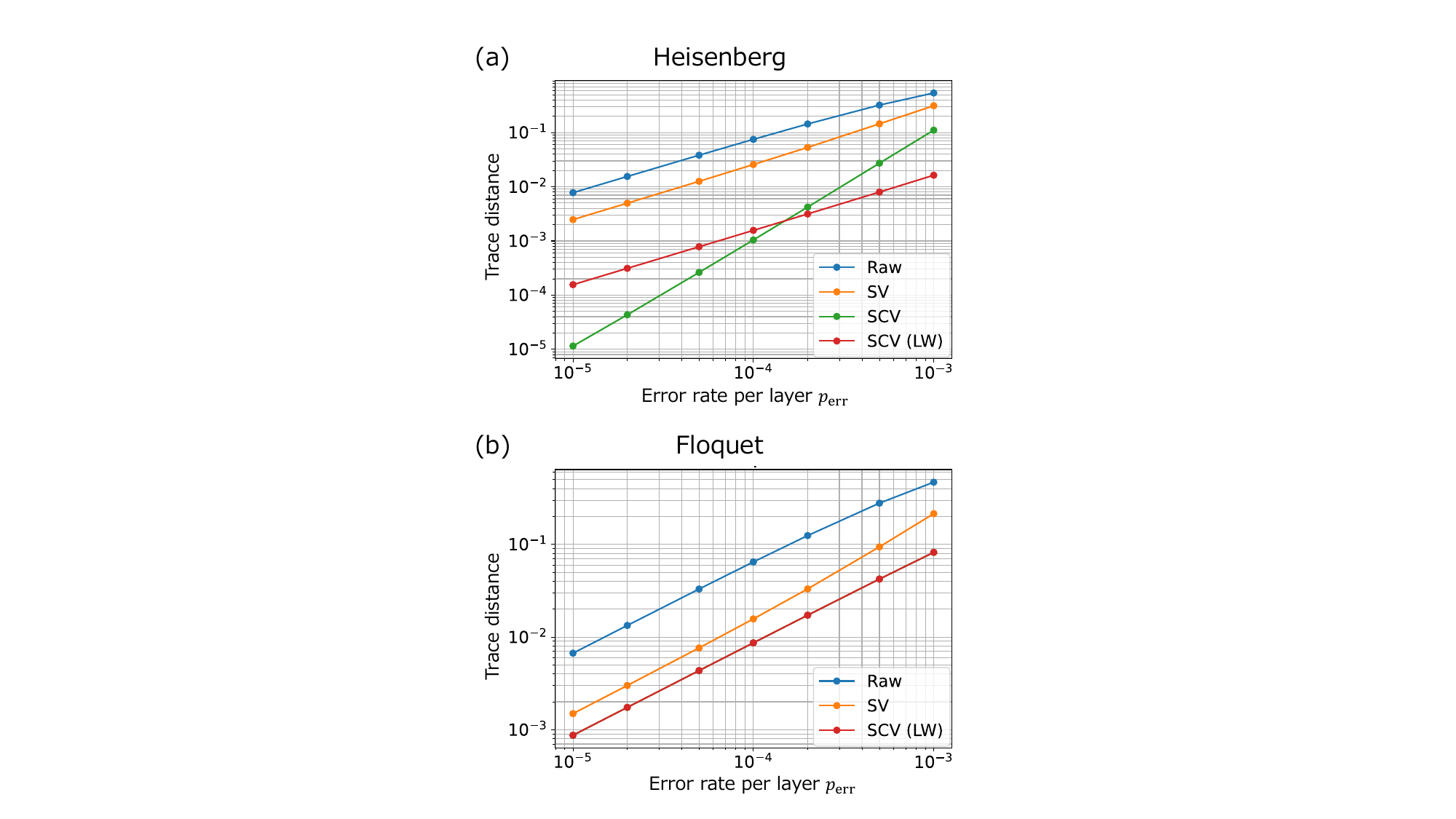}
        \caption{Performance of SCV in detecting errors on Hamiltonian simulation circuits for (a) 1D Heisenberg model and (b) Floquet dynamics. The ``Raw" line represents the trace distance between the noisy quantum state and the ideal quantum state, while the ``SV" line represents the trace distance for the quantum state purified by symmetry verification. The ``SCV" line represents the trace distance where the SCV is applied to the entire noisy circuit $\bigcirc_{l=1}^L \mathcal{N}_l \circ \mathcal{U}_l$, and the ``SCV (LW)" line represents the layer-wise application of SCV, meaning that SCV is applied to every noisy layer [$\mathcal{N}_l \circ \mathcal{U}_l$ for (a) Heisenberg model and $\mathcal{N}_l \circ \mathcal{U}_{X,l}$ and $\mathcal{N}_l \circ \mathcal{U}_{Z,l}$ for (b) Floquet dynamics].
        }
        \label{fig_numerics_Pauli}
    \end{center}
\end{figure}

Our results are illustrated in Fig.~\ref{fig_numerics_Pauli} (a).
Since the input state $\ket{01}^{\otimes n/2}$ is an eigenstate of the operator $Z^{\otimes n}$, we can apply symmetry verification for $Z^{\otimes n}$ and detect arbitrary single-qubit bit-flip errors.
However, it is not an eigenstate of the operator $X^{\otimes n}$, so symmetry verification cannot be applied for $X^{\otimes n}$, even though the Hamiltonian commutes with $X^{\otimes n}$.
Meanwhile, SCV does not depend on the input state, allowing us to detect arbitrary errors that anti-commute with $X^{\otimes n}$ or $Z^{\otimes n}$.
Thus, while symmetry verification reduces the error by a constant factor, SCV applied to the entire noisy circuit $\bigcirc_{l=1}^L \mathcal{N}_l \circ \mathcal{U}_l$ can detect arbitrary single-qubit errors and quadratically reduce the error rate.
Further error reduction can be achieved by applying SCV at every layer $\mathcal{N}_l \circ \mathcal{U}_l$.
In this case, more errors can be detected during the time evolution, but the impact of errors on the SCV gadget itself becomes dominant.
As a result, the scaling of the error worsens compared to applying SCV to the entire circuit, but in high-error-rate regimes, layer-wise SCV application performs better.

Next, as an example of time-dependent Hamiltonian dynamics simulation, we consider the Floquet system~\cite{oka2019floquet}, whose Hamiltonian changes time-periodically.
Floquet systems exhibit a rich set of physical phenomena and thus have been widely simulated on quantum computers~\cite{mi2022time, frey2022realization, ippoliti2021many}.
We consider a noiseless circuit represented as $\bigcirc_{l=1}^L \mathcal{U}_{X,l} \circ \mathcal{U}_{Z,l}$, where $\mathcal{U}_{X,l}(\cdot) = e^{i\theta H_X}\cdot e^{-i\theta H_X}$ with $H_X = \sum_i X_i$ and $\mathcal{U}_{Z,l}(\cdot) = e^{i\theta H_Z}\cdot e^{-i\theta H_Z}$ with $H_Z = \sum_i Z_{i}Z_{i+1}$.
This represents that Hamiltonian changes time-periodically between $H_X$ and $H_Z$.
When we take $\ket{+}^{\otimes n}$ as an input to this circuit, the output state is an eigenstate of the operator $X^{\otimes n}$, allowing us to perform symmetry verification for $X^{\otimes n}$.
Meanwhile, each layer $e^{i\theta H_X}$ and $e^{i\theta H_Z}$ commutes with additional Pauli operators: $\mathcal{Q}_{H_X}^{\mathrm{com}} = \{I,X\}^{\otimes n}$ and $\mathcal{Q}_{H_Z}^{\mathrm{com}} = \{I,Z\}^{\otimes n} \cdot \{I_n,X^{\otimes n}\}$.
Therefore, by applying SCV to each layer, we can detect more errors compared to simply applying symmetry verification to the output state.

To evaluate the effectiveness of SCV in this scenario, we consider a noisy circuit $\bigcirc_{l=1}^L \mathcal{N}_l\circ\mathcal{U}_{X,l} \circ \mathcal{N}_l \circ \mathcal{U}_{Z,l}$ with $n=4$, where $\mathcal{N}_l$ is a local depolarizing noise with an error rate $p_{\mathrm{err}}$.
We use the same noise model for the SCV gadget $\Theta^{\mathrm{det}}_{\mathcal{Q}^{\mathrm{com}}_{U}}$ as in the case for the Heisenberg model.
Specifically, we assume that $U_{\mathrm{E}}$ and $U_{\mathrm{D}}$ composing the SCV gadget are also affected by local depolarizing noise with an error rate $p_{\mathrm{err}}/100$, and neglect idling errors on the ancilla.
We set $\theta = 2\pi/100$ and $L = 100$.
As shown in Fig.~\ref{fig_numerics_Pauli} (b), even with noise in the SCV gadget, SCV can significantly reduce the effect of noise in the circuit.

In our numerical simulations, we have neglected idling errors in the SCV gadget by assuming that the ancilla qubits have a higher code distance than the system qubits.  
Even if this assumption does not hold—i.e., idling errors on the ancilla qubits cannot be neglected—we can remove such errors using flag qubits~\cite{chao2018quantum, chamberland2018flag, chao2020flag}.  
We further discuss this point in Appendix~\ref{sec_flag}.
Moreover, we have assumed that the error rates of the unitaries $U_{\mathrm{E}}$ and $U_{\mathrm{D}}$ in the SCV gadget are one hundred times as small as that of the noisy unitary channel $\mathcal{U}_{\mathcal{N}}$ to be purified, which is a relevant assumption in the early fault-tolerant regime.  
In Appendix~\ref{sec_NISQ}, we demonstrate that our method remains effective even when the error rate in the SCV gadget is comparable to that of $\mathcal{U}_{\mathcal{N}}$, as is often the case in setups of near-term quantum hardware.

Let us comment on the differences between our analysis of SCV under Pauli symmetry and previous works based on similar ideas.
In Refs.~\cite{xiong2023circuit, debroy2020extended, gonzales2023quantum, van2023single, das2024purification}, the authors introduce a purification gadget similar to the one shown in Fig.~\ref{fig_SCV_Pauli} to purify noisy near-Clifford circuits.
The primary focus of these studies is on the case where $U$ consists of Clifford gates.
They proposed that, in this setting, arbitrary errors can be detected by sandwiching $U$ with a controlled Pauli gate and its conjugate under $U$.
Furthermore, they analyzed the case where $U$ consists of Clifford gates supplemented with Pauli-$Z$ rotation gates.
In this scenario, they claimed that noisy Pauli-$Z$ rotation gates can be purified by sandwiching them with controlled-$Z$ gates, which is precisely what SCV for the Pauli-$Z$ operator is doing.
One of the main contributions of our work is to extend these analyses beyond simple Pauli rotation gates to general non-Clifford channels $U$.
As stated in Corollary~\ref{cor_1} and discussed in Sec.~\ref{sec_Clifford_purification}, we establish the ultimate performance limits of such purification protocols.

\subsection{Application to particle number conservation symmetry}
\label{sec_SCV_U1}
Symmetry arising from particle number conservation plays a crucial role in many quantum systems, especially in the context of fermionic systems.
The particle number conservation symmetry is defined by the commutation of the Hamiltonian or unitary operator $U$ with the total particle number operator $S = \sum_i a_i^\dag a_i$, where $a_i^\dag$ and $a_i$ represent the creation and annihilation operators, respectively.
In this case, the eigenspaces of $S$ correspond to states with fixed particle numbers, and the symmetry can be utilized to detect errors that break particle number conservation.

When the fermion system is transformed into a qubit system using the Jordan-Wigner transformation, the particle number operator can be expressed as $S = \sum_i Z_i$.
The operator $S$ has $n+1$ eigenspaces with projectors defined as $\Pi_i = \sum_{\abs{\vb*{x}}=i}\ketbra{\vb*{x}}$.
Here, $\vb*{x}\in\{0,1\}^n$ represents a bit string of length $n$ and $\abs{\vb*{x}}$ represents the number of 1 in the bit string $\vb*{x}$.
In this case, the operator $V_S$ defined in Eq.~\eqref{eq_V_S} can be represented as
\begin{equation}
    V_S = \sum_{\vb*{x}\in\{0,1\}^n} \exp\left[\frac{2\pi i}{2^m}\abs{\vb*{x}}\right] \ketbra{\vb*{x}} = R\left(\frac{2\pi}{2^m}\right)^{\otimes n},
\end{equation}
where $R(\theta) = \ketbra{0} + e^{i\theta}\ketbra{1}$.
Therefore, implementing SCV for particle number conservation symmetry requires $m = \lceil \log_2(n+1) \rceil$ qubit ancilla, $2n\lceil\log_2(n+1) \rceil$ controlled-$R(\theta)$ gates, and quantum Fourier transform.
Although noise in these operations may degrade the performance of SCV, it is effective when the target unitary $U$ is composed of more quantum gates and contains more noise.
Moreover, for a large system size $n$, the number of ancilla scales as $O(\log n)$, so the effect of errors on the ancilla is expected to be much smaller than the system qubits.

\begin{figure}[t]
    \begin{center}
        \includegraphics[width=0.85\linewidth]{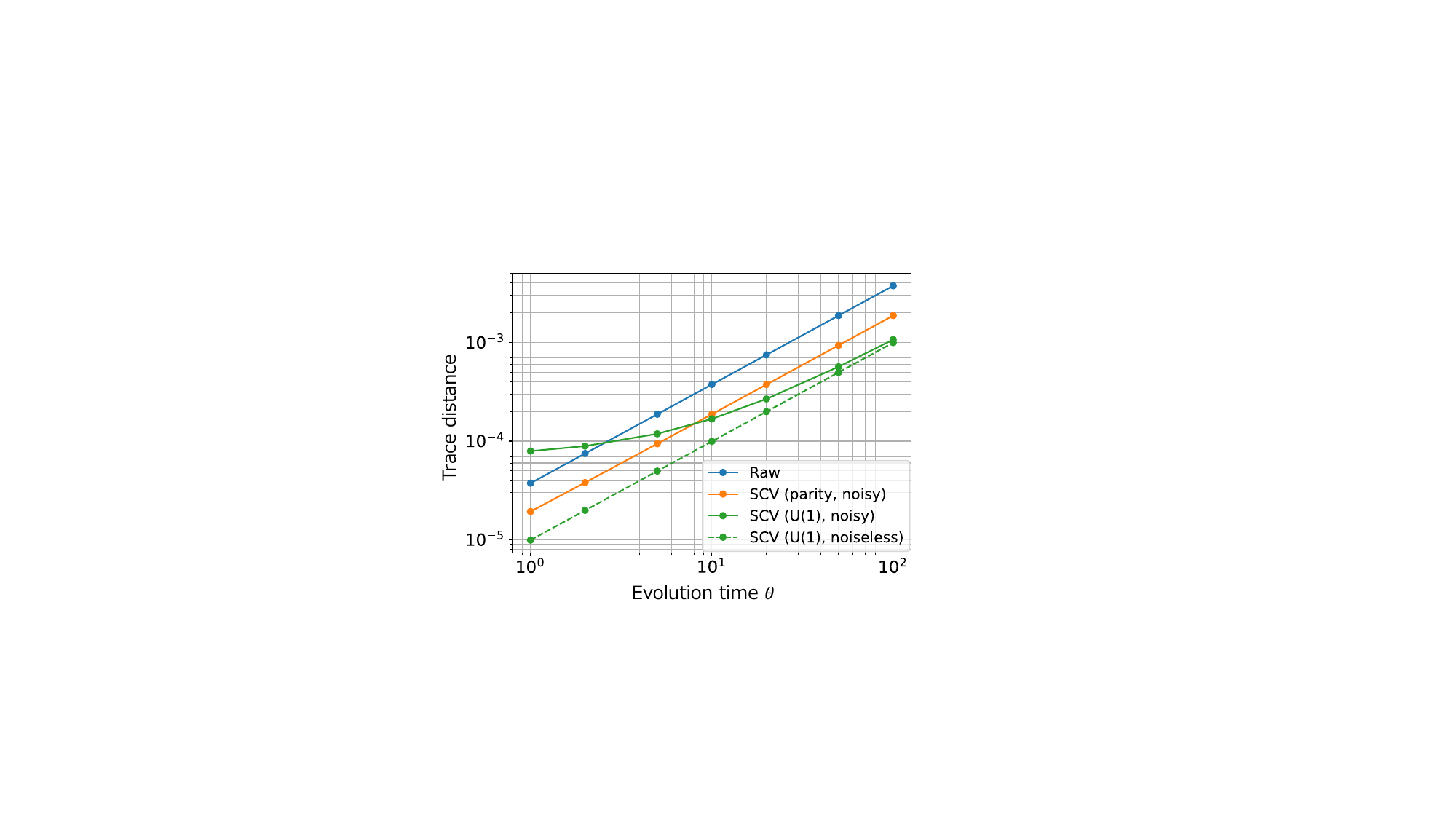}
        \caption{Performance of SCV in detecting errors on a Hamiltonian simulation circuit for the H$_2$ molecule. The ``Raw" line represents the trace distance between the noisy quantum state and the ideal quantum state. The ``SCV (parity, noisy)" line represents the result using SCV for the parity operator $\prod_i Z_i$, where we assume that the SCV gadget is affected by noise. The ``SCV (U(1), noiseless)" and ``SCV (U(1), noisy)" lines represent the trace distance for the quantum state with the noise detected by SCV for the particle number operator $\sum_i Z_i$, where ``noiseless" and ``noisy" represent the results using noiseless and noisy SCV gadgets.
        }
        \label{fig_numerics_U1}
    \end{center}
\end{figure}

To evaluate the effectiveness of SCV for particle number conservation symmetry, we consider a time evolution circuit for the H$_2$ molecule with a Haar-random input state.
Since the input state is not an eigenstate of $S = \sum_i Z_i$, symmetry verification cannot be applied, but SCV can be used to purify the circuit.
In Fig.~\ref{fig_numerics_U1}, we present the trace distance between the ideal noiseless state and the error-detected state for varying evolution time $\theta$.
Here, we assume that the noisy circuit we implement is $\mathcal{N} \circ \mathcal{U}$, where $\mathcal{U}(\cdot) = e^{i\theta H} \cdot e^{-i\theta H}$ represents the time evolution for the H$_2$ molecule (STO-3G basis set) with $n=4$ and an atomic distance of 1.1~\AA, and $\mathcal{N}$ is a global depolarizing noise with error rate $n\theta p_{\mathrm{err}}$ with $p_{\mathrm{err}} = 10^{-5}$.
We further assume that the SCV gadget $\Theta^{\mathrm{det}}_{S}$ is subject to noise.
We assume that $U_{\mathrm{E}}$ and $U_{\mathrm{D}}$ in $\Theta^{\mathrm{det}}_{S}$ is affected by a local depolarizing noise with error rate $p_{\mathrm{err}}$.
Note that $U_{\mathrm{E}}$ and $U_{\mathrm{D}}$ are not Clifford unitaries in this case, and thus we set their error rates to be comparable to that of the noisy unitary channel $\mathcal{N} \circ \mathcal{U}$.
We consider the case where logical qubits with a higher code distance are used for the ancilla, such that idling errors on the ancilla become negligible.
The impact of idling errors is discussed in Appendix~\ref{sec_flag}.

From Fig.~\ref{fig_numerics_U1}, we observe that SCV can reduce the effect of errors when there is no noise in the SCV gadget.
Meanwhile, when the SCV gadget is affected by noise, applying SCV can degrade the performance.
Nevertheless, as the evolution time $\theta$ increases and the circuit accumulates more noise, errors in the target unitary $\mathcal{U}$ become the dominant error source, and the impact of errors on the SCV gadget diminishes.
This demonstrates that when the target unitary contains a sufficient number of quantum gates relative to the SCV gadget, SCV can effectively detect noise.

Furthermore, when $U$ commutes with the particle number operator $\sum_i Z_i$, it also commutes with the parity operator $\prod_i Z_i$.
Therefore, we can apply SCV to the Pauli operator $\prod_i Z_i$ to detect errors, whose corresponding gadget is significantly easier to implement than that for $\sum_i Z_i$.
In Fig.~\ref{fig_numerics_U1}, we also show the performance of SCV for the parity operator $\prod_i Z_i$, where $U_{\mathrm{E}}$ and $U_{\mathrm{D}}$ comprising the SCV gadget are affected by local depolarizing noise with an error rate $p_{\mathrm{err}}/100$.
Note that the difference in error rates arises because approximately 100 Clifford+T gates are required to synthesize Pauli rotation gates with high accuracy~\cite{ross2016optimal}.
Since the SCV gadget consists of simple Clifford gates that do not require gate synthesis, SCV for the parity operator $\prod_i Z_i$ can outperform SCV for the particle number operator $S = \sum_i Z_i$ for small $\theta$.
Nevertheless, when $\theta$ increases and the noise on the target unitary becomes dominant, SCV for the particle number operator $\sum_i Z_i$ can detect twice as much noise as SCV for the parity operator $\prod_i Z_i$.

\section{Virtual symmetric channel verification}
\label{sec_VSCV}

\subsection{General framework}
While SCV is effective for detecting symmetry-breaking noise in noisy channels, it requires sophisticated quantum operations such as controlled-$V_S$ and quantum Fourier transform.
In the near-term and early fault-tolerant regimes, implementing these operations may introduce prohibitive noise into the quantum circuits and degrade the performance of SCV.
Thus, a simpler and more hardware-friendly noise-reduction technique is necessary for practical implementation.

To reduce the hardware overhead for noise purification, we introduce a virtual supermap $\Theta^{\mathrm{vir}}_{ijkl}$ depicted in Fig.~\ref{fig_VSCV}.
This supermap consists of a single-qubit ancilla and two controlled Pauli gates, making it significantly easier to implement compared to the original SCV gadget.
When we take the average of the measurement results from the ancilla qubit, $\Theta^{\mathrm{vir}}_{ijkl}$ maps the noisy channel $\mathcal{U}_{\mathcal{N}}$ to a virtual map represented as
\begin{equation}
    \label{eq_virtual_supermap}
    (\Theta^{\mathrm{vir}}_{ijkl}(\mathcal{U}_{\mathcal{N}}))(\cdot) = \frac{1}{2}(P_k\mathcal{U}_{\mathcal{N}}(P_i\cdot P_j)P_l + P_l\mathcal{U}_{\mathcal{N}}(P_j\cdot P_i)P_k).
\end{equation}
It should be noted that $\Theta^{\mathrm{vir}}_{ijkl}(\mathcal{U}_{\mathcal{N}})$ is a virtual map, which can only be used for the task of expectation value estimation.

\begin{figure}[t]
    \begin{center}
        \includegraphics[width=0.99\linewidth]{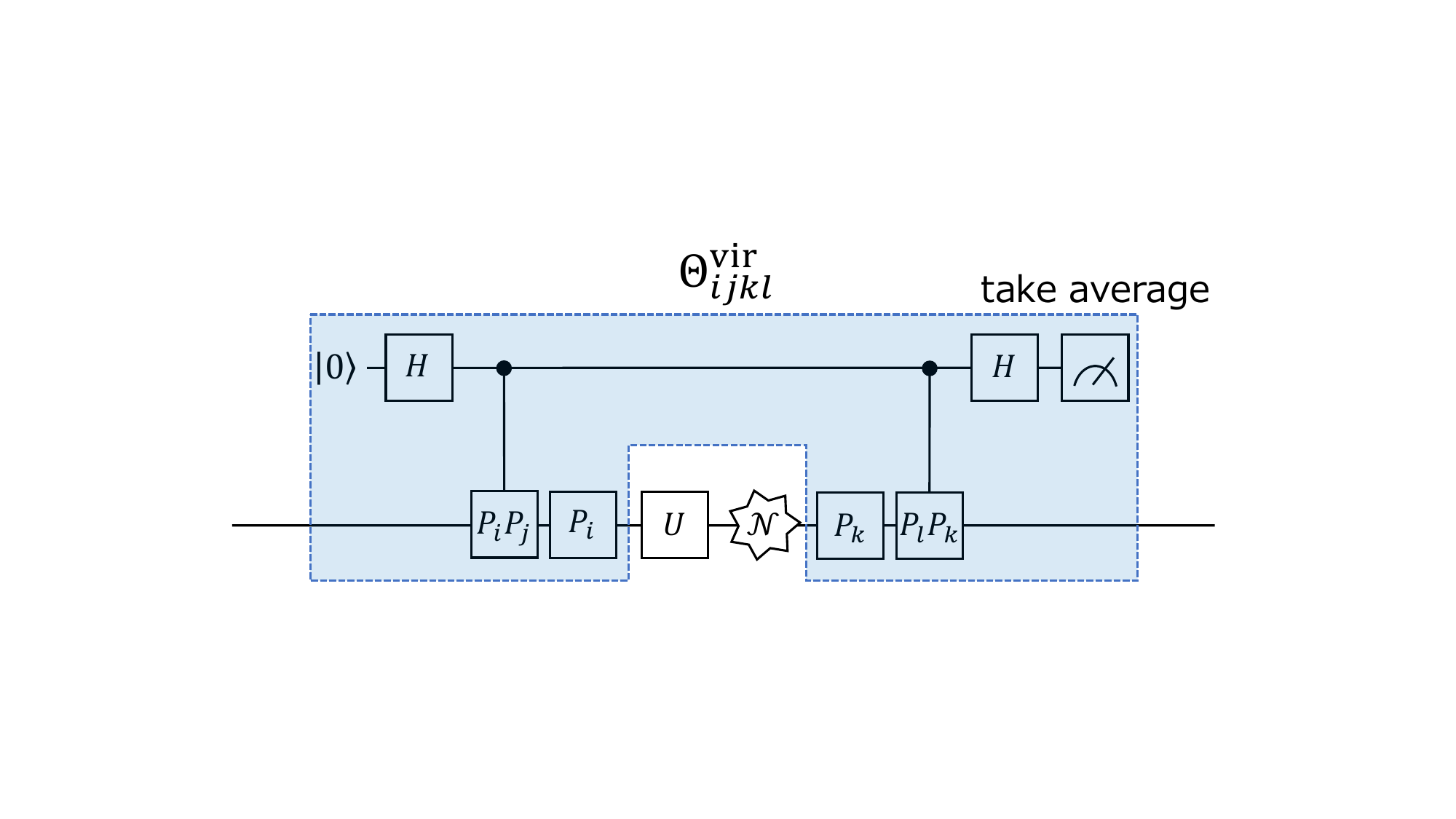}
        \caption{A gadget to implement virtual symmetric channel verification (virtual SCV). We prepare a single-qubit ancilla initialized in $\ket{0}$, apply a Hadamard gate to the ancilla, apply controlled-$P_jP_i$ gate, apply $P_j$, the noisy channel $\mathcal{U}_{\mathcal{N}} = \mathcal{U}\circ\mathcal{N}$, and $P_k$ to system qubits, apply controlled-$P_lP_k$ gate, apply a Hadamard gate to the ancilla, and measure the ancilla with the Pauli-$Z$ operator. The average of the measurement results on the ancilla qubit is taken. This operation transforms the noisy channel $\mathcal{U}_\mathcal{N}(\cdot)$ into $(\Theta^{\mathrm{vir}}_{ijkl}(\mathcal{U}_{\mathcal{N}}))(\cdot) = \frac{1}{2}(P_k\mathcal{U}_{\mathcal{N}}(P_i\cdot P_j)P_l + P_l\mathcal{U}_{\mathcal{N}}(P_j\cdot P_i)P_k)$.
        }
        \label{fig_VSCV}
    \end{center}
\end{figure}

Using the virtual supermap $\Theta^{\mathrm{vir}}_{ijkl}$, SCV can be virtually implemented with much lower hardware overhead.
For simplicity, suppose we aim to estimate the expectation value of an observable $O$ for a quantum state $\mathcal{U}(\rho)$ using SCV.
In this case, the error-detected expectation value is given by
\begin{equation}
    \label{eq_SCV_expval}
    \frac{\mathrm{tr}[(\Theta^{\mathrm{det}}_S(\mathcal{U}_\mathcal{N}))(\rho)O]}{\mathrm{tr}[(\Theta^{\mathrm{det}}_S(\mathcal{U}_\mathcal{N}))(\rho)]}.
\end{equation}
The key idea to reduce the hardware overhead in calculating Eq.~\eqref{eq_SCV_expval} is to decompose the projector $\Pi_i$ using Pauli operators as $\Pi_i = \sum_j \pi_{ij} P_j$.
By similarly decomposing the supermap $\Theta^{\mathrm{det}}_S$, we have
\begin{equation}
    \label{eq_VSCV}
    \Theta^{\mathrm{det}}_S = \sum_{ijkl} \alpha_{ijkl}\Theta^{\mathrm{vir}}_{ijkl} = \gamma \sum_{ijkl} \frac{\abs{\alpha_{ijkl}}}{\gamma}\mathrm{sgn}(\alpha_{ijkl})\Theta^{\mathrm{vir}}_{ijkl},
\end{equation}
where $\alpha_{ijkl} = \sum_{ab}\pi_{ai}\pi_{bj}\pi_{ak}\pi_{bl}$, $\gamma = \sum_{ijkl}\abs{\alpha_{ijkl}}$, and $\mathrm{sgn}(\alpha_{ijkl})$ represents the sign of $\alpha_{ijkl}$.
Thus, Eq.~\eqref{eq_SCV_expval} can also be decomposed as
\begin{equation}
    \label{eq_VSCV_expval}
    \frac{\sum_{ijkl} \frac{\abs{\alpha_{ijkl}}}{\gamma}\mathrm{sgn}(\alpha_{ijkl})\mathrm{tr}[(\Theta^{\mathrm{vir}}_{ijkl}(\mathcal{U}_\mathcal{N}))(\rho)O]}{\sum_{ijkl} \frac{\abs{\alpha_{ijkl}}}{\gamma}\mathrm{sgn}(\alpha_{ijkl})\mathrm{tr}[(\Theta^{\mathrm{vir}}_{ijkl}(\mathcal{U}_\mathcal{N}))(\rho)]}.
\end{equation}
This value can be obtained using the circuit in Fig.~\ref{fig_VSCV} with the following protocol:
\begin{enumerate}
    \item For $s = 1,\cdots, N$, repeat the following steps:
    \begin{enumerate}
        \item Sample $i,j,k,l$ with probability $\abs{\alpha_{ijkl}}/\gamma$.
        \item Run the circuit illustrated in Fig.~\ref{fig_VSCV} and measure the system with the observable $O$.
        \item Record the $Z$ measurement of the ancilla as $a_s$ and the product of $a_s$ and the $O$ measurement as $b_s$.
        \item Replace $a_s$ and $b_s$ with $\mathrm{sgn}(\alpha_{ijkl})a_s$ and $\mathrm{sgn}(\alpha_{ijkl})b_s$.
    \end{enumerate}
    \item Calculate $a = \frac{1}{N}\sum_s a_s$ and $b = \frac{1}{N}\sum_s b_s$.
    \item Output $b/a$.
\end{enumerate}
We refer to this simplified SCV protocol for expectation value estimation as \textit{virtual symmetric channel verification} (virtual SCV).

Virtual SCV generalizes virtual symmetry verification methods~\cite{mcclean2020decoding, cai2021quantum, endo2022quantum, tsubouchi2023virtual} to the channel level.
Indeed, virtual SCV can be implemented even when the input state $\rho$ lacks the same symmetry as the unitary $U$.
Moreover, while our discussion focused on implementing virtual SCV just before measurement, virtual SCV can also be applied to parts of the quantum circuit.
Thus, similar to SCV, virtual SCV can be implemented when different quantum operations in the circuit exhibit different symmetric structures.

Compared to SCV, virtual SCV offers significantly lower hardware overhead.
While SCV requires $\lceil \log_2M \rceil$ qubits for the ancilla, virtual SCV requires only a single-qubit ancilla.
Moreover, SCV involves sophisticated quantum operations such as controlled-$V_S$ and quantum Fourier transform, whereas virtual SCV replaces these with controlled-Pauli operations, which are Clifford operations.
In the early fault-tolerant regime, Clifford gates are expected to be easier to implement than non-Clifford gates, as they do not require gate synthesis using Clifford+T gates~\cite{fowler2011constructing, bocharov2012resource, ross2016optimal}.
Thus, virtual SCV is much easier to implement than SCV, relying only on Clifford gates.

Another advantage of virtual SCV is its robustness to noise in the ancilla qubit, even without employing flag qubits.
For example, let us assume that depolarizing noise with error probability $p_{\mathrm{idle}}$ affects the ancilla qubit at any point during computation.
While such noise may propagate to the system qubits, its effect is removed when averaging the measurement results.
Under the depolarizing noise on the ancilla, the virtual channel $\Theta^{\mathrm{vir}}_{ijkl}(\mathcal{U}_{\mathcal{N}})$ is affected as
\begin{equation}
    \label{eq_noisy_virtualchannel}
    \qty(1-\frac{4}{3}p_{\mathrm{idle}})\frac{1}{2}(P_k\mathcal{U}_{\mathcal{N}}(P_i\cdot P_j)P_l + P_l\mathcal{U}_{\mathcal{N}}(P_j\cdot P_i)P_k),
\end{equation}
indicating that the noise on the ancilla adds only a constant factor of $1-\frac{4}{3}p_{\mathrm{idle}}$ to the original channel.
Since this constant factor is removed during state normalization in Eq.~\eqref{eq_VSCV_expval}, the effect of ancilla noise is completely removed.
Furthermore, idling errors on the ancilla qubit during execution of the noisy channel $\mathcal{U}_{\mathcal{N}}$ can be twirled with single-qubit Clifford unitaries into depolarizing noise~\cite{emerson2005scalable, dankert2009exact}.
Thus, the effect of arbitrary noise beyond depolarizing noise can be removed.

Despite these advantages, virtual SCV has certain drawbacks.
The first is that virtual SCV is a virtual protocol and can only be used for expectation value estimation.
However, since expectation value estimation can be used as a subroutine for various tasks such as phase estimation~\cite{suzuki2022quantum, dutkiewicz2024error}, virtual SCV remains applicable to many early fault-tolerant algorithms.
Another drawback is its sampling overhead.
The leading factor of the sampling overhead for virtual protocols is the square of the inverse of the denominator in Eq.~\eqref{eq_VSCV_expval}~\cite{cai2021quantum}.
This overhead scales as $\gamma^2\mathrm{tr}[(\Theta^{\mathrm{det}}_S(\mathcal{U}_\mathcal{N}))(\rho)]^{-2} \sim \gamma^2(1+p)^2$ when symmetry-breaking errors occur with probability $p$.
For $L$ noisy channels $\mathcal{U}_{\mathcal{N}}$, applying virtual SCV to each channel results in sampling overhead scaling as $O((1+p)^{2L})$, which is quadratically larger than SCV.
If the ancilla qubit is affected by depolarizing noise with probability $p_{\mathrm{idle}}$, the sampling overhead further increases by a factor of $\left(1-\frac{4}{3}p_{\mathrm{idle}}\right)^2$, according to Eq.~\eqref{eq_noisy_virtualchannel}.
Moreover, when applying virtual SCV for the particle number operator $S = \sum_i Z_i$, $\gamma$ grows exponentially with the number of system qubits $n$ and becomes intractable.
Nevertheless, for many practical use cases stated below, $\gamma$ is equal to 1, so it does not contribute to an increase in the sampling overhead.

\subsection{Application to Pauli symmetry}
\label{sec_VSCV_Pauli}
As discussed in Sec.~\ref{sec_SCV_Pauli}, the SCV gadget for a single Pauli operator $Q$ requires only a single-qubit ancilla and two controlled-$Q$ gates, whose hardware overhead is comparable with virtual SCV.
Therefore, when implementing SCV for a single Pauli operator, there is no advantage in employing virtual SCV (except for the fact that virtual SCV is robust against the noise in the ancilla).
However, the SCV gadget designed for a set of commuting Pauli operators $\mathcal{Q}_{U}^{\mathrm{com}}$, which concatenates multiple SCV gadgets for different Pauli operators, requires $r$ ancilla qubits and $2r$ controlled-Pauli gates as depicted in Fig.~\ref{fig_SCV_Pauli}.
In this case, virtual SCV can significantly reduce hardware overhead.

The SCV gadget $\Theta^{\mathrm{det}}_{\mathcal{Q}^{\mathrm{com}}_{U}}$ maps the noisy channel $\mathcal{U}_\mathcal{N}$ to
\begin{equation}
    \label{eq_SCV_Pauli_for_VSCV}
    \begin{aligned}
        &\sum_{i_1j_1\cdots i_rj_r}  \Pi_{i_1\cdots i_r}\mathcal{U}_{\mathcal{N}}(\Pi_{i_1\cdots i_r}^\dag  \cdot  \Pi_{j_1\cdots j_r})\Pi_{j_1\cdots j_r}^\dag\\
        = &\frac{1}{\abs{\mathcal{Q}_U^{\mathrm{com}}}^2}\sum_{Q_i,Q_j\in\mathcal{Q}_U^{\mathrm{com}}}Q_i\mathcal{U}_{\mathcal{N}}(Q_i\cdot Q_j)Q_j,
    \end{aligned}
\end{equation}
where $\Pi_{i_1\cdots i_r} = \Pi_{i_1}^1 \cdots \Pi_{i_r}^r$ with $\Pi_{i_k}^k = \frac{1}{2}(I+(-1)^{i_k}Q_k)$ being the projector onto the eigenspace of $Q_k$ corresponding to the eigenvalue $(-1)^{i_k}$.
This means that the SCV gadget $\Theta^{\mathrm{det}}_{\mathcal{Q}^{\mathrm{com}}_{U}}$ can be represented as
\begin{equation}
    \Theta^{\mathrm{det}}_{\mathcal{Q}^{\mathrm{com}}_{U}} = \frac{1}{\abs{\mathcal{Q}_U^{\mathrm{com}}}^2}\sum_{Q_i,Q_j\in\mathcal{Q}_U^{\mathrm{com}}}\Theta^{\mathrm{vir}}_{ijij},
\end{equation}
where $\Theta^{\mathrm{vir}}_{ijij}$ is the virtual supermap depicted in Fig.~\ref{fig_VSCV} with $P_i = Q_i, P_j =Q_j, P_k=Q_i, P_l=Q_j$.
Therefore, we have $\gamma = 1$, and virtual SCV can be implemented by randomly and uniformly sampling $Q_i,Q_j\in\mathcal{Q}_U^{\mathrm{com}}$.
This approach reduces the number of ancilla qubits from $r$ to $1$ and the number of controlled-Pauli gates from $2r$ to $2$, and the ancilla becomes robust against its noise even without the use of flag qubits.

\subsection{Mitigating errors on idling qubits with virtual symmetric channel purification}
\label{sec_VSCV_idle}
One of the powerful applications of virtual SCV is the mitigation of errors on idling qubits.
Let us consider the case where the ideal channel is the identity channel $\mathcal{I}$ for a single qubit ($n=1$), which is affected by an idling error $\mathcal{N}^{\circ L}$, where $\mathcal{N}$ represents the noise per time step and $L$ represents the idle time.
Since arbitrary error anti-commutes with at least one element of $\mathcal{Q}_I^{\mathrm{com}} = \qty{I, X, Y, Z}$, we can apply the SCV gadget $\Theta^{\mathrm{det}}_{\mathcal{Q}^{\mathrm{com}}_{I}}$ to completely detect the idling error $\mathcal{N}^{\circ L}$.
However, as depicted in Fig.~\ref{fig_idle} (a), the ancilla qubit used in the SCV gadget $\Theta^{\mathrm{det}}_{\mathcal{Q}^{\mathrm{com}}_{I}}$ is also affected by the same idling error $\mathcal{N}^{\circ L}$, which degrades the performance of SCV.

To overcome this limitation, we employ virtual SCV to mitigate the idling error $\mathcal{N}^{\circ L}$.
While the same noise $\mathcal{N}$ affects the ancilla qubit, virtual SCV is robust against depolarizing noise in the ancilla.
Therefore, when we assume that $\mathcal{N}$ is depolarizing noise, almost all the noise in the virtual SCV gadget does not impact the error-mitigation performance (see Fig.~\ref{fig_idle} (b)).
The only remaining errors are those on the system qubit before and after the controlled Pauli gate.
Figure~\ref{fig_idle} (c) illustrates the performance difference in (virtually) detecting errors between SCV and virtual SCV.
We observe that while the error reduction is limited in SCV, virtual SCV significantly reduces the impact of idling errors.

\begin{figure}[t]
    \begin{center}
        \includegraphics[width=0.99\linewidth]{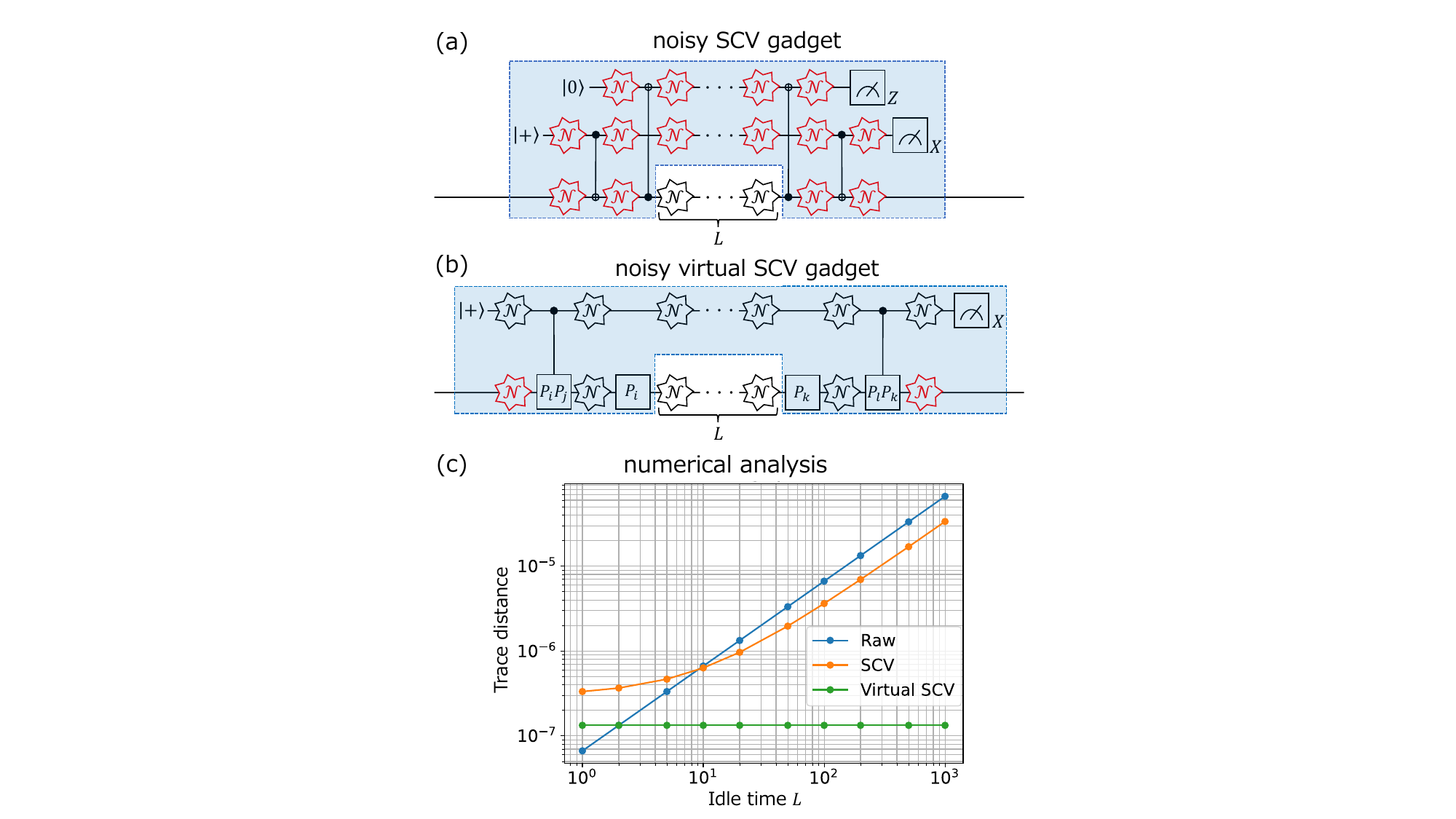}
        \caption{Schematic illustration of (virtually) detecting idling errors $\mathcal{N}^{\circ L}$ using (a) a noisy SCV gadget and (b) a noisy virtual SCV gadget. The noise channel depicted in red degrades the performance of (virtual) error detection, while the noise channel depicted in black does not affect performance. Here, we assume that the noise channel $\mathcal{N}$ is depolarizing noise. Panel (c) shows the trace distance between the noiseless and noisy quantum states affected by the idling error $\mathcal{N}^{\circ L}$, where the noisy SCV and virtual SCV gadgets in panels (a) and (b) are used to suppress the effect of noise. The error rate of the depolarizing noise is set to $p_{\mathrm{err}} = 10^{-7}$.}
        \label{fig_idle}
    \end{center}
\end{figure}

As a practical application of virtual SCV in mitigating idling errors, we consider its use in suppressing errors in idling system qubits during \texttt{SELECT} operations.
\texttt{SELECT} operations are widely used in various quantum algorithms, such as phase estimation and Hamiltonian simulation based on qubitization~\cite{low2017optimal, babbush2018encoding, low2019hamiltonian, yoshioka2024hunting}. 
In \texttt{SELECT} operations, quantum gates are applied sparsely to system qubits, leading to long idle times.
Thus, virtual SCV can be employed to mitigate such idling errors in system qubits.

As a demonstration, we consider the \texttt{SELECT} operation of the two-dimensional (2D) Fermi-Hubbard model on a square lattice with linear system size of $M$ and system qubit count $n = 2M^2$, as discussed in Ref.~\cite{babbush2018encoding}.
We provide the Hamiltonian and the circuit structure of the corresponding \texttt{SELECT} operation in Appendix~\ref{sec_select}.
We assume that all quantum gates and idling qubits are subject to local depolarizing noise with an error rate of $p_{\mathrm{err}}$, and calculate the total error rate of the \texttt{SELECT} operation with and without virtual SCV in Fig.~\ref{fig_select}.
We set $p_{\mathrm{err}} = 10^{-7}$, which is 100 times as small as the value used in the numerical simulations of Sec.~\ref{sec_SCV}.  
This choice is motivated by the fact that the error rate in Sec.~\ref{sec_SCV} corresponds to that of synthesized gates, for which we assumed that roughly 100 basic gates are required to implement each.
Without virtual SCV, \texttt{SELECT} operation has $O(n)$ depth, so each system qubit experiences $O(n)$ idling errors, and the total error rate scales as $O(n^2)$.
In contrast, when virtual SCV is applied, these errors are mitigated, and only $O(1)$ errors per system qubit remain.
Consequently, the dominant error source becomes the $O(n\log n)$ errors on the $O(\log n)$ ancilla qubits required for qubitization.
Therefore, as illustrated in Fig.~\ref{fig_select}, applying virtual SCV to idling system qubits enables a quadratic reduction in the total error rate.

Notably, this reduction is not limited to this specific model but can be generalized to arbitrary Fermi-Hubbard models with short-range 2-body interactions under the Jordan-Wigner transformation and local fermionic encodings, as well as lattice spin models with short-range interactions.
For instance, consider a local Hamiltonian with $L=\mathrm{poly}(n)$ terms, where each qubit participates in $O(L/n)$ terms.  
In this case, the \texttt{SELECT} operation requires $O(\log n)$ ancilla qubits and has depth $O(L)$.
Meanwhile, each qubit participates in only $O(L/n)$ gates, meaning that many system qubits remain idle for extended periods, leading to numerous idling errors.
By mitigating these idling errors using virtual SCV, the total error rate can be reduced from $O(Ln)$ to $O(L\log n)$,  where the dominant contribution comes from errors in the $O(\log n)$ ancilla qubits.

Additionally, the same ancilla qubit can be used to perform virtual SCV on all $n$ system qubits.
Thus, only a single additional ancilla qubit is required to implement virtual SCV, regardless of the number of system qubits $n$.
We discuss this point in detail in Appendix~\ref{sec_select}.

\begin{figure}[t]
    \begin{center}
        \includegraphics[width=0.85\linewidth]{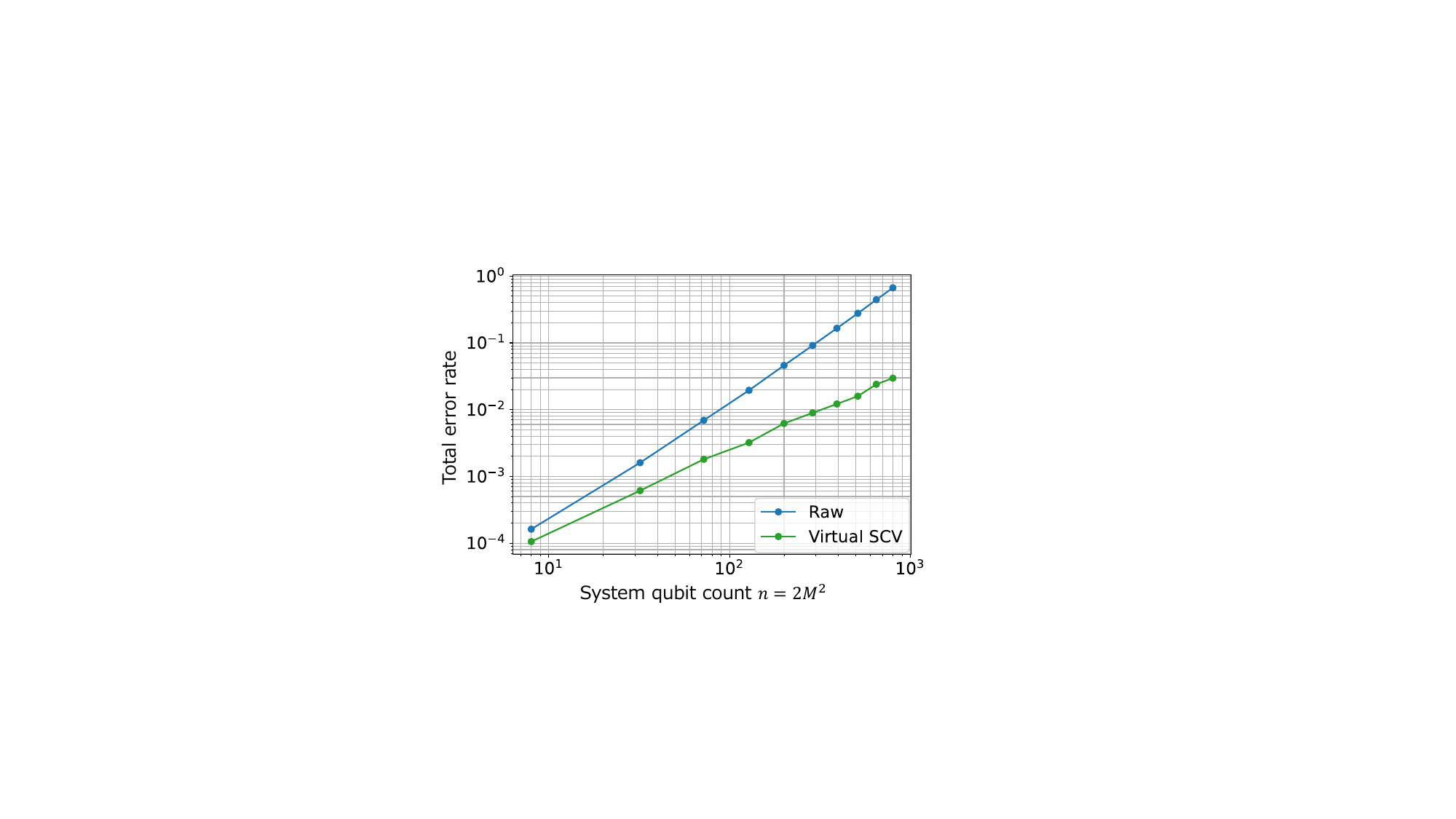}
        \caption{The total error rate of the \texttt{SELECT} operation of the 2D Fermi-Hubbard model on a square lattice with linear system size of $M$ and system qubit count $n = 2M^2$, proposed in Ref.~\cite{babbush2018encoding}. The ``Raw'' and ``Virtual SCV'' lines represent the results without and with virtual SCV, respectively. We assume that all quantum gates and idling qubits are subject to local depolarizing noise with an error rate of $p_{\mathrm{err}} = 10^{-7}$.}
        \label{fig_select}
    \end{center}
\end{figure}

\section{Correcting errors using symmetric channel verification}
\label{sec_SCV_correct}

\subsection{General framework}
Since SCV post-selects the measurement results on the ancilla qubit, it probabilistically provides a purified channel, resulting in an additional sampling overhead of $O((1+p)^{L})$ to detect the effect of noise.
Furthermore, virtual SCV requires a quadratically larger sampling overhead of $O((1+p)^{2L})$ for its implementation.
While these methods are effective when the total error rate $pL$ is small, they become inefficient for larger error rates due to the exponential growth in the sampling overhead.

To reduce the sampling overhead, feedback controls can be applied to the system qubits based on the measurement results from the ancilla qubits, instead of post-selecting them.
This approach allows \textit{correction} of symmetry-breaking errors on noisy channels without introducing additional sampling overhead.
As discussed in Sec.~\ref{sec_channel_purification}, feedback controls can be implemented using a superchannel $\Theta^{\mathrm{cor}}$ defined in Eq.~\eqref{eq_purification_correction} by adding controlled operation corresponding to the feedback control into the unitary channel $\mathcal{U}_{\mathrm{D}}$.
Thus, applying feedback control to the SCV gadget $\Theta^{\mathrm{det}}_S$ in Fig.~\ref{fig_SCV} is equivalent to inserting an additional unitary operator before the measurement and discarding the measurement results.
The performance of such deterministic channel purification protocols is summarized in the following theorem:
\begin{thm}
    \label{thm_3'}
    Let $\Theta^{\mathrm{cor}}_S$ denote a quantum superchannel where an additional unitary operator is inserted before the measurement in the circuit depicted in Fig.~\ref{fig_SCV}, and the measurement results are discarded.
    Then, $\Theta^{\mathrm{cor}}_S$ can correct the noise $\mathcal{N}(\cdot) = \sum_{j=0}^K N_j\cdot N_j^\dag$ on the noisy channel $\mathcal{U}_{\mathcal{N}}$, i.e., $\Theta^{\mathrm{cor}}_S(\mathcal{U}_{\mathcal{N}}) = \mathcal{U}$, if and only if for all $j,k\in\{0,\ldots,K\}$
    \begin{equation}
        \sum_i \Pi_i N_j^\dag N_k \Pi_i \propto I_n.
    \end{equation}
\end{thm}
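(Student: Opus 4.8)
The plan is to recognize this as the channel-level, symmetry-sector analogue of the Knill--Laflamme error-correction conditions, with the projectors $\{\Pi_i\}$ playing the role of codespaces. Concretely, I would compute the Kraus operators of the map realized by the circuit of Fig.~\ref{fig_SCV} \emph{just before} the inserted unitary $W$ and the (discarded) measurement, and then reduce ``$\Theta^{\mathrm{cor}}_S(\mathcal{U}_{\mathcal{N}})=\mathcal{U}$'' to a statement about when a recovery unitary $W$ exists. Writing $\mathcal{U}_{\mathcal{N}}=\mathcal{N}\circ\mathcal{U}$ with Kraus operators $N_jU$ and using $[U,\Pi_a]=0$, the state-before-measurement formula already quoted in the text shows that the (trace-preserving) system-to-system$\otimes$ancilla map produced before $W$ has Kraus operators
$$ B_j=\tilde B_j\,U,\qquad \tilde B_j=\sum_{a,a'}\Pi_{a'}N_j\Pi_a\otimes\ket{a-a'}, $$
one per noise Kraus operator, where the ancilla register \emph{coherently} records the sector shift $a-a'$. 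This coherence is the crux: because the syndrome is not yet measured, off-diagonal sector transitions survive, which is exactly why the correction condition will be weaker than the detection condition of Theorem~\ref{thm_1}.

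\textbf{Key identity.} The central computation is that the Gram operators of these Kraus maps reproduce the claimed expression,
$$ B_j^\dagger B_k=U^\dagger\left(\sum_i\Pi_i N_j^\dagger N_k\Pi_i\right)U . $$
I would obtain this by expanding $\tilde B_j^\dagger\tilde B_k$, using $\Pi_{a'}\Pi_{b'}=\delta_{a'b'}\Pi_{a'}$ together with the ancilla overlap $\langle a-a'|b-b'\rangle=\delta_{a-a',\,b-b'}$ to force $a=b$, and then collapsing $\sum_{a'}\Pi_{a'}=I_n$. Since conjugation by the unitary $U$ preserves proportionality to $I_n$, the stated hypothesis $\sum_i\Pi_i N_j^\dagger N_k\Pi_i\propto I_n$ is \emph{equivalent} to $B_j^\dagger B_k\propto I_n$ for all $j,k$.

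\textbf{Both directions.} With this identity I would invoke the standard fact that a channel equals the unitary channel $\mathcal{U}$ iff every Kraus operator is a scalar multiple of $U$. The Kraus operators of $\Theta^{\mathrm{cor}}_S(\mathcal{U}_{\mathcal{N}})$ are $C_{tj}=(I\otimes\bra{t})W B_j$. For \emph{necessity}, if each $C_{tj}=w_{tj}U$, then $\sum_t C_{tj}^\dagger C_{tk}=(\sum_t\bar w_{tj}w_{tk})I_n$, while unitarity of $W$ gives $\sum_t C_{tj}^\dagger C_{tk}=B_j^\dagger B_k$; hence $B_j^\dagger B_k\propto I_n$ and the condition follows. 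For \emph{sufficiency}, I would run a Knill--Laflamme recovery: regard $V=\sum_j B_j\otimes\ket{j}_E$ as a Stinespring isometry into system$\otimes$ancilla$\otimes$environment; the Gram condition makes the reduced environment state $\sigma_E=\sum_{jk}g_{kj}\ketbra{j}{k}$ (with $B_j^\dagger B_k=g_{jk}I_n$) independent of the input, so choosing $\ket{\eta}\in\mathcal{H}_{\mathrm a}\otimes\mathcal{H}_E$ to purify $\sigma_E$ and invoking the unitary equivalence of Stinespring dilations yields a unitary $W$ on system$\otimes$ancilla with $(W\otimes I_E)V\ket{\psi}=(U\ket{\psi})\otimes\ket{\eta}$, whence tracing out ancilla and environment gives exactly $\mathcal{U}$.

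\textbf{Main obstacle.} I expect the sufficiency step to be the delicate part, specifically verifying that the ancilla is large enough to host the purification of $\sigma_E$, i.e.\ that $\mathrm{rank}(\sigma_E)\le 2^m$. I would settle this from the syndrome structure rather than by assumption: after orthogonalizing the $B_j$, each independent component is an isometry of $\mathcal{H}_{\mathrm s}$ (dimension $d$) into $\mathcal{H}_{\mathrm s}\otimes\mathcal{H}_{\mathrm a}$ (dimension $d\cdot 2^m$) with mutually orthogonal ranges, so at most $2^m$ of them can exist; hence $\mathrm{rank}(\sigma_E)\le 2^m$ automatically and the recovery unitary always exists. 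The remaining work---confirming the $B_j$ by tracking the phases in the controlled-$V_S$ and inverse-QFT steps, and checking trace preservation $\sum_j B_j^\dagger B_j=I_n$ of the pre-$W$ map---is routine.
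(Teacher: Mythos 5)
Your proposal is correct, and it reaches the theorem by the same underlying mechanism as the paper---a Knill--Laflamme-type argument in which $\sum_i \Pi_i N_j^\dagger N_k \Pi_i$ emerges as the error-correction data---but the logical packaging is genuinely different. The paper views the first half of the SCV gadget as an encoding isometry, $U_{\mathrm{E}}\ket{\psi}\ket{0^m} = 2^{-m/2}\sum_{jk}e^{2\pi i jk/2^m}\Pi_j\ket{\psi}\ket{k}$, defines the code space $\mathcal{C}$ spanned by these states, and then cites the standard quantum error-correction condition as a black box: a recovery $U_{\mathrm{D}}$ exists iff $\bra{\psi}\bra{0^m}U_{\mathrm{E}}^\dagger(N_j^\dagger N_k\otimes I_{\mathrm{a}})U_{\mathrm{E}}\ket{\phi}\ket{0^m} = C_{jk}\braket{\psi}{\phi}$, which it then evaluates to $\sum_i\bra{\psi}\Pi_i N_j^\dagger N_k\Pi_i\ket{\phi}$---the same computation as your Gram identity $B_j^\dagger B_k = U^\dagger\bigl(\sum_i\Pi_i N_j^\dagger N_k\Pi_i\bigr)U$ in a different dressing. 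You instead work with the Kraus operators of the full pre-recovery channel and prove both directions from scratch: necessity via the fact that every Kraus operator of a unitary channel is a scalar multiple of the unitary, and sufficiency via an explicit Stinespring-dilation construction of the recovery unitary $W$. What your route buys is the closure of a gap the paper leaves implicit: the cited KL condition guarantees existence of \emph{some} recovery CPTP map, but the theorem requires the recovery to be realized specifically as a unitary on the system plus the \emph{same} $m$-qubit ancilla followed by partial trace; your rank-counting argument ($\mathrm{rank}(\sigma_E)\le 2^m$ because the orthogonalized Kraus components have mutually orthogonal $2^n$-dimensional ranges inside a $2^{n+m}$-dimensional space) verifies exactly this realizability, which the paper's citation-based proof does not address. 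What the paper's route buys is brevity and a transparent conceptual link to stabilizer-code intuition. One minor remark: your Kraus operators $B_j$ (without any $2^{-m}$ prefactor) are the correctly normalized ones---the paper's intermediate pre-measurement state formula carries a spurious $1/2^m$ factor, as one can confirm from trace preservation $\sum_j B_j^\dagger B_j = I_n$ and from the stated post-selection probability.
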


Theorem~\ref{thm_3'} can be derived by interpreting the initial half of the SCV gadget $\Theta^{\mathrm{det}}_S$ as an encoding process on a quantum error-correcting code.
By applying the quantum error-correction condition~\cite{nielsen2010quantum, gottesman2016surviving}, we arrive at Theorem~\ref{thm_3'}.
For a detailed proof, we refer the reader to Appendix~\ref{sec_proof}.

\subsection{Application to Pauli symmetry}
As an application of SCV for error correction, consider a quantum channel under Pauli symmetry. Following the setup in Sec.~\ref{sec_SCV_Pauli}, we analyze the case where the noise channel $\mathcal{N}(\cdot)=\sum_{i=0}^KP_i \cdot P_i$ is Pauli noise as represented in Eq.~\eqref{eq_Pauli_noise}.
By applying feedback controls based on the measurement results from the SCV gadget for $\mathcal{Q}_{U}^{\mathrm{com}}$ in Fig.~\ref{fig_SCV_Pauli}, Pauli errors can be corrected to some extent.
The performance of such deterministic channel purification protocols under Pauli symmetry is characterized by the following theorem:
\begin{cor}
    \label{cor_2}
    Let $\Theta^{\mathrm{cor}}_{\mathcal{Q}^{\mathrm{com}}_{U}}$ denote a quantum superchannel where feedback controls using Pauli operators are applied according to the measurement results from the SCV gadget for $\mathcal{Q}_{U}^{\mathrm{com}}$ in Fig.~\ref{fig_SCV_Pauli}. Then, $\Theta^{\mathrm{cor}}_{\mathcal{Q}^{\mathrm{com}}_{U}}$ can correct the Pauli noise Eq.~\eqref{eq_Pauli_noise} on the noisy channel $\mathcal{U}_{\mathcal{N}}$, i.e., $\Theta^{\mathrm{cor}}_{\mathcal{Q}^{\mathrm{com}}_{U}}(\mathcal{U}_{\mathcal{N}}) = \mathcal{U}$, if and only if
    \begin{equation}
        \label{eq_cor_2_1}
        \mathcal{P}^{\mathrm{{cor}}}_{\mathcal{N}} \cap \mathcal{Q}_{U} = \emptyset,
    \end{equation}
    where $\mathcal{P}^{\mathrm{{cor}}}_{\mathcal{N}} = \{\pm1,\pm i\}\cdot\{P_iP_j\}_{i,j=0}^K \backslash \{I_n\}$ and $\mathcal{Q}_{U}$ is defined in Eq.~\eqref{eq_Q_U}.
\end{cor}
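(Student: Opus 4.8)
The plan is to obtain Corollary~\ref{cor_2} from Theorem~\ref{thm_3'} in exactly the way Corollary~\ref{cor_1} follows from Theorem~\ref{thm_1}, with the single-error condition $\sum_i\Pi_iN_j\Pi_i\propto I_n$ replaced by the pairwise condition $\sum_i\Pi_iN_j^\dagger N_k\Pi_i\propto I_n$. The first step is to recast the dephasing $\sum_i\Pi_i(\cdot)\Pi_i$ performed by the gadget as a group twirl. For a single generator $Q$ with $\Pi_{0,1}=\tfrac12(I\pm Q)$ one has, for any Kraus operator $M$, $\sum_i\Pi_iM\Pi_i=\tfrac12(M+QMQ)$; concatenating the gadgets for the Hermitian generators $Q_1,\dots,Q_r$ composes these into the twirl over the group $G=\mathcal{Q}^{\mathrm{com}}_U$ they generate (one checks the phases cancel so that each group element $g$ enters as $gMg$ with $g^2=I_n$). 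Hence the group-twirl generalization of Theorem~\ref{thm_3'} applied to the concatenated gadget $\Theta^{\mathrm{cor}}_{\mathcal{Q}^{\mathrm{com}}_{U}}$ reads
\begin{equation}
    \frac{1}{|G|}\sum_{g\in G} g\,N_j^\dagger N_k\,g \propto I_n \qquad \forall j,k .
\end{equation}
Writing the Pauli Kraus operators as $N_j=\sqrt{p_j}\,P_j$ gives $N_j^\dagger N_k\propto P_jP_k$, so the task reduces to twirling a single Pauli.

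The second step is this evaluation. For a Pauli $P$ one has $gPg=\chi_P(g)P$ with $\chi_P(g)=+1$ if $[P,g]=0$ and $-1$ otherwise, and $\chi_P$ is a one-dimensional character of $G$. Character orthogonality then yields $\tfrac{1}{|G|}\sum_{g\in G}gPg=P$ when $P$ commutes with every element of $G$, and $0$ otherwise; crucially this holds even when $G$ is non-Abelian, so the argument never requires the generators $Q_k$ to mutually commute. Consequently the displayed condition is met for a given pair iff $P_jP_k$ is either $I_n$ up to phase (the twirl gives $I_n$) or fails to commute with some $g\in G$ (the twirl gives $0$); it is violated precisely when $P_jP_k$ is a nontrivial Pauli commuting with all of $G$.

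The crux is to identify the set of Paulis commuting with all of $G=\mathcal{Q}^{\mathrm{com}}_U$ as $\mathcal{Q}_U$ itself. Expanding $U=\sum_{Q'\in\mathcal{Q}_U'}c_{Q'}Q'$ (with $\mathcal{Q}_U'$ as in Eq.~\eqref{eq_Q_U'}) and using that the operators $\{Q'P\}$ are linearly independent shows $[U,P]=0$ iff $P$ commutes with every $Q'\in\mathcal{Q}_U'$, hence with every element of $\mathcal{Q}_U$; thus $\mathcal{Q}^{\mathrm{com}}_U$ is exactly the centralizer of $\mathcal{Q}_U$ in the Pauli group. Passing to the symplectic picture on $\mathbb{F}_2^{2n}$ (commutation $=$ symplectic inner product), the centralizer is the symplectic complement, and the double complement returns the original subspace, so the centralizer of $\mathcal{Q}^{\mathrm{com}}_U$ is $\mathcal{Q}_U$ as defined in Eq.~\eqref{eq_Q_U}. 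Therefore $P_jP_k$ commutes with all of $G$ iff $P_jP_k\in\mathcal{Q}_U$ (modulo phase), and the correction condition holds for every pair $(j,k)$ iff no nontrivial product $P_jP_k$ lies in $\mathcal{Q}_U$, i.e.\ $\mathcal{P}^{\mathrm{cor}}_{\mathcal{N}}\cap\mathcal{Q}_U=\emptyset$, which is Eq.~\eqref{eq_cor_2_1}.

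The main obstacle I anticipate is bookkeeping rather than conceptual. Theorem~\ref{thm_3'} is stated for a single symmetry operator with genuine orthogonal eigenprojectors, whereas the Pauli gadget concatenates generators that need not commute (for instance $\mathcal{Q}^{\mathrm{com}}_I$ is generated by the anticommuting $X$ and $Z$), so the joint-eigenprojector picture is unavailable; the resolution is the group-twirl reformulation of step one, which makes the non-Abelian case formally identical to the Abelian one. The remaining care is with phases: $P_jP_k$ is only a Pauli up to a factor in $\{\pm1,\pm i\}$, so both the membership test $P_jP_k\in\mathcal{Q}_U$ and the double-centralizer duality must be carried out modulo phase, which is exactly why $\mathcal{P}^{\mathrm{cor}}_{\mathcal{N}}$ carries the prefactor $\{\pm1,\pm i\}$ and $\mathcal{Q}_U$ is defined up to phase in Eq.~\eqref{eq_Q_U}.
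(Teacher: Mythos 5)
Your proposal is correct, and its skeleton matches the paper's own proof: reduce via Theorem~\ref{thm_3'} to the condition that the concatenated gadget annihilate (or preserve) each product $P_jP_k$, evaluate that action on a single Pauli, and then convert ``anticommutes with some element of $\mathcal{Q}_U^{\mathrm{com}}$'' into ``lies outside $\mathcal{Q}_U$,'' with the $\{\pm1,\pm i\}$ phases handled exactly as in the paper. Two sub-steps are executed differently, and both differences are worth noting. First, you recast the nested-projector sum of the paper's Eq.~\eqref{eq_proof_4'_1} as a uniform group twirl over Hermitian representatives of $\mathcal{Q}_U^{\mathrm{com}}$ and evaluate it by character orthogonality; the paper instead states the equivalent case formula $\sum_{i_1\cdots i_r}P_{j,i_1\cdots i_r}\in\{P_j,0\}$ in the proof of Corollary~\ref{cor_1}. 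These are the same computation (indeed your twirl form is the paper's own Eq.~\eqref{eq_SCV_Pauli_for_VSCV}), but your version makes manifest why non-commuting generators such as $X,Z$ cause no difficulty, a point the paper leaves implicit. Second, and more substantively, the crux duality---that the Paulis commuting with all of $\mathcal{Q}_U^{\mathrm{com}}$ are exactly $\mathcal{Q}_U$ as defined in Eq.~\eqref{eq_Q_U}---is the paper's Eq.~\eqref{eq_proof_2_2}, which the paper proves by invoking the Clifford normal-form decomposition $\mathcal{Q}_U=W^\dag(\{I,X,Y,Z\}^{\otimes n_1}\otimes\{I,Z\}^{\otimes n_2}\otimes\{I\}^{\otimes n_3})W$ of Ref.~\cite{mitsuhashi2023clifford}; you instead prove it from scratch via the linear-independence argument (so that $\mathcal{Q}_U^{\mathrm{com}}$ is the centralizer of $\mathcal{Q}_U$) followed by the symplectic double-complement identity $(V^\perp)^\perp=V$ on $\mathbb{F}_2^{2n}$. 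Your route is more elementary and self-contained, at the cost of redoing standard stabilizer-formalism linear algebra; the paper's route is shorter but rests on an external structure theorem. Your attention to the sign bookkeeping (taking Hermitian unsigned representatives so that $gMg=gMg^\dag$, and tracking that $P_jP_k$ is a Pauli only up to a phase in $\{\pm1,\pm i\}$) is exactly what is needed and mirrors the paper's final remark.
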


The proof of Corollary~\ref{cor_2}, which follows from Theorem~\ref{thm_3'}, is provided in Appendix~\ref{sec_select}.
We note that, while Corollary~\ref{cor_2} assumes Pauli noise, it can be generalized to arbitrary noise channels $\mathcal{N}(\cdot) = \sum_j N_j \cdot N_j^\dag$ by decomposing each Kraus operator $N_j$ into a sum of Pauli operators.
Specifically, the noisy channel can be purified as $\Theta^{\mathrm{cor}}_{\mathcal{Q}^{\mathrm{com}}_{U}}(\mathcal{U}_\mathcal{N}) = \mathcal{U}$, provided that the set $\mathcal{P}^{\mathrm{{cor}}}_{\mathcal{N}} = \{P\in\mathcal{P}_n-\{I_n\}~|~\exists ij, \mathrm{tr}[N_i^\dag N_j P]\neq 0\}$ satisfies the conditions in Corollary~\ref{cor_1}.
This is because if a set of Pauli operators $\mathcal{P}^{\mathrm{cor}}_{\mathcal{N}}$ satisfies the conditions in Corollary~\ref{cor_2}, and hence also satisfies the condition in Theorem~\ref{thm_3'}, then any linear combination of these operators also satisfies the condition in Theorem~\ref{thm_3'}.

Corollary~\ref{cor_2} implies that if $\mathcal{Q}_{U}$ consists of Pauli operators with weights greater than or equal to $d$, arbitrary $(d-1)/2$-qubit errors can be corrected.
Meanwhile, when focusing on detecting errors, Corollary~\ref{cor_1} indicates that arbitrary $(d-1)$-qubit errors can be detected.
This result is consistent with state-level error correction and detection protocols.
An interesting application is the time evolution circuit of the lattice fermion model using the Majorana loop stabilizer code~\cite{jiang2019majorana}.
In this case, we have $d=3$, meaning that arbitrary single-qubit errors can be corrected.
Another application is the time evolution $U = e^{i\theta H}$ of the Hamiltonian $H = -\sum_{S\in\mathcal{S}} a_SS$, where $\mathcal{S}$ is the set of stabilizers of a stabilizer code with distance $d$ and $a_S\in\mathbb{R}$ is a coefficient for the stabilizer $S$.
There are several known examples whose low-energy effective Hamiltonians take this form.
One notable case is the honeycomb Kitaev model, whose effective Hamiltonian corresponds to that of a topological code; its dynamics have recently been investigated in Refs.~\cite{gohlke2017dynamics, ali2025robust}.  
Furthermore, there is surging interest in Hamiltonians based on quantum LDPC codes, which have been proposed as a platform for observing topological quantum spin glass phases~\cite{placke2024topological}.
In such cases, when the input state to the unitary is within the code space of the stabilizer code, the usual state-level error correction can be applied to correct errors.
However, if the input state spans multiple eigenspaces, syndrome measurement may destroy the state, so the usual state-level error correction cannot be applied.
Even in this case, assuming that the SCV gadget $\Theta^{\mathrm{cor}}_{\mathcal{Q}^{\mathrm{com}}_{U}}$ can be applied without additional errors, arbitrary $(d-1)/2$-qubit errors can be corrected using the SCV gadget $\Theta^{\mathrm{cor}}_{\mathcal{Q}^{\mathrm{com}}_{U}}$.

\begin{figure}[t]
    \begin{center}
        \includegraphics[width=0.85\linewidth]{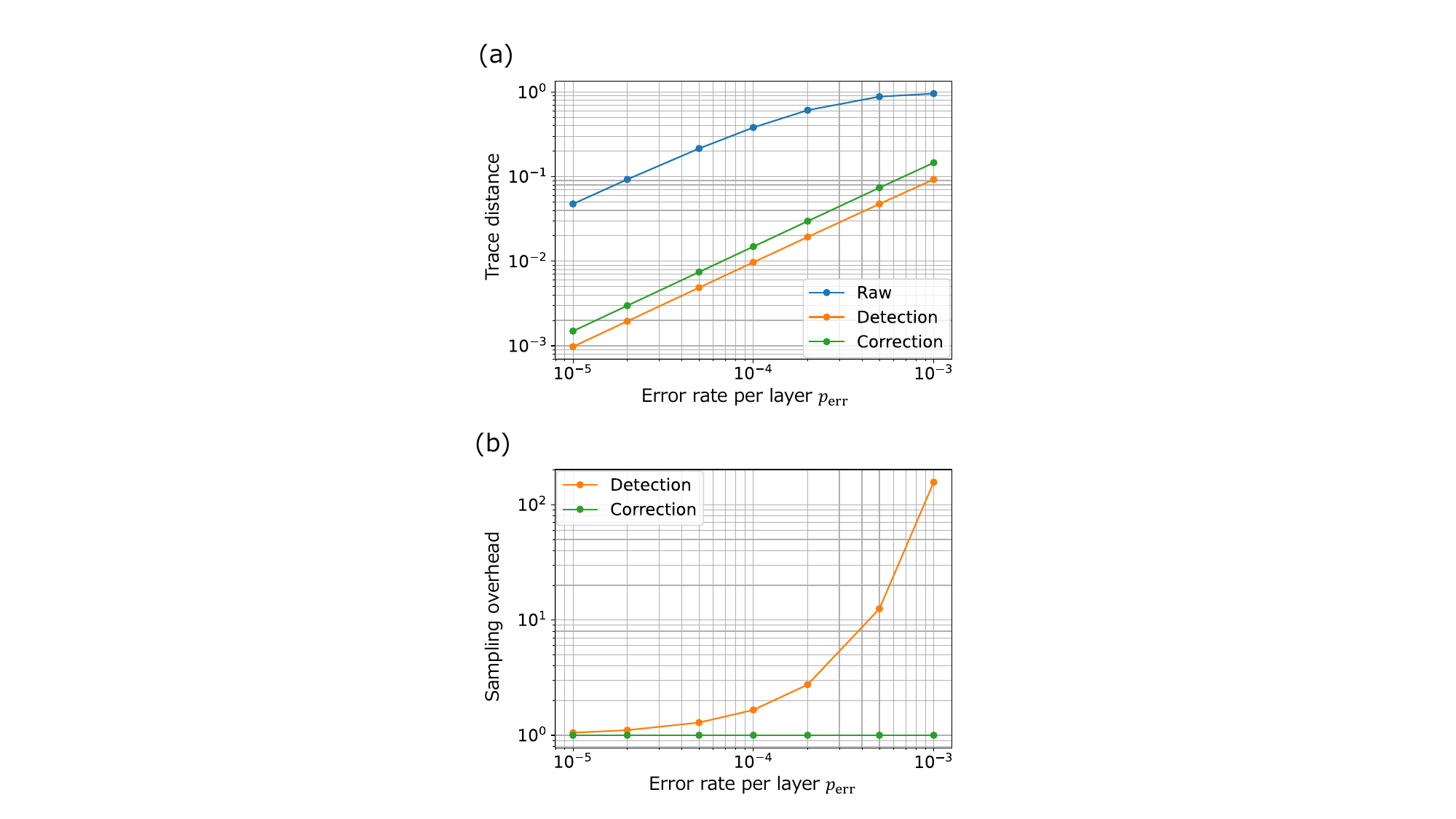}
        \caption{Performance of SCV in detecting and correcting errors in Hamiltonian simulation circuits $U = e^{-i\theta H}$, where the Hamiltonian is given by $H = -\sum_{S\in\mathcal{S}} S$ and $\mathcal{S}$ is the set of stabilizers of the $[[5,1,3]]$ code. Panel (a) shows the trace distance between the ideal quantum state and the noisy quantum state, purified by the error-detection gadget $\Theta^{\mathrm{det}}_{\mathcal{Q}^{\mathrm{com}}_{U}}$ or the error-correction gadget $\Theta^{\mathrm{cor}}_{\mathcal{Q}^{\mathrm{com}}_{U}}$. Panel (b) shows the sampling overhead required for purification.}
        \label{fig_numerics_513code}
    \end{center}
\end{figure}

To evaluate the performance of SCV in correcting errors under noise in the SCV gadget $\Theta^{\mathrm{cor}}_{\mathcal{Q}^{\mathrm{com}}_{U}}$, we consider the time evolution $U = e^{-i\theta H}$ of the Hamiltonian $H = -\sum_{S\in\mathcal{S}} S$, where $\mathcal{S}$ is the set of stabilizers of the $[[5,1,3]]$ code.
We model the noisy circuit as $\bigcirc_{l=1}^L \mathcal{N}_l \circ \mathcal{U}_l$, where $\mathcal{N}_l$ represents local depolarizing noise with an error rate $p_{\mathrm{err}}$, and $\mathcal{U}_l(\cdot) = e^{i(\theta/L) H} \cdot e^{-i(\theta/L) H}$ corresponds to a single time step.
Noisy error-detection gadget $\Theta^{\mathrm{det}}_{\mathcal{Q}^{\mathrm{com}}_{U}}$ or error-correction gadget $\Theta^{\mathrm{cor}}_{\mathcal{Q}^{\mathrm{com}}_{U}}$, where $U_{\mathrm{E}}$ and $U_{\mathrm{D}}$ composing the gadgets are affected by local depolarizing noise with an error rate of $p_{\mathrm{err}}/100$, are applied to each noisy layer $\mathcal{N}_l \circ \mathcal{U}_l$.
The difference in error rates arises because approximately 100 Clifford+T gates are required to synthesize Pauli rotation gates with an accuracy of $10^{-6}$~\cite{ross2016optimal}.
We neglect idling errors on the ancilla by considering the case where logical qubits with a higher code distance are used for the ancilla.
Even when ancilla noise is not negligible, we expect that the use of flag qubits can remove the effect of such errors, as in state-level error correction~\cite{chao2018quantum, chamberland2018flag, chao2020flag}.  
Constructing a channel-level fault-tolerant error-correction scheme using flag qubits that is robust against ancilla noise is left as future work.
We set $\theta = 2\pi$ and $L = 1000$, and use a Haar-random state as the input state.

Our results are shown in Fig.~\ref{fig_numerics_513code}.
When noise is present in the SCV gadgets, performing error detection or correction leads to a constant reduction in the error rate.
This is because, although the SCV gadgets can detect or correct arbitrary single-qubit errors on system qubits, the noise in the SCV gadgets dominates overall performance.
While we expect that the effect of such noise can be removed by extending fault-tolerant error-correction protocols to the channel level, we leave this analysis as future work.
Comparing the performance of the error-detection gadget $\Theta^{\mathrm{det}}_{\mathcal{Q}^{\mathrm{com}}_{U}}$ and the error-correction gadget $\Theta^{\mathrm{cor}}_{\mathcal{Q}^{\mathrm{com}}_{U}}$, we observe that the error-detection gadget achieves higher accuracy.
However, it incurs significant sampling overhead, whereas the error-correction gadget $\Theta^{\mathrm{cor}}_{\mathcal{Q}^{\mathrm{com}}_{U}}$ can correct errors without additional sampling overhead.

\section{Limitations of channel purification using Clifford unitaries}
\label{sec_Clifford_purification}
In previous sections, we did not explicitly impose restrictions on the operations used for the task of channel purification. From a practical viewpoint in the early fault-tolerant regime, one of the most important restrictions is to use only Clifford gates so that the impact of errors within the purification gadget is minimized.
This is because Clifford gates are expected to be easier to implement, especially compared to arbitrary angle Pauli rotations which require gate synthesis~\cite{fowler2011constructing, bocharov2012resource, ross2016optimal}.

In this section, we investigate the fundamental limitations of purifying noisy quantum channels using only Clifford unitaries for the purification gadget.
In particular, we focus on the purification of a Hamiltonian simulation unitary, given by $\mathcal{U}(\cdot) = e^{i\theta H}\cdot e^{-i\theta H}$, when it is subject to Pauli noise $\mathcal{N} = \sum_{i=0}^{K} p_i P_i\cdot P_i$ as defined in Eq.~\eqref{eq_Pauli_noise}.
We note that logical errors arising from imperfect decoding~\cite{beale2018quantum, huang2019performance}, imperfect magic state~\cite{cai2023quantum}, and imperfect gate synthesis~\cite{yoshioka2024error} can all be tailored to be Pauli noise in the early fault-tolerant regime.
We derive the necessary and sufficient conditions for the noise to be completely {\it detectable} and {\it correctable} using Clifford unitaries, formally presented as Theorem~\ref{thm_3} and~\ref{thm_4}, respectively.
We further demonstrate that these conditions coincide with those required for the noise to be both completely detectable and correctable using SCV under Pauli symmetry.
When such conditions are not satisfied, the SCV cannot eliminate the entire effect of noise.
Nonetheless, as presented in Theorem~\ref{thm_5}, we can derive an upper bound on the fidelity of the purified channel, and also show that this bound is saturated by SCV under Pauli symmetry.
These results demonstrate that SCV under Pauli symmetry provides the best channel purification protocol under the constraint of Clifford unitaries.

We first consider the fundamental limitations in detecting errors in the noisy channel $\mathcal{U}_{\mathcal{N}} = \mathcal{N}\circ \mathcal{U}$, when only Clifford gates are used for the entangling operations between the system and ancilla.
As discussed in Sec.~\ref{sec_channel_purification}, detecting errors corresponds to applying a supermap $\Theta^{\mathrm{det}}$, defined in Eq.~\eqref{eq_purification_detection}.
It is convenient to recall that $\Theta^{\rm det}$ transforms the noisy channel $\mathcal{U}_{\mathcal{N}}$ into a trace non-increasing map $\Theta^{\mathrm{det}}(\mathcal{U}_{\mathcal{N}})$ whose action is given as
\begin{equation}
    \label{eq_purification_detection_2}
    \rho \mapsto \bra{0^m}\mathcal{U}_{\mathrm{D}}\circ(\mathcal{U}_{\mathcal{N}}\otimes\mathcal{I}_{\mathrm{a}})\circ\mathcal{U}_{\mathrm{E}}(\rho\otimes\ketbra{0^m})\ket{0^m}.
\end{equation}
When we are restricted to using only Clifford gates for error detection, the unitaries $\mathcal{U}_{\mathrm{E}}(\cdot) = U_{\mathrm{E}} \cdot U_{\mathrm{E}}^\dag$ and $\mathcal{U}_{\mathrm{D}}(\cdot) = U_{\mathrm{D}} \cdot U_{\mathrm{D}}^\dag$, which construct the supermap $\Theta^{\mathrm{det}}$, are constrained to $(n+m)$-qubit Clifford unitaries.
Let us denote such a supermap consisting of Clifford unitaries as $\Theta_{\mathrm{Clif}}^{\mathrm{det}}$.
For the error-detection supermap $\Theta_{\mathrm{Clif}}^{\mathrm{det}}$, we can identify the detectable noise via the following theorem.

\begin{thm}
    \label{thm_3}
    Let $\mathcal{U}(\cdot) = e^{i\theta H}\cdot e^{-i\theta H}$ be an $n$-qubit unitary channel and $\mathcal{N}(\cdot) = \sum_{i=0}^K p_i P_i\cdot P_i$ be a Pauli noise defined in Eq.~\eqref{eq_Pauli_noise}.
    Then, there exists an error-detection supermap $\Theta_{\mathrm{Clif}}^{\mathrm{det}}$ that can detect the set of errors $\mathcal{P}_{\mathcal{N}}^{\mathrm{det}} = \qty{P_i}_{i=1}^K$ as 
    \begin{equation}
        \label{eq_thm_3_1}
        \forall \theta, \; \Theta^{\mathrm{det}}_{\mathrm{Clif}} (\mathcal{N}_{P_i}\circ\mathcal{U}) = \delta_{i0}\mathcal{U},
    \end{equation}
    if and only if
    \begin{equation}
        \label{eq_thm_3_2}
        \mathcal{P}^{\mathrm{{det}}}_{\mathcal{N}} \cap \mathcal{Q}_{H} = \emptyset.
    \end{equation}
    Here, $\mathcal{N}_{P_i}(\cdot) = P_i\cdot P_i$, $\delta_{ij}$ represents the Kronecker delta, and $\mathcal{Q}_{H}$ is a set of Pauli operators generated from Pauli operators with non-zero coefficients in the Pauli expansion of $H$, as defined in Eq.~\eqref{eq_Q_U}.
\end{thm}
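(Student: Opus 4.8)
The plan is to prove the two directions separately, with sufficiency following almost immediately from the tools already in place and the real work lying in the converse.

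\emph{Sufficiency.} If $\mathcal{P}^{\mathrm{det}}_{\mathcal{N}}\cap\mathcal{Q}_H=\emptyset$, I would simply exhibit SCV under Pauli symmetry as the required Clifford gadget. For $U=e^{i\theta H}$ one has $\mathcal{Q}_U=\mathcal{Q}_H$ and $\mathcal{Q}^{\mathrm{com}}_U=\mathcal{Q}^{\mathrm{com}}_H$, so $\Theta^{\mathrm{det}}_{\mathcal{Q}^{\mathrm{com}}_U}$ is built entirely from Hadamards and controlled-Pauli gates and is therefore a legitimate $\Theta^{\mathrm{det}}_{\mathrm{Clif}}$. Applying Corollary~\ref{cor_1} to each single-error channel $\mathcal{N}_{P_i}\circ\mathcal{U}$ then yields $\Theta^{\mathrm{det}}_{\mathcal{Q}^{\mathrm{com}}_U}(\mathcal{N}_{P_i}\circ\mathcal{U})=\delta_{i0}\mathcal{U}$ for all $\theta$, which is exactly Eq.~\eqref{eq_thm_3_1}.

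\emph{Necessity.} I would argue the contrapositive, and since detecting the whole set is at least as hard as detecting one element while still passing the ideal channel, reduce to a single Pauli $P\in\mathcal{Q}_H\setminus\{I_n\}$. Writing the detection map through its single Kraus operator $R(X)=(I_n\otimes\bra{0^m})U_{\mathrm{D}}(X\otimes I_{\mathrm{a}})U_{\mathrm{E}}(I_n\otimes\ket{0^m})$, the two requirements become the operator identities $R(U(\theta))\propto U(\theta)$ (a full-rank, unitary image) and $R(PU(\theta))=0$, both for all $\theta$. I would repackage these on the Choi level: the accept probability is a linear functional $\mathrm{tr}[M\,J(\mathcal{E})]$ of the Choi operator on the doubled $2n$-qubit space, and the conditions read $M\,\mathrm{vec}(U(\theta))=\mathrm{vec}(U(\theta))$ and $M\,\mathrm{vec}(PU(\theta))=0$, where $\mathrm{vec}(X)=(X\otimes I)\mathrm{vec}(I_n)$.

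The crux is the Clifford structure of $M$. Expanding $U_{\mathrm{E}},U_{\mathrm{D}}$ through the stabilizer projectors of the codes they define and tracing out the ancilla, $M$ comes out proportional to $\sum_g g$ over an abelian group of doubled Paulis of the form $g=A\otimes A^{T}$, i.e. $M$ is a multiple of a stabilizer projector onto some $V_+$. The condition $M\,\mathrm{vec}(U(\theta))=\mathrm{vec}(U(\theta))$ forces $g\,\mathrm{vec}(U(\theta))=\mathrm{vec}(U(\theta))$ for every generator; using $(A\otimes A^{T})\mathrm{vec}(U)=\mathrm{vec}(AUA)$ and evaluating at $\theta=0$ gives $AUA=U$, hence $[A,U]=0$, so every system-side Pauli $A$ carried by the stabilizer lies in $\mathcal{Q}^{\mathrm{com}}_H$. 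For the error condition I would compute $g\,\mathrm{vec}(PU)=\mathrm{vec}(APUA)=\pm\,\mathrm{vec}(PU)$, the sign being $-1$ precisely when $\{A,P\}=0$; since the $g$ form a group, $M\,\mathrm{vec}(PU)=0$ holds if and only if some stabilizer carries the $-1$ sign, i.e. if and only if $P$ anticommutes with some $A\in\mathcal{Q}^{\mathrm{com}}_H$. By the symplectic duality already invoked in Corollary~\ref{cor_1} (the commutant of $\mathcal{Q}^{\mathrm{com}}_H$ inside $\mathcal{P}_n$ is exactly $\mathcal{Q}_H$), this is equivalent to $P\notin\mathcal{Q}_H$, contradicting $P\in\mathcal{Q}_H$ and closing the converse.

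The step I expect to be the main obstacle is establishing that the Clifford accept functional is genuinely a stabilizer projector whose stabilizers take the form $A\otimes A^{T}$ with $A$ commuting with $H$: this is where the restriction to Clifford $U_{\mathrm{E}},U_{\mathrm{D}}$ is actually used, and it must be handled carefully because for general (non-Clifford) gadgets the detectable set is strictly larger, being governed by the algebra $\mathbb{C}[H]$ rather than the group $\mathcal{Q}_H$ — as the single-qubit example $H=X+Z$, $P=Y$ (where $Y\in\mathcal{Q}_H$ but $Y$ lies outside the support of every $e^{i\theta H}$) illustrates. The remaining bookkeeping — the global phase hidden in $R(U)\propto U$, the transpose and sign conventions in $\mathrm{vec}(\cdot)$, and verifying that the group of ancilla-matched Pauli pairs maps faithfully enough that $M$ is an honest projector rather than vanishing — is routine but should be carried out cleanly.
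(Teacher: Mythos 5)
Your sufficiency direction is exactly the paper's: exhibit $\Theta^{\mathrm{det}}_{\mathcal{Q}^{\mathrm{com}}_U}$ (a Clifford gadget) and invoke Corollary~\ref{cor_1}; that part is fine. The problem is the necessity direction, and it sits precisely at the step you yourself flag as ``the main obstacle'': the claim that, for an arbitrary Clifford gadget, the accept map $M$ (equivalently $R(X)=W_{\mathrm{D}}^\dag (X\otimes I_{\mathrm{a}})W_{\mathrm{E}}$ in vectorized form) ``comes out proportional to $\sum_g g$ over an abelian group of doubled Paulis $g=A\otimes A^T$.'' This is not a bookkeeping step --- it is false as stated, even after imposing the $\theta=0$ preservation condition $R(I_n)=I_n$ (which forces $W_{\mathrm{D}}=W_{\mathrm{E}}=:W$ up to phase). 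Counterexample with $n=m=1$: take $W\ket{\psi}=\ket{0}\otimes H\ket{\psi}$ ($U_{\mathrm{E}}$ a Hadamard followed by a SWAP, $U_{\mathrm{D}}=U_{\mathrm{E}}^\dag$). Then $R(I)=I$ but $R(Z)=W^\dag(Z\otimes I)W=I$, since $Z\otimes I$ is a \emph{stabilizer} of the code defined by $W$. A group-average map $\frac{1}{|G|}\sum_{A\in G}A\cdot A$ necessarily sends every Pauli to $0$ or to itself (character orthogonality), so no such $G$ reproduces $Z\mapsto I$. The general obstruction is exactly this: under conjugation through a Clifford encoding, a system Pauli $P$ with $P\otimes I_{\mathrm{a}}\in N(\mathcal{S})$ is mapped to its \emph{logical action}, which can be a stabilizer (trivial logical) or a different Pauli altogether, not $\pm P$. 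Ruling this out requires using the full dynamical hypothesis ``$\Theta(\mathcal{U})=\mathcal{U}$ for all $\theta$'' in an essential way, and your outline never does that work (relatedly, your ``evaluating at $\theta=0$ gives $AUA=U$'' is vacuous at $\theta=0$, where $U=I$; you need the condition for all $\theta$, or its derivative at $0$).

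The paper's proof supplies exactly the missing content, by a different and weaker-but-sufficient route that avoids ever establishing your structural claim. It treats $U_{\mathrm{E}}$ as the encoder of a stabilizer code $\mathcal{C}$ with stabilizer group $\mathcal{S}$, then: (i) from preservation at all $\theta$ it derives $\Pi(H\otimes I^{\otimes m})\Pi=\Pi\bar{H}\Pi$ by differentiating at $\theta=0$; (ii) a counting argument on the Pauli expansion $H=\sum_i h_i Q_i$ (distinct Paulis on both sides, equal number of terms) forces each $Q_i\otimes I^{\otimes m}$ to lie in $N(\mathcal{S})^{\pm}$, hence $\mathcal{Q}_H\otimes\{I^{\otimes m}\}\subset N(\mathcal{S})^{\pm}$, and a separate cardinality contradiction ($|\mathcal{Q}_H|\geq|\mathcal{Q}_H|+1$) shows no non-identity element of $\mathcal{Q}_H\otimes\{I^{\otimes m}\}$ is in $\mathcal{S}^{\pm}$, so these are all \emph{non-trivial logical} operators; (iii) the quantum error-detection condition applied at $\theta=0$ forbids any detectable $P_i\otimes I^{\otimes m}$ from lying in $N(\mathcal{S})^{\pm}\setminus\mathcal{S}^{\pm}$. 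Disjointness of $\mathcal{P}^{\mathrm{det}}_{\mathcal{N}}$ and $\mathcal{Q}_H$ follows immediately, with no need to show the accept map is a group average. If you want to salvage your vectorization route, the content of steps (i)--(ii) is exactly what you would have to prove to legitimize the form $M\propto\sum_A A\otimes A^T$ with $A\in\mathcal{Q}^{\mathrm{com}}_H$; as written, your argument assumes the theorem's hard part.
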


The key idea for proving the necessary condition of Theorem~\ref{thm_3} is to interpret the application of the Clifford unitary $U_{\mathrm{E}}$ as encoding into a stabilizer code with a code space spanned by $U_{\mathrm{E}}\ket{\psi}\ket{0^m}$.
Under Eq.~\eqref{eq_thm_3_1}, we can show that all elements except $I^{\otimes n+m}$ in $\mathcal{Q}_{H} \otimes \{I\}^{\otimes m}$ are non-trivial logical Pauli operators.
Since detectable errors cannot be non-trivial logical Pauli operators, we obtain Eq.~\eqref{eq_thm_3_2}.
To show sufficiency, we consider detecting noise with SCV under Pauli symmetry, which we have discussed in Sec.~\ref{sec_SCV_Pauli}.
Interestingly, the necessary condition in Eq.~\eqref{eq_thm_3_2} exactly matches the condition for detectable noise using SCV under Pauli symmetry, which is characterized in Corollary~\ref{cor_1}.
This implies that when error detection for the noisy channel is restricted to Clifford operations, we cannot detect more noise than using SCV under Pauli symmetry.
Therefore, it suffices to use SCV under the restriction, demonstrating the optimality of our protocol.
See Appendix~\ref{sec_proof} for the full proof.

Let us comment on the necessary and sufficient condition in Eq.~\eqref{eq_thm_3_2}.
This condition implies that Pauli errors included in the Hamiltonian cannot be detected; such errors are indistinguishable from the unitary $e^{i\theta H}$.
Moreover, Eq.~\eqref{eq_thm_3_2} implies that, if $\mathcal{Q}_{H}$ consists of Pauli operators with weights greater than or equal to $d$,  SCV can detect arbitrary $(d-1)$-qubit errors.
A notable example is the Heisenberg model discussed in Sec.~\ref{sec_SCV_Pauli}.
For this Hamiltonian, $\mathcal{Q}_{H}$ consists of weight-2 or higher Pauli operators, allowing the detection of arbitrary single-qubit errors.

Next, we consider the limitations in correcting errors in the noisy channel $\mathcal{U}_{\mathcal{N}}$ using Clifford gates.
As discussed in Sec.~\ref{sec_channel_purification}, correcting errors corresponds to applying a superchannel $\Theta^{\mathrm{cor}}$, defined in Eq.~\eqref{eq_purification_correction}, which transforms the noisy channel $\mathcal{U}_{\mathcal{N}}$ into a quantum channel $\Theta^{\mathrm{cor}}(\mathcal{U}_{\mathcal{N}})$, defined as
\begin{equation}
    \label{eq_purification_correction_2}
    \rho \mapsto \mathrm{tr}_{\mathrm{a}}[\mathcal{U}_{\mathrm{D}}\circ(\mathcal{U}_{\mathcal{N}}\otimes\mathcal{I}_{\mathrm{a}})\circ\mathcal{U}_{\mathrm{E}}(\rho\otimes\ketbra{0^m})].
\end{equation}
When we are restricted to using only Clifford gates for error correction, the unitaries $\mathcal{U}_{\mathrm{E}}(\cdot) = U_{\mathrm{E}} \cdot U_{\mathrm{E}}^\dag$ and $\mathcal{U}_{\mathrm{D}}(\cdot) = U_{\mathrm{D}} \cdot U_{\mathrm{D}}^\dag$, which construct the superchannel $\Theta^{\mathrm{cor}}$, are constrained to $(n+m)$-qubit Clifford unitaries.
Let us denote such a superchannel consisting of Clifford unitaries as $\Theta_{\mathrm{Clif}}^{\mathrm{cor}}$.
For the error-correction superchannel $\Theta_{\mathrm{Clif}}^{\mathrm{cor}}$, the correctable noise can be characterized via the following theorem.

\begin{thm}
    \label{thm_4}
    Let $\mathcal{U}(\cdot) = e^{i\theta H}\cdot e^{-i\theta H}$ be an $n$-qubit unitary channel and $\mathcal{N}(\cdot) = \sum_{i=0}^K p_i P_i\cdot P_i$ be a Pauli noise defined in Eq.~\eqref{eq_Pauli_noise}.
    Then, there exists an error-correction superchannel $\Theta_{\mathrm{Clif}}^{\mathrm{cor}}$ that satisfies
    \begin{equation}
        \label{eq_thm_4_1}
        \forall \theta, \; \Theta_{\mathrm{Clif}}^{\mathrm{cor}} (\mathcal{U}_\mathcal{N}) = \mathcal{U},
    \end{equation}
    if and only if
    \begin{equation}
        \label{eq_thm_4_2}
        \mathcal{P}^{\mathrm{{cor}}}_{\mathcal{N}} \cap \mathcal{Q}_{H} = \emptyset.
    \end{equation}
    Here, $\mathcal{P}_{\mathcal{N}}^{\mathrm{cor}} =\{\pm1,\pm i\}\cdot\{P_iP_j\}_{i,j=0}^K  \backslash \qty{I_n}$, $\mathcal{U}_{\mathcal{N}} = \mathcal{N}\circ\mathcal{U}$, and $\mathcal{Q}_{H}$ is a set of Pauli operators generated from Pauli operators with non-zero coefficients in the Pauli expansion of $H$, as defined in Eq.~\eqref{eq_Q_U}.
\end{thm}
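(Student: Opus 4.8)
The plan is to prove both directions by reducing to the error-correction characterizations already in hand, mirroring the argument sketched for Theorem~\ref{thm_3} but with the Knill--Laflamme condition in place of the detection condition. Sufficiency is the easy direction. If $\mathcal{P}^{\mathrm{cor}}_{\mathcal{N}}\cap\mathcal{Q}_H=\emptyset$, then because $\mathcal{Q}_U=\mathcal{Q}_H$ for $U=e^{i\theta H}$ under perfect exponentiation, Corollary~\ref{cor_2} already gives $\Theta^{\mathrm{cor}}_{\mathcal{Q}^{\mathrm{com}}_U}(\mathcal{U}_{\mathcal{N}})=\mathcal{U}$ for every $\theta$. The gadget of Fig.~\ref{fig_SCV_Pauli} with feedback is assembled entirely from Hadamard gates, controlled-Pauli gates, and Pauli recoveries conditioned on single ancilla qubits, all of which are Clifford, so $\Theta^{\mathrm{cor}}_{\mathcal{Q}^{\mathrm{com}}_U}$ is a legitimate instance of $\Theta^{\mathrm{cor}}_{\mathrm{Clif}}$ and Eq.~\eqref{eq_thm_4_1} follows.

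For necessity, suppose some $\Theta^{\mathrm{cor}}_{\mathrm{Clif}}$ achieves Eq.~\eqref{eq_thm_4_1} for all $\theta$. First I would read $U_{\mathrm{E}}$ as an encoding isometry $V_{\mathrm{E}}:\ket{\psi}\mapsto U_{\mathrm{E}}\ket{\psi}\ket{0^m}$ into an $[[n+m,n]]$ stabilizer code, whose stabilizer group and logical Pauli group are well defined since $U_{\mathrm{E}}$ is Clifford. Because $p_0\neq0$ and $\mathcal{U}$ is an extreme point of the set of channels, the decomposition $\Theta^{\mathrm{cor}}_{\mathrm{Clif}}(\mathcal{U}_{\mathcal{N}})=\sum_i p_i\Phi_i$ into its individual trace-preserving error branches $\Phi_i$ forces $\Phi_0=\mathcal{U}$; writing $W=U_{\mathrm{D}}(U\otimes I_{\mathrm{a}})V_{\mathrm{E}}$, this yields $W\ket{\psi}=(U\ket{\psi})\otimes\ket{\chi_\theta}$ for some fixed ancilla state $\ket{\chi_\theta}$. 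The crucial step is the same logical-operator lemma that underlies Theorem~\ref{thm_3}: the $\theta$-independence of $U_{\mathrm{E}},U_{\mathrm{D}}$ combined with this relation holding for the whole family $e^{i\theta H}$ forces every nonidentity element of $\mathcal{Q}_H\otimes\{I\}^{\otimes m}$ to be a nontrivial logical Pauli operator of the code; conjugating by $U_{\mathrm{D}}$ and differentiating in $\theta$ shows that each Pauli term of $H$ implements the corresponding logical Pauli faithfully, and the generated products realize all of $\mathcal{Q}_H$ as distinct nontrivial logicals.

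With the lemma available I would then invoke the quantum error-correction condition in the form of Theorem~\ref{thm_3'}: correctability of the Pauli noise requires $\sum_i\Pi_i P_j^\dag P_k\Pi_i\propto I_n$ for all $j,k$, i.e.\ none of the operators $P_j^\dag P_k$—which are precisely the elements of $\mathcal{P}^{\mathrm{cor}}_{\mathcal{N}}$ together with operators proportional to $I_n$—may be a nontrivial logical operator. Taking $k=0$ shows each $P_j$ has trivial logical action, so the logical-class homomorphism annihilates the entire group $\langle P_i\rangle$; were some product $P_jP_k$ to lie in $\mathcal{Q}_H\setminus\{I_n\}$, the lemma would make it a nontrivial logical, contradicting this triviality. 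Hence $\mathcal{P}^{\mathrm{cor}}_{\mathcal{N}}\cap\mathcal{Q}_H=\emptyset$, establishing Eq.~\eqref{eq_thm_4_2}, and the coincidence of this bound with the SCV condition of Corollary~\ref{cor_2} shows SCV is optimal among Clifford-restricted correction protocols.

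I expect the main obstacle to be the lemma itself. The delicate point is to extract, from the single noiseless-branch identity $W\ket{\psi}=(U\ket{\psi})\otimes\ket{\chi_\theta}$ valid along the continuous family $e^{i\theta H}$, a clean statement that each generator of $\mathcal{Q}_H$ is a nontrivial logical operator, while carefully controlling the $\theta$-dependent ancilla factor $\ket{\chi_\theta}$ and the phase ambiguities inherent in passing from $\mathcal{Q}_U'$ to the generated group $\mathcal{Q}_H$. Everything downstream—extremality, the reduction to Theorem~\ref{thm_3'}, and the product argument—is then routine.
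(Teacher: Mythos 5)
Your outline reproduces the paper's own strategy: sufficiency via the Clifford SCV$+$feedback gadget and Corollary~\ref{cor_2} (the paper uses the $\theta$-independent gadget $\Theta^{\mathrm{cor}}_{\mathcal{Q}^{\mathrm{com}}_{H}}$, which sidesteps the fact that $\mathcal{Q}_U$ can degenerate at special values such as $\theta=0$); necessity by viewing $U_{\mathrm{E}}$ as an encoding into an $[[n+m,n]]$ stabilizer code, using $p_0\neq 0$ (extremality of unitary channels, which the paper leaves implicit) to force the noiseless branch to equal $\mathcal{U}$, differentiating the resulting family identity in $\theta$ to show that every nonidentity element of $\mathcal{Q}_H\otimes\{I\}^{\otimes m}$ is a nontrivial logical operator, and combining this with the error-correction condition at $\theta=0$.

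However, one step in your necessity argument is wrong as written. You claim that taking $k=0$ ``shows each $P_j$ has trivial logical action, so the logical-class homomorphism annihilates the entire group $\langle P_i\rangle$.'' The Knill--Laflamme condition only guarantees that each $P_j\otimes I^{\otimes m}$ is \emph{not a nontrivial logical operator}, i.e., lies outside $N(\mathcal{S})^{\pm}\setminus\mathcal{S}^{\pm}$; it may perfectly well be a detectable error anti-commuting with some stabilizer, in which case it has no logical class at all (think of single-qubit errors on the $[[5,1,3]]$ code). Moreover, the set of operators outside $N(\mathcal{S})^{\pm}\setminus\mathcal{S}^{\pm}$ is not closed under multiplication---the product of two detectable errors can be a nontrivial logical, which is exactly why Knill--Laflamme must constrain the pairwise products $P_j^\dag P_k$ rather than only the individual $P_j$---so no homomorphism or group argument on $\langle P_i\rangle$ is available (nor is one needed: products of three or more generators are not constrained at all). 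The flaw is fortunately inessential: the condition you correctly state at the start of that paragraph already says that every pairwise product $P_j^\dag P_k$ fails to be a nontrivial logical, and these products, up to the phases $\{\pm 1,\pm i\}$, exhaust $\mathcal{P}^{\mathrm{cor}}_{\mathcal{N}}$; combined with your lemma that nonidentity elements of $\mathcal{Q}_H$ \emph{are} nontrivial logicals, this yields $\mathcal{P}^{\mathrm{cor}}_{\mathcal{N}}\cap\mathcal{Q}_H=\emptyset$ directly, which is precisely the paper's final step. One further small correction: for an arbitrary Clifford $U_{\mathrm{E}}$ the relevant statement is the general stabilizer-code error-correction condition, not Theorem~\ref{thm_3'}, whose projectors $\Pi_i$ are tied to the eigenspaces of the symmetry operator $S$ in the specific SCV circuit.
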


The proof of Theorem~\ref{thm_4} is essentially the same as that of Theorem~\ref{thm_3}.
The only difference is that we use the quantum error-correction condition instead of the quantum error-detection condition~\cite{nielsen2010quantum, gottesman2016surviving}, so $\mathcal{P}^{\mathrm{{det}}}_{\mathcal{N}} = \{P_i\}_{i=1}^K$ in Eq.~\eqref{eq_thm_3_2} becomes $\mathcal{P}^{\mathrm{{cor}}}_{\mathcal{N}} = \{\pm1,\pm i\}\cdot\{P_iP_j\}_{i,j=0}^K  \backslash \qty{I_n}$.
See Appendix~\ref{sec_proof} for the full proof.
The interesting aspect of Theorem~\ref{thm_4} is that the necessary and sufficient condition in Eq.~\eqref{eq_thm_4_2} exactly matches the condition for correctable noise using SCV under Pauli symmetry, which is characterized in Corollary~\ref{cor_2}.
Therefore, when we are restricted to using Clifford gates to correct errors in the noisy channel, we cannot correct more noise than using SCV under Pauli symmetry.
This represents the sufficiency and optimality of applying SCV in this scenario.

We remark that our error-correction condition is more strict than those required in covariant quantum error correction~\cite{zhou2021new, kubica2021using}.
In Refs.~\cite{zhou2021new, kubica2021using}, the authors aimed to purify the noisy channel $\mathcal{U}_{\mathcal{N}}$ into a noiseless channel $\mathcal{U}'(\cdot) = e^{i\theta H'} \cdot e^{-i\theta H'}$, where they allow the Hamiltonian to change from $H$ to $H'\not\propto I_n$.
Specifically, they showed that the transformation
\begin{equation}
    \label{eq_covariant_def}
    \forall \theta, \; \Theta^{\mathrm{cor}}_{\mathrm{Clif}} (\mathcal{U}_\mathcal{N}) = \mathcal{U}'
\end{equation}
is possible if and only if
\begin{equation}
    \label{eq_covariant_cond}
    \mathcal{Q}_{H}' \not\subset \mathcal{P}^{\mathrm{{cor}}}_{\mathcal{N}},
\end{equation}
where $\mathcal{Q}_{H}'$ is a set of Pauli operators with non-zero coefficients in the Pauli expansion of $H$, as defined in Eq.~\eqref{eq_Q_U'}.
Even though the Hamiltonian may change when correcting errors, this transformation is useful for estimating $\theta$ at the Heisenberg limit~\cite{zhou2018achieving}.
The condition in Eq.~\eqref{eq_covariant_cond} implies that, if there is at least one term in the Hamiltonian that is not included in the set $\mathcal{P}^{\mathrm{{cor}}}_{\mathcal{N}}$, then we can preserve that term while correcting all the errors.
More precisely, Eq.~\eqref{eq_covariant_cond} states that there exists a Pauli operator $Q \in\mathcal{Q}_{H}'$ such that $Q \notin\mathcal{P}^{\mathrm{{cor}}}_{\mathcal{N}}$.
Under this condition, we can achieve Eq.~\eqref{eq_covariant_def} by setting $H' \propto Q$.
However, our goal in channel purification is not to preserve at least one term of the Hamiltonian, but to preserve the entire Hamiltonian $H$ when correcting errors.
Therefore, all elements in $\mathcal{Q}_{H}'$ (including Pauli operators generated from them) should not be in the set $\mathcal{P}^{\mathrm{{cor}}}_{\mathcal{N}}$, resulting in the condition in Eq.~\eqref{eq_thm_4_2} in Theorem~\ref{thm_4}.

Finally, we consider the case where the effect of noise cannot be entirely eliminated.
In Theorem~\ref{thm_3} and Theorem~\ref{thm_4}, we aim to recover the ideal noiseless channel $\mathcal{U}$ itself by detecting and correcting all noise components.
However, as seen in our numerical results, such channel purification is not possible in many practical setups, even when noise is absent in the SCV gadget.
In such cases, we aim to construct $\Theta^{\mathrm{cor}}$ so that the purified channel $\Theta^{\mathrm{cor}} (\mathcal{U}_\mathcal{N})$ becomes as close as possible to the noiseless channel $\mathcal{U}$.
To quantify the performance of such approximate purification, we define the worst-case fidelity between two quantum channels $\mathcal{E}$ and $\mathcal{F}$ as
\begin{equation}
    \label{eq_worst_case_fid}
    F(\mathcal{E},\mathcal{F}) = \min_{\rho} F(\mathcal{I}\otimes\mathcal{E}(\rho), \mathcal{I}\otimes\mathcal{F}(\rho)),
\end{equation}
where $F$ on the right-hand side represents the fidelity between two quantum states, and $\mathcal{I}$ is the identity channel of the same dimension as $\mathcal{E}$.
Our goal is to derive an upper bound on the worst-case fidelity when the superchannel $\Theta^{\mathrm{cor}}_{\mathrm{Clif}}$ is implemented using Clifford unitaries.

For simplicity, let us consider the case of correcting Pauli noise affecting Pauli rotation gates.
Since arbitrary Pauli rotation gates can be transformed into Pauli-$Z$ rotation gates by conjugating them with Clifford gates, we focus specifically on purifying the Pauli-$Z$ rotation gate $\mathcal{U}(\cdot) = e^{-i\frac{\theta}{2} Z}\cdot e^{i\frac{\theta}{2} Z}$ under single-qubit Pauli noise $\mathcal{N}(\cdot) = p_0 \cdot +p_x X\cdot X + p_y Y\cdot Y + p_z Z\cdot Z$.
By applying an upper bound on the worst-case fidelity obtained from the resource-theoretic analysis in Ref.~\cite{regula2021fundamental}, we establish the following theorem.

\begin{thm}
    \label{thm_5}
    Let $\mathcal{U}(\cdot) = e^{-i\frac{\theta}{2} Z}\cdot e^{i\frac{\theta}{2} Z}$ be a Pauli-$Z$ rotation gate with $0\leq \theta \leq \pi/4$, and let $\mathcal{N}(\cdot) = p_0 \cdot +p_x X\cdot X + p_y Y\cdot Y + p_z Z\cdot Z$ be a single-qubit Pauli noise channel affecting $\mathcal{U}$.
    Then, the superchannel $\Theta^{\mathrm{cor}}_{\mathrm{Clif}}$ satisfies
    \begin{equation}
        \label{eq_thm_5_1}
        \begin{aligned}
            F(\Theta^{\mathrm{cor}}_{\mathrm{Clif}}(\mathcal{U}_{\mathcal{N}}), \mathcal{U})
            \leq \frac{1+\cos\theta}{2}\biggl((&p_0+p_z)R\qty(\theta,\frac{p_z}{p_0+p_z})\\
            + (&p_x+p_y)R\qty(\theta, \frac{p_y}{p_x+p_y})\biggr).
        \end{aligned}
    \end{equation}
    Here, $\mathcal{U}_{\mathcal{N}} = \mathcal{N}\circ\mathcal{U}$, and $R(\theta, p)$ represents the resource robustness of the quantum state $\frac{1}{2}(I + (1-2p)(\cos\theta X + \sin \theta Y))$, given by
    \begin{equation}
        R(\theta, p) = \max\qty{(2 - \sqrt{2})\qty(1 + \abs{1-2p}\cos\qty(\theta-\frac{\pi}{4})), 1}.
    \end{equation}
\end{thm}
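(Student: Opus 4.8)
The plan is to recast Clifford-restricted purification of $\mathcal{U}_{\mathcal{N}}$ as resource distillation in the resource theory of magic (stabilizerness): the superchannels $\Theta^{\mathrm{cor}}_{\mathrm{Clif}}$ built from Clifford $U_{\mathrm{E}},U_{\mathrm{D}}$, stabilizer ancillae, and computational-basis measurements are precisely the free superchannels, and the target rotation $\mathcal{U}$ is a magic channel. I would then invoke the transformation-fidelity bound of Ref.~\cite{regula2021fundamental}, which I expect to factorize the right-hand side of \eqref{eq_thm_5_1} into (i) the maximal overlap of the target with free states and (ii) a robustness-type quantifier of the noisy input $\mathcal{U}_{\mathcal{N}}$. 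My first step is therefore to make these two ingredients explicit for the single-qubit rotation $\mathcal{U}(\cdot)=e^{-i\frac{\theta}{2}Z}\cdot e^{i\frac{\theta}{2}Z}$.

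First I would pin down the worst-case input in \eqref{eq_worst_case_fid}. Since $\mathcal{U}$ and the Pauli noise are covariant under $Z$-rotations, I expect the minimization to be attained on an equatorial input, reducing the channel fidelity to the state-level task of distilling the magic state $\mathcal{U}(\ketbra{+})$, with Bloch vector $(\cos\theta,\sin\theta,0)$, from the noisy output. For $0\le\theta\le\pi/4$ the closest stabilizer state is $\ket{+}$, so the maximal overlap of the target (equivalently, of its Choi state with the Bell state) with stabilizer states is $\frac{1+\cos\theta}{2}$; this produces the prefactor.

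Next I would use the commutation of the Pauli errors with $Z$ to generate the two robustness terms. The errors split into those commuting with $Z$, namely $\{I,Z\}$ with weight $p_0+p_z$, and those anticommuting, $\{X,Y\}$ with weight $p_x+p_y$; on the Choi state this partition is exactly the outcome of the free stabilizer measurement of $Z_1 Z_2$. Restricting to each block gives an equatorial mixed state $\frac{1}{2}(I+(1-2p)(\cos\theta X\pm\sin\theta Y))$ with $p=p_z/(p_0+p_z)$ and $p=p_y/(p_x+p_y)$ respectively (the sign being immaterial, since $Y\mapsto -Y$ is Clifford), so the bound contributes the corresponding robustness weighted by the block probability, yielding $(p_0+p_z)R(\theta,\frac{p_z}{p_0+p_z})+(p_x+p_y)R(\theta,\frac{p_y}{p_x+p_y})$. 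The last step is a single-qubit geometry computation of $R(\theta,p)$: stabilizer states form the octahedron $\{|r_x|+|r_y|+|r_z|\le 1\}$, and evaluating the resource robustness of the above state, whose $\ell_1$ Bloch norm is $|1-2p|\sqrt{2}\cos(\theta-\frac{\pi}{4})$, gives $(2-\sqrt{2})(1+|1-2p|\cos(\theta-\frac{\pi}{4}))$ outside the octahedron and $1$ inside, i.e.\ the stated $\max\{\cdot,1\}$.

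The main obstacle I anticipate is the middle step: instantiating the abstract bound of Ref.~\cite{regula2021fundamental} so that it splits cleanly into the free-overlap prefactor times the block-weighted robustness sum. Concretely I must justify that the worst-case reference reduces the problem to the two single-qubit sectors rather than a full two-qubit Choi analysis, and that the robustness bound is additive across the $\{I,Z\}$ and $\{X,Y\}$ blocks; by comparison, the overlap $\frac{1+\cos\theta}{2}$ and the octahedron evaluation of $R(\theta,p)$ are routine. I would finish by verifying that SCV under Pauli symmetry saturates the inequality, so that it is the best achievable bound, using the noiseless limit $p_0=1$ as a sanity check, since there $\frac{1+\cos\theta}{2}R(\theta,0)\ge 1$ for all $0\le\theta\le\pi/4$, consistent with exact purification.
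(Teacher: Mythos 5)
Your route is the same as the paper's: embed $\Theta^{\mathrm{cor}}_{\mathrm{Clif}}$ in a class of free (stabilizer-preserving) superchannels, invoke the fidelity--robustness bound of Ref.~\cite{regula2021fundamental} (Lemma~\ref{lem_1}), bound the free fidelity of the rotation by $\frac{1+\cos\theta}{2}$, and evaluate the robustness of $\mathcal{U}_{\mathcal{N}}$ through the $ZZ$-eigenblock decomposition of its Choi state. Your prefactor argument and the octahedron computation are sound: choosing the fixed input $\ketbra{+}$ (the paper instead uses the maximally entangled input) legitimately upper-bounds the worst-case fidelity, and your expression $(2-\sqrt{2})\qty(1+\abs{1-2p}\cos\qty(\theta-\frac{\pi}{4}))$ agrees with Eq.~\eqref{eq_lem2_2_1}, since $(\sqrt{2}+x+y)/(\sqrt{2}+1)$ rewrites to exactly that. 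One small correction: the Clifford superchannels are a strict subset of the stabilizer-preserving superchannels $\mathbb{S}$, not ``precisely'' the free ones; only the inclusion is needed, so this does not damage the bound.

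The genuine gap is the step you yourself flag as the main obstacle: justifying that $R_{\mathbb{O}}(\mathcal{U}_{\mathcal{N}})$ is bounded by the block-weighted sum $(p_0+p_z)R\qty(\theta,\frac{p_z}{p_0+p_z})+(p_x+p_y)R\qty(\theta,\frac{p_y}{p_x+p_y})$. This is precisely the paper's key technical contribution (Lemma~\ref{lem_2}), and covariance intuition does not substitute for it. Two ingredients are required. First, each block $\Pi_i\rho_{i,L}\Pi_i$ of the Choi state is the image of a single-qubit state $\rho_i$ under the Clifford encoding isometry of the two-qubit repetition code, so a single-qubit dominating stabilizer state $\rho_i\leq R_i\sigma_i$ can be encoded block by block and mixed, yielding $J_{\mathcal{U}_{\mathcal{N}}}\leq (q_0R_0+q_1R_1)\,\sigma_{\mathrm{opt}}$ with $\sigma_{\mathrm{opt}}\in\mathbb{F}$; note that robustness is only subadditive under mixing in general, so this direction must be argued through the explicit construction rather than asserted as ``additivity.'' Second---and this is the point your proposal misses entirely---$R_{\mathbb{O}}$ is not the state robustness of the Choi matrix: the dominating stabilizer state must be the Choi state of a trace-preserving map in $\mathbb{O}$, i.e.\ satisfy $\mathrm{tr}_{\mathrm{S}}[\sigma_{\mathrm{opt}}]=I/2$. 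The paper verifies this by observing that the single-qubit optimizers lie in the equatorial plane, so their encodings take the form $\frac{1}{2}\Pi_i(II+x_iXX+y_iYX)\Pi_i$, whose mixture indeed has maximally mixed marginal; without this check, Lemma~\ref{lem_1} cannot be instantiated with the claimed value. (The reverse inequalities in Lemma~\ref{lem_2}---block additivity from below via sandwiching the global optimizer with $\Pi_i$---are needed only for the exact equality and hence for the saturation claim, not for the inequality of Theorem~\ref{thm_5} itself.)
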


While the original bound in Ref.~\cite{regula2021fundamental} was derived using resource robustness~\cite{vidal1999robustness, brandao2015reversible, takagi2019general, seddon2019quantifying} of quantum \textit{channel}, our key technical contribution in proving Theorem~\ref{thm_5} is to show that the resource robustness for a noisy Pauli rotation gate $\mathcal{N}\circ\mathcal{U}$ reduces to that of a single-qubit quantum \textit{state}.
This simplification is achieved by leveraging the symmetry inherent in the Choi state of $\mathcal{N}\circ\mathcal{U}$.
For details and the proof of Theorem~\ref{thm_5}, we refer the reader to Appendix~\ref{sec_resource}.

For $\theta = \pi/4$ and assuming $\frac{p_z}{p_0+p_z}, \frac{p_y}{p_x+p_y}< \frac{2-\sqrt{2}}{4}$, Eq.~\eqref{eq_thm_5_1} simplifies to
\begin{equation}
    \label{eq_thm_5_2}
    F(\Theta^{\mathrm{cor}}_{\mathrm{Clif}}(\mathcal{U}_{\mathcal{N}}), \mathcal{U}) \leq 1-p_y-p_z.
\end{equation}
This bound is achieved by applying the SCV gadget $\Theta^{\mathrm{cor}}_{\mathcal{Q}^{\mathrm{com}}_{Z}}$ to the noisy channel, with a Pauli-$X$ operator used as feedback control.
In this case, Pauli-$X$ errors are corrected, while Pauli-$Y$ errors are transformed into Pauli-$Z$ errors.
As a result, the noise channel is transformed into $\Theta^{\mathrm{cor}}_{\mathcal{Q}^{\mathrm{com}}_{Z}}(\mathcal{N})(\cdot) = (1-p_y-p_z)\cdot + (p_y+p_z)Z\cdot Z$, whose worst-case fidelity saturates the bound in Eq.~\eqref{eq_thm_5_2}.
This result indicates that, even in the context of approximate purification of noisy channels, SCV under Pauli symmetry is an optimal protocol saturating the bound.

Besides showing the optimality of the SCV in noise purification, our result is also insightful from a resource-theoretic perspective.
In fact, the bound in Theorem~\ref{thm_5} holds for a much larger class of superchannels than the operationally motivated one that $\Theta_{\rm Clif}^{\rm cor}$ belongs to. (See Appendix~\ref{sec_resource} for details.)
Therefore, the tightness of the bound in Theorem~\ref{thm_5}---and the original bound in Ref.~\cite{regula2021fundamental} more broadly---with respect to operationally motivated classes of operations had been highly unclear.
Our results demonstrate that it can indeed be achieved using an operationally motivated superchannel implemented by our SCV protocol.
Thus, our findings highlight the practical relevance of resource theories in real-world scenarios.

\section{Discussion}
\label{sec_conclusion}
In this work, we proposed symmetric channel verification (SCV) and its hardware-efficient variant, virtual symmetric channel verification (virtual SCV), as novel methods for purifying quantum channels by leveraging the symmetry inherent in quantum operations.
These protocols generalize symmetry-based error countermeasures to the channel level, enabling noise reduction in scenarios where traditional symmetry verification~\cite{bonet2018low, mcardle2019error} is inapplicable, such as when input states or individual channels in a circuit lack shared symmetry.

SCV introduces a quantum phase estimation-like circuit that detects symmetry-breaking noise through coherent interactions with ancillary qubits.
Virtual SCV provides a hardware-efficient alternative, requiring only a single-qubit ancilla and controlled Pauli gates, which are robust against noise.
We demonstrated the effectiveness of SCV and virtual SCV in purifying quantum channels in various applications such as Hamiltonian simulation circuits and phase estimation circuits based on qubitization.
Furthermore, we discussed their implementation in early fault-tolerant regimes, showing that SCV under Pauli symmetry represents an optimal protocol when operations are restricted to Clifford unitaries.

Our results highlight the versatility and practicality of SCV and virtual SCV as tools for noise reduction in quantum computation.
Compared to traditional symmetry verification methods~\cite{bonet2018low, mcardle2019error} and their virtual variants~\cite{mcclean2020decoding, cai2021quantum, endo2022quantum, tsubouchi2023virtual}, SCV and virtual SCV offer broader applicability by addressing noise in quantum channels with varying symmetries and input states that lack symmetry.
In contrast to other channel purification protocols~\cite{lee2023error, miguel2023superposed, liu2024virtual, xiong2023circuit, debroy2020extended, gonzales2023quantum, van2023single, das2024purification}, SCV and virtual SCV require only a single noisy channel as input and make minimal assumptions about the target unitary.
Consequently, our protocols are applicable to a wider range of scenarios and are easier to implement than existing methods.

Future research could explore several promising directions.
First, while we focused on Hamiltonian simulation and phase estimation circuits as applications, our protocols can be applied to a wider range of tasks.
One example worth exploring is the purification of noisy black-box unitaries under symmetry.
In the setup of quantum amplitude amplification~\cite{brassard2000quantum} or quantum metrology~\cite{zhou2018achieving}, we are given a black-box unitary, but we may know its symmetric structure beforehand.
Our methods can be used to detect and correct errors in such black-box unitaries.
Another interesting application is demonstrating quantum advantage using IQP circuits, which have been shown to be classically hard to simulate~\cite{bremner2011classical, bremner2016average}.
Since IQP circuits possess a highly symmetric structure, our protocols can be straightforwardly applied to combat their noise.

Applications to quantum many-body physics can also be considered an important future direction.
For example, in quench dynamics and dissipative dynamics, dynamical spontaneous symmetry breaking can occur, where the symmetry of the state is reduced compared to the symmetry of the time evolution channel itself~\cite{zhou2021nonequilibrium}.
When performing quantum simulations of such phenomena, it is crucial to precisely detect symmetry breaking arising from emergent phenomena itself but not that from the effect of noise.
Therefore, SCV will play an important role in these simulations.

We can further explore the connection between channel purification and resource theory.
In this work, we restricted our analysis to situations where the unitary operations constituting the purification gadget are Clifford gates.
However, in general, allowing non-Clifford operations in the gadget is a natural extension, making it important to investigate the trade-off with magic-state consumption.
Additionally, it is crucial to analyze resource utilization in fermionic and bosonic systems, as well as the convertibility of resources when utilizing hybrid quantum devices~\cite{hahn2025bridging}.

Finally, it is crucial to develop a fault-tolerant implementation of the SCV gadget.
As discussed above, the use of flag qubits~\cite{chao2018quantum, chamberland2018flag, chao2020flag} can be effective in removing noise in the ancilla qubit.  
However, we have not yet explored how they can be employed to correct channel noise in a fault-tolerant manner.  
Developing a fault-tolerant, channel-level error-correction scheme using SCV could further broaden the applicability of our protocols.

\section*{Data availability}
The code and data that support the findings of this article are openly available~\cite{ktsubo2025github}.

\section*{Acknowledgements}
The authors wish to thank Suguru Endo, Alvin Gonzales, Yuki Koizumi, and Takahiro Sagawa for fruitful discussions.
K.T. is supported by the Program for Leading Graduate Schools (MERIT-WINGS) and JST BOOST Grant No. JPMJBS2418.
Y.M. is supported by JSPS KAKENHI Grant No. JP23KJ0421.
R.T. is supported by JSPS KAKENHI Grant No. JP23K19028, JP24K16975, JST, CREST Grant No. JPMJCR23I3, Japan, and MEXT KAKENHI Grant-in-Aid for Transformative Research Areas A ``Extreme Universe” Grant No. JP24H00943.
N.Y. is supported by JST PRESTO No. JPMJPR2119, JST Grant No. JPMJPF2221, JST CREST Grant No. JPMJCR23I4, IBM Quantum, JST ASPIRE Grant No. JPMJAP2316, JST ERATO Grant No. JPMJER2302, and Institute of AI and Beyond of the University of Tokyo.

\appendix
\section{Using flag qubits to detect idling errors in the SCV gadget}
\label{sec_flag}
As discussed in the main text, the SCV gadget is affected by idling errors, which occur in the idle ancilla qubits between unitaries $U_{\mathrm{E}}$ and $U_{\mathrm{D}}$.
Here, $U_{\mathrm{E}}$ and $U_{\mathrm{D}}$ represent the unitaries applied before and after the noisy channel $\mathcal{U}_{\mathcal{N}}$, respectively.
If the idling errors can be characterized as Pauli-$Z$ errors, they flip the measurement results of the ancilla qubits.
Assuming that the error rates in the noisy channel $\mathcal{U}_{\mathcal{N}}$ and the idling errors are small, such Pauli-$Z$ errors primarily introduce false positives, which do not affect the purification accuracy after post-selection to first order in the error rate.
In contrast, Pauli-$X$ errors on the ancilla qubits can greatly degrade the computational performance.
This is because such errors do not affect the measurement outcomes on the ancilla and are thus undetectable, but they can propagate to the system qubits and introduce additional errors.

To reduce the effect of these detrimental Pauli-$X$ errors, we apply an additional SCV gadget $\Theta^{\mathrm{det}}_Z$ to the idling ancilla qubits.
This can be implemented by adding an extra ancilla qubit for each ancilla qubit in the original SCV gadget, as depicted in Fig.~\ref{fig_SCV_flag}.
These additional qubits are called \textit{flag qubits} in the field of quantum error correction~\cite{chao2018quantum, chamberland2018flag, chao2020flag}.
As discussed in Sec.~\ref{sec_SCV_Pauli}, the flag qubits composing the SCV gadget $\Theta^{\mathrm{det}}_Z$ can detect Pauli-$X$ errors.
Moreover, idling errors in the flag qubits can also be detected by either the original ancilla qubits or the flag qubits themselves.
Therefore, by using flag qubits, the accuracy of the post-selected quantum channel is preserved to first order in the error rate.
Note that the accuracy can still be affected at second order.
For example, if errors occur simultaneously on a system qubit and an idling ancilla qubit, they can cause a false negative, where the channel $\mathcal{U}$ is affected by an error but the error remains undetected due to an error in the ancilla qubit.

\begin{figure}[t]
    \begin{center}
        \includegraphics[width=0.99\linewidth]{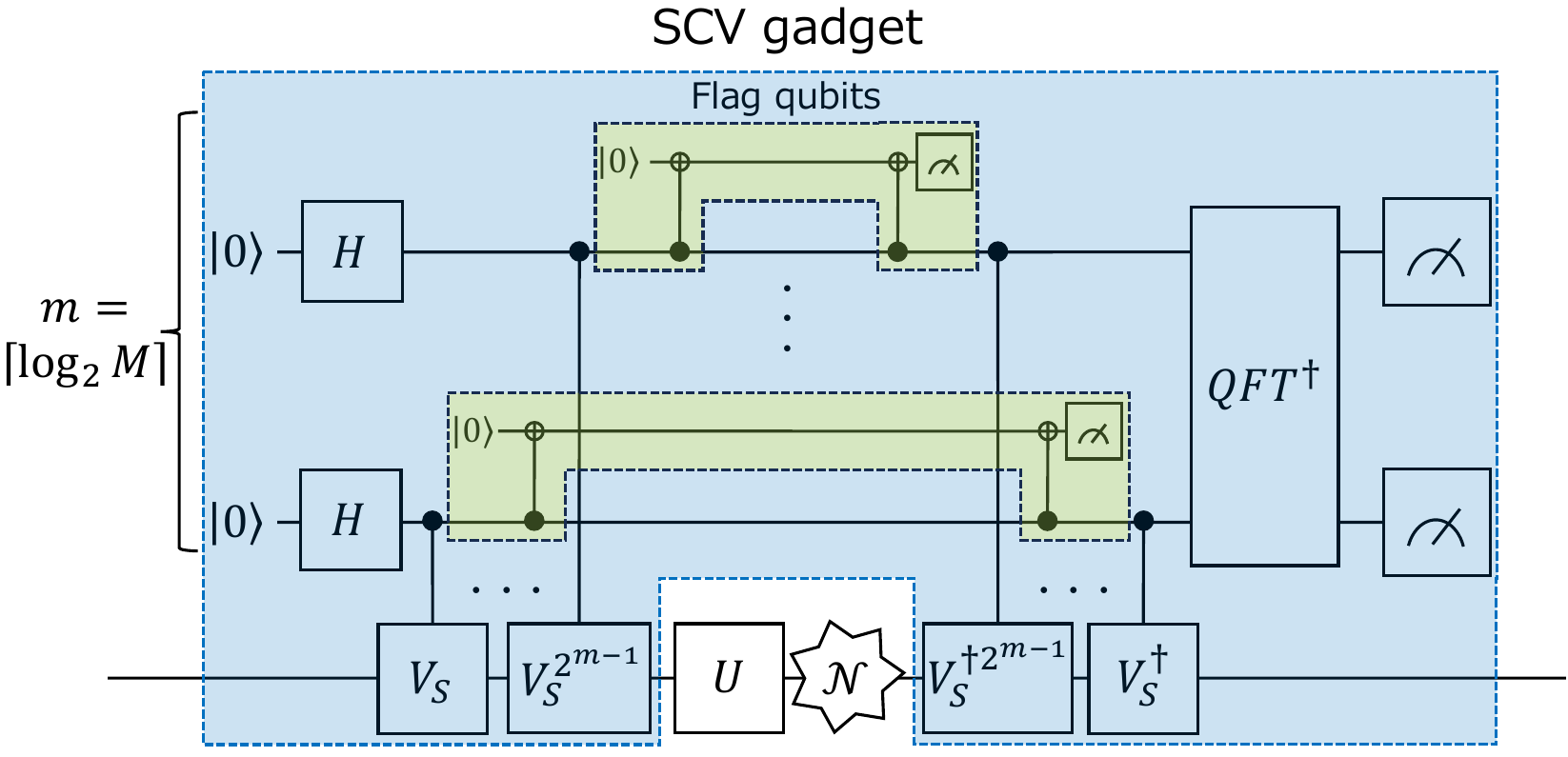}
        \caption{Circuit structure of SCV with flag qubits. A flag qubit is added to each ancilla qubit in the original SCV gadget depicted in Fig.~\ref{fig_SCV} to detect Pauli-$X$ errors on the ancilla qubits.}
        \label{fig_SCV_flag}
    \end{center}
\end{figure}

While idling errors on the ancilla qubits can be detected using flag qubits, this introduces additional sampling overhead in the SCV protocol.
Consider the case where the noise can be expressed as $\mathcal{N}(\cdot) = (1-p)\cdot + p \sum_j N_j \cdot N_j^\dag$ with $\sum_i \Pi_i N_j \Pi_i = 0$, where $\Pi_i$ are projectors onto the eigenspaces of a symmetric operator $S$ under consideration.
This noise breaks the symmetry of a unitary channel $\mathcal{U}$ with probability $p$.
In the absence of idling errors, we obtain a noiseless purified channel with probability $1-p$, resulting in a sampling overhead of approximately $(1-p)^{-1} \sim (1+p)$.
Meanwhile, if idling errors occur with probability $p_{\mathrm{idle}}$ for each ancilla qubit, we obtain a non-trivial measurement result with probability $1-p-2m p_{\mathrm{idle}}$, to first order in $p$ and $p_{\mathrm{idle}}$.
Thus, the sampling overhead increases from $(1+p)$ to $(1+p+2m p_{\mathrm{idle}})$ in the presence of idling errors.

The use of a flag qubit can be regarded as encoding an ancilla qubit into a distance-2 repetition code, which is robust against Pauli-$X$ errors.  
In this way, Pauli-$X$ errors can be detected using flag qubits.  
More generally, one could consider an encoding with a higher code distance, which would enable not only the detection but also the correction of Pauli-$X$ errors.  
We expect that such error correction would reduce the contribution of $2m p_{\mathrm{idle}}$ to the sampling overhead.

\begin{figure}[t]
    \begin{center}
        \includegraphics[width=0.85\linewidth]{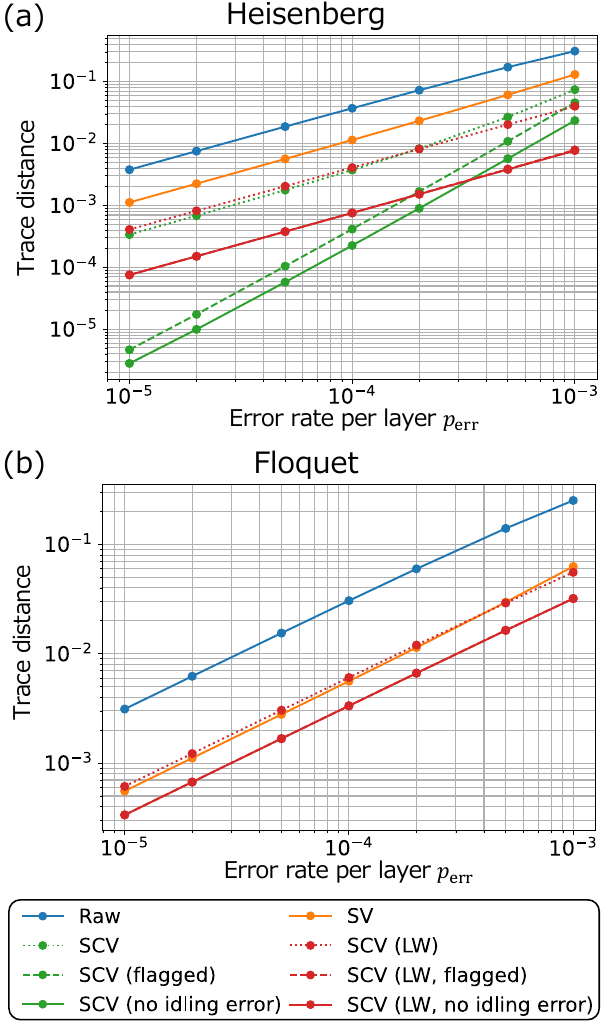}
        \caption{Performance of SCV in detecting errors in Hamiltonian simulation circuits for (a) the 1D Heisenberg model and (b) Floquet dynamics, in the presence of idling errors on ancilla qubits. ``Raw'' denotes the trace distance between the noisy quantum state and the ideal quantum state, while ``SV'' represents the result for the quantum state purified by symmetry verification. The lines labeled ``SCV'' show the results for SCV, with green lines indicating the case where SCV is applied to the entire noisy circuit $\bigcirc_{l=1}^L \mathcal{N}_l \circ \mathcal{U}_l$, and red lines labeled ``LW'' indicating the layer-wise application of SCV. Dotted lines correspond to the case without flag qubits, whereas dashed lines labeled ``flagged'' correspond to the case with flag qubits. The solid line labeled ``no idling error'' shows the results without idling errors. Note that the lines for ``SCV (LW, flagged)'' and ``SCV (LW, no idling error)'' overlap, indicating that the use of flag qubits successfully suppresses the impact of idling errors up to the first order in the error rate.
        }
        \label{fig_numerics_Pauli_flag}
    \end{center}
\end{figure}

\begin{figure}[t]
    \begin{center}
        \includegraphics[width=0.85\linewidth]{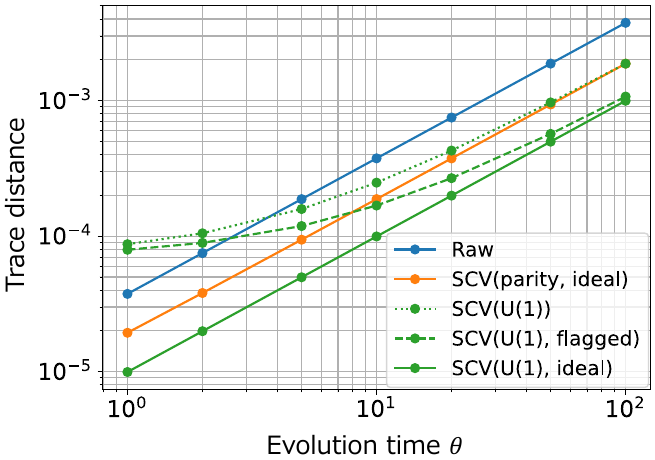}
        \caption{Performance of SCV in detecting errors in a Hamiltonian simulation circuit for the H$_2$ molecule, in the presence of idling errors on the ancilla qubits. ``Raw'' denotes the trace distance between the noisy quantum state and the ideal quantum state. ``SCV (parity, ideal)'' shows the result using SCV for the parity operator $\prod_i Z_i$, assuming a noiseless SCV gadget. ``SCV (U(1))'', ``SCV (U(1), flagged)'', and ``SCV (U(1), ideal)'' show the results using SCV for the particle number operator $\sum_i Z_i$, where the dotted ``SCV (U(1))'' line corresponds to the case without flag qubits, the dashed ``SCV (U(1), flagged)'' line corresponds to the case with flag qubits, and the solid ``SCV (U(1), ideal)'' line corresponds to the case with a noiseless SCV gadget.}
        \label{fig_numerics_U1_flag}
    \end{center}
\end{figure}

In the numerical simulations for SCV under Pauli symmetry (Sec.~\ref{sec_SCV_Pauli}) and particle-number-conservation symmetry (Sec.~\ref{sec_SCV_U1}), we have neglected idling errors on the ancilla by assuming that logical qubits with a higher code distance are used for the ancilla. 
When this assumption does not hold, i.e., when ancilla qubits are affected by idling errors comparable to those on the system qubits, SCV can still remain effective by introducing flag qubits.

We first present the results of numerical simulations for SCV under Pauli symmetry in the presence of idling-ancilla noise, shown in Fig.~\ref{fig_numerics_Pauli_flag}. 
The simulation setup is almost the same as that in Sec.~\ref{sec_SCV_Pauli}, except that we introduce local depolarizing noise for all ancilla (including flag qubits) with an error rate of $p_{\mathrm{err}}$ per layer. 
We set $n=4$ for the Heisenberg model and $n=2$ for the Floquet dynamics. 
Without flag qubits, idling errors on the ancilla significantly degrade the performance of SCV. 
However, introducing flag qubits restores the performance, as they can detect idling errors up to the first order in the error rate. 
The remaining difference between the performance of SCV without idling errors and that with flag qubits appears at second order. 
We expect that this difference can be suppressed to arbitrary order by increasing the number of flag qubits, as proposed in Refs.~\cite{chamberland2018flag, chao2020flag}.

We next present the results of numerical simulations for SCV under particle-number-conservation symmetry in the presence of idling-ancilla noise, shown in Fig.~\ref{fig_numerics_U1_flag}. 
The setup is almost the same as in Sec.~\ref{sec_SCV_U1}, except that we introduce local depolarizing noise for all ancilla (including flag qubits) with an error rate of $p_{\mathrm{err}}$ per layer. 
Without flag qubits, SCV for the particle number operator $\sum_i Z_i$ does not surpass the performance of the parity operator $\prod_i Z_i$. 
However, with flag qubits, it achieves performance comparable to the noiseless case for large evolution times, consistent with the trends observed in Sec.~\ref{sec_SCV_U1}.

We expect that flag qubits will also be effective for correcting errors in quantum channels using SCV, as discussed in Sec.~\ref{sec_SCV_correct}. 
However, as in the case of state-level fault-tolerant protocols using flag qubits~\cite{chao2018quantum, chamberland2018flag, chao2020flag}, this would require the construction of a complex fault-tolerant protocol, which is beyond the scope of this work. 
We leave the development of a fault-tolerant error-correction protocol at the channel level as an interesting direction for future research.

\section{Applicability of SCV in near-term quantum devices}
\label{sec_NISQ}

In the main text, we primarily considered using SCV in the early fault-tolerant regime.
In this regime, Clifford unitaries are easier to implement than non-Clifford unitaries because they do not require gate synthesis.
Accordingly, in the numerical simulations in the main text, we assumed that if $U_{\mathrm{E}}$ and $U_{\mathrm{D}}$ composing the SCV gadget are Clifford unitaries, they are affected by local depolarizing noise with an error rate of $p_{\mathrm{err}}/100$.
However, in near-term quantum devices, this assumption does not hold: Clifford and non-Clifford gates are expected to have comparable error rates.
Even in such cases, our SCV protocol remains effective, provided that the target noisy unitary $\mathcal{U}_{\mathcal{N}}$ contains significantly more quantum gates than $U_{\mathrm{E}}$ and $U_{\mathrm{D}}$ in the SCV gadget.

\begin{figure}[t]
    \begin{center}
        \includegraphics[width=0.85\linewidth]{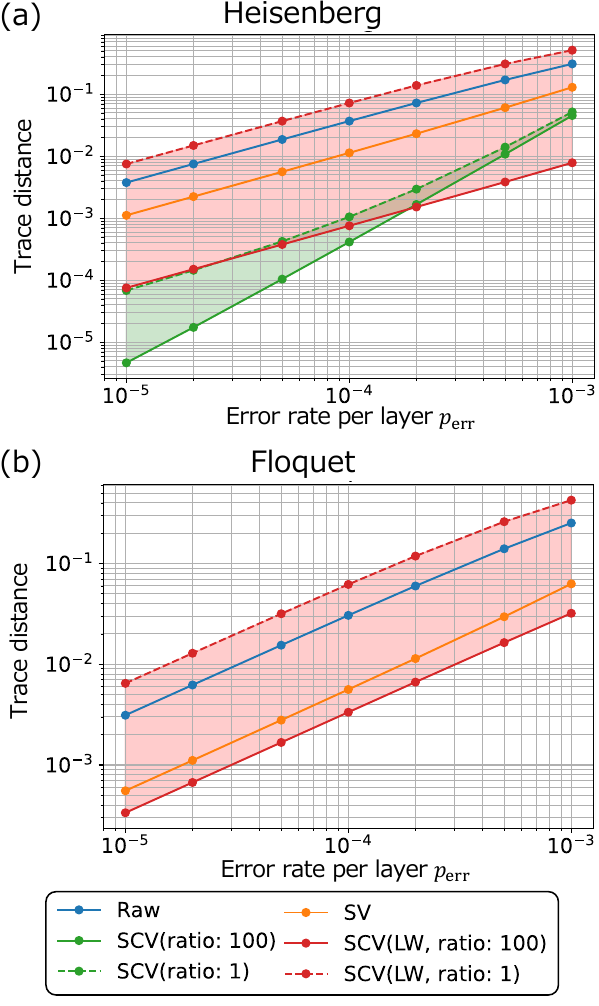}
        \caption{Performance of SCV in detecting errors in Hamiltonian simulation circuits for (a) the 1D Heisenberg model and (b) Floquet dynamics, with varying error rates on the SCV gadget. ``Raw'' denotes the trace distance between the noisy quantum state and the ideal quantum state, while ``SV'' represents the result for the quantum state purified by symmetry verification. The lines labeled ``SCV'' show the results for SCV, with green lines indicating the case where SCV is applied to the entire noisy circuit $\bigcirc_{l=1}^L \mathcal{N}_l \circ \mathcal{U}_l$, and red lines labeled ``LW'' indicating the layer-wise application of SCV. The solid lines labeled ``ratio: 100'' represent the case where $U_{\mathrm{E}}$ and $U_{\mathrm{D}}$ in the SCV gadget are affected by an error rate of $p_{\mathrm{err}}/100$, while the dashed lines labeled ``ratio: 1'' correspond to the case where they are affected by an error rate of $p_{\mathrm{err}}$.
        }
        \label{fig_numerics_Pauli_NISQ}
    \end{center}
\end{figure}

To demonstrate this, we performed numerical simulations of SCV under Pauli symmetry, varying the ratio of the error rates between the noisy unitary $\bigcirc_{l=1}^L \mathcal{N}_l \circ \mathcal{U}_l$ and the unitaries $U_{\mathrm{E}}$ and $U_{\mathrm{D}}$ in the SCV gadget from 100 to 1.
A ratio of 100 corresponds to the early fault-tolerant regime, while a ratio of 1 corresponds to the near-term regime.
The simulation setup is the same as in Appendix~\ref{sec_flag} where we use flag qubits, except that we vary the error rate of $U_{\mathrm{E}}$ and $U_{\mathrm{D}}$.
The results are shown in Fig.~\ref{fig_numerics_Pauli_NISQ}.
As seen in the figure, when the SCV gadget is applied to the entire noisy unitary $\bigcirc_{l=1}^L \mathcal{N}_l \circ \mathcal{U}_l$, which contains many more gates than $U_{\mathrm{E}}$ and $U_{\mathrm{D}}$, SCV remains effective.
Since we have set $L=100$, we observe an error reduction of approximately $2/100 = 1/50$ in the small-error regime.
In contrast, when the noisy SCV gadget is applied to each layer, the performance becomes worse than doing nothing.
This indicates that, in the near-term regime, the target unitaries must contain more gates than the SCV gadgets for the noise introduced by the SCV gadget to be negligible.

\begin{figure*}[t]
    \begin{center}
        \includegraphics[width=0.7\linewidth]{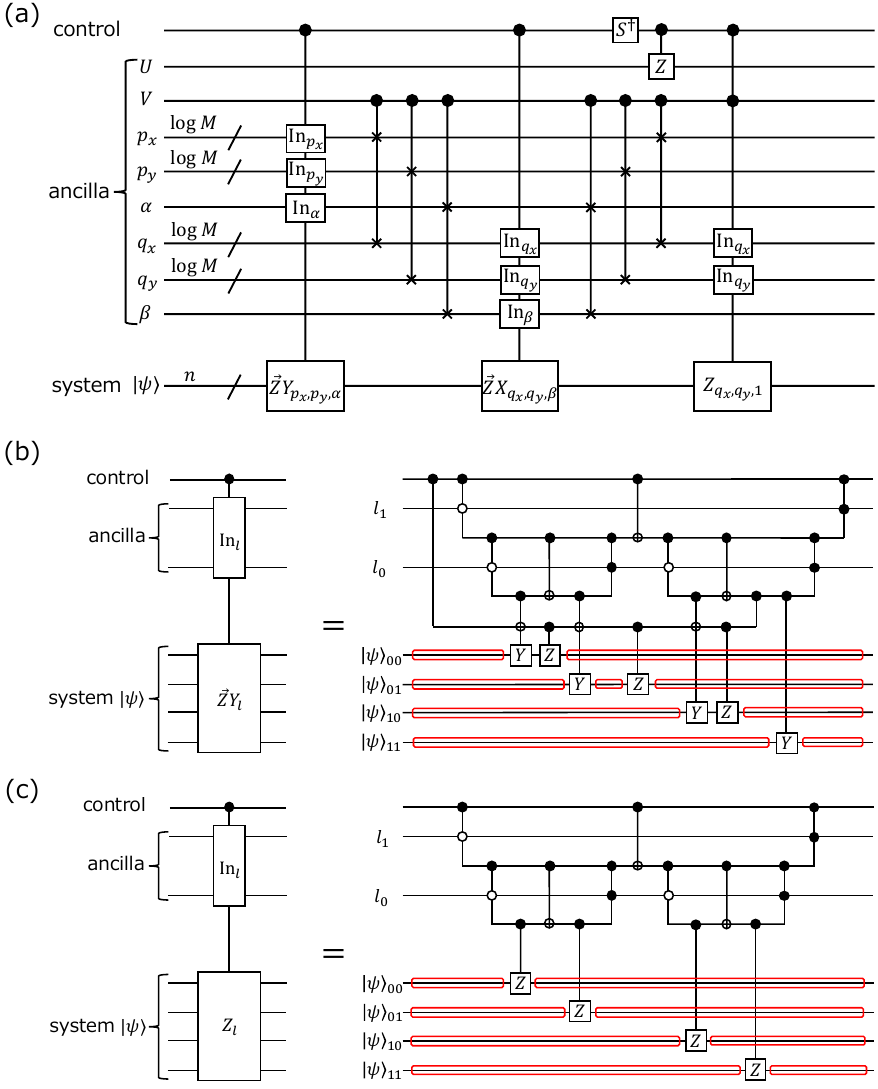}
        \caption{\texttt{SELECT} operation of the 2D Fermi-Hubbard model on a square lattice with linear system size $M$ and system qubit count $n = 2M^2$ that use the adder to iterate the indices of ancilla~\cite{babbush2018encoding, gidney2018halving}. Panel (a) represents the entire circuit, while panels (b) and (c) represent the subcircuits shown in panel (a). The system qubits include many idling qubits, illustrated in red. Idling errors occurring here can be mitigated using virtual SCV. Note that ${\rm In}_l$ indicates that the operation is controlled by $l$-th computational basis of ancilla system, $p_x, p_y, q_x$ and $q_y$ denote the two-dimensional site indices, $\alpha$ and $\beta$ indicate the spin of fermions, and $U$ and $V$ are introduced to discriminate between $Z$ and $ZZ$ contribution to the Hamiltonian~\cite{babbush2018encoding}.
        }
        \label{fig_select_circuit}
    \end{center}
\end{figure*}

\section{\texttt{SELECT} operation of the 2D Fermi-Hubbard model on a square lattice}
\label{sec_select}
In this section, we explain the \texttt{SELECT} operation for the 2D Fermi-Hubbard model on a square lattice with linear system size $M$ and system qubit count $n = 2M^2$, which we have used to demonstrate the performance of virtual SCV in Sec.~\ref{sec_VSCV_idle}.
The Hamiltonian of the Fermi-Hubbard model is given by
\begin{equation}
    H = -t \sum_{\langle p,q \rangle, \sigma} a_{p,\sigma}^{\dagger} a_{q,\sigma} + \frac{u}{2} \sum_{p, \alpha \neq \beta} n_{p,\alpha} n_{p,\beta},
\end{equation}
where $a_{p,\sigma}^{(\dagger)}$ indicates the fermionic annihilation (creation) operator on the site $p$ with spin $\sigma$, $t$ is the hopping amplitude, and $u$ is the magnitude of the onsite interaction.
Here, the summation $\sum_{\langle p,q \rangle}$ runs over adjacent pairs on a square lattice under periodic boundary conditions. 

Under the Jordan-Wigner transformation, this Hamiltonian is mapped to a qubit Hamiltonian as
\begin{equation}
    \begin{aligned}
        H = -\frac{t}{2} &\sum_{\langle p,q \rangle, \sigma} \left( X_{p,\sigma} \overrightarrow{Z} X_{q,\sigma} + Y_{p,\sigma} \overrightarrow{Z} Y_{q,\sigma} \right) \\
        &+ \frac{u}{8} \sum_{p, \alpha \neq \beta} Z_{p,\alpha} Z_{p,\beta} - \frac{u}{4} \sum_{p, \sigma} Z_{p,\sigma},
    \end{aligned}
\end{equation}
where we have neglected the constant shift of energy. Also, the notation $\vec{Z}$ indicates the so-called ``$Z$-string." Namely, $X_{p, \sigma} \vec{Z} X_{q, \sigma}$ is a product of Pauli-$X$ operators and Pauli-$Z$ operators whose indices run between integers that are used to uniquely label the pairs $(p, \sigma)$ and $(q, \sigma).$

As proposed in Ref.~\cite{babbush2018encoding}, the \texttt{SELECT} operation for this Hamiltonian can be implemented by the circuit shown in Fig.~\ref{fig_select_circuit}.
As illustrated in red, the system qubits include many idling qubits.
By applying virtual SCV to these red idling qubits, idling errors can be mitigated, as discussed in Sec.~\ref{sec_VSCV_idle}.

Notably, the same ancilla qubit of the virtual SCV gadget can be used to perform virtual SCV on all these idling qubits.
This is because the ancilla qubit in Fig.~\ref{fig_VSCV} remains in the state $\ket{+}$ and always serves as the control qubit, determining whether to apply $P_i$ or $P_j$ before the noisy channel $\mathcal{U}_{\mathcal{N}}$, and $P_k$ or $P_l$ after the noisy channel.
More precisely, the ancilla qubit is first prepared in $\ketbra{+} = \frac{1}{2}(\ketbra{0}{0} + \ketbra{0}{1} + \ketbra{1}{0} + \ketbra{1}{1})$.
For the terms with $\ket{0}$ (or $\ket{1}$), $P_i$ and $P_k$ (or $P_j$ and $P_l$) are applied to $\mathcal{U}_{\mathcal{N}}(\cdot)$ from the left, while for terms with $\bra{0}$ (or $\bra{1}$), the same Pauli operators are applied from the right.  
Averaging over an observable $X$ leaves only the off-diagonal terms $\ketbra{0}{1}$ and $\ketbra{1}{0}$, leading to Eq.~\eqref{eq_virtual_supermap}.
Since the ancilla’s sole role is to determine which Pauli operations are implemented, the same ancilla in the $\ket{+}$ state can be sequentially reused for all system qubits.  
This remains true even when idle times are distributed across different qubits, as the virtual SCV operations can be scheduled so that the ancilla interacts with each idle qubit at the appropriate time.

For example, consider the circuit in Fig.~\ref{fig_idling_error_multi}.
If the input state to the circuit is $\rho_1 \otimes \rho_2$, the state before the measurement can be written as
\begin{equation}
    \begin{aligned}
        \frac{1}{2}(P_{k_1}\mathcal{N}_1(P_{i_1}\rho_1P_{i_1})P_{k_1} \otimes P_{k_2}\mathcal{N}_2(P_{i_2}\rho_2P_{i_2})P_{k_2} &\otimes \ketbra{0}{0} \\
        +P_{l_1}\mathcal{N}_1(P_{j_1}\rho_1P_{i_1})P_{k_1} \otimes P_{l_2}\mathcal{N}_2(P_{j_2}\rho_2P_{i_2})P_{k_2} &\otimes \ketbra{1}{0} \\
        +P_{k_1}\mathcal{N}_1(P_{i_1}\rho_1P_{j_1})P_{l_1} \otimes P_{k_2}\mathcal{N}_2(P_{i_2}\rho_2P_{j_2})P_{l_2} &\otimes \ketbra{0}{1} \\
        +P_{l_1}\mathcal{N}_1(P_{j_1}\rho_1P_{j_1})P_{l_1} \otimes P_{l_2}\mathcal{N}_2(P_{j_2}\rho_2P_{j_2})P_{l_2} &\otimes \ketbra{1}{1}).
    \end{aligned}
\end{equation}
By averaging over the operator $X$, we obtain
\begin{equation}
    \begin{aligned}
        \frac{1}{2}(P_{l_1}\mathcal{N}_1(P_{j_1}\rho_1P_{i_1})P_{k_1} \otimes P_{l_2}\mathcal{N}_2(P_{j_2}\rho_2P_{i_2})P_{k_2} \\
        +P_{k_1}\mathcal{N}_1(P_{i_1}\rho_1P_{j_1})P_{l_1} \otimes P_{k_2}\mathcal{N}_2(P_{i_2}\rho_2P_{j_2})P_{l_2}).
    \end{aligned}
\end{equation}
Next, by uniformly sampling Pauli operators from $\mathcal{Q}_I^{\mathrm{com}} = \{I, X, Y, Z\}$ and averaging, we obtain
\begin{equation}
    \begin{aligned}
        \Theta^{\mathrm{det}}_{\mathcal{Q}^{\mathrm{com}}_{I}}(\mathcal{N}_1)(\rho_1) \otimes \Theta^{\mathrm{det}}_{\mathcal{Q}^{\mathrm{com}}_{I}}(\mathcal{N}_2)(\rho_2) \propto \rho_1 \otimes \rho_2,
    \end{aligned}
\end{equation}
which shows that idling errors on both qubits are successfully mitigated.
Thus, only a single additional ancilla qubit is required to implement virtual SCV, regardless of the number of system qubits $n$.

\begin{figure}[t]
    \begin{center}
        \includegraphics[width=0.99\linewidth]{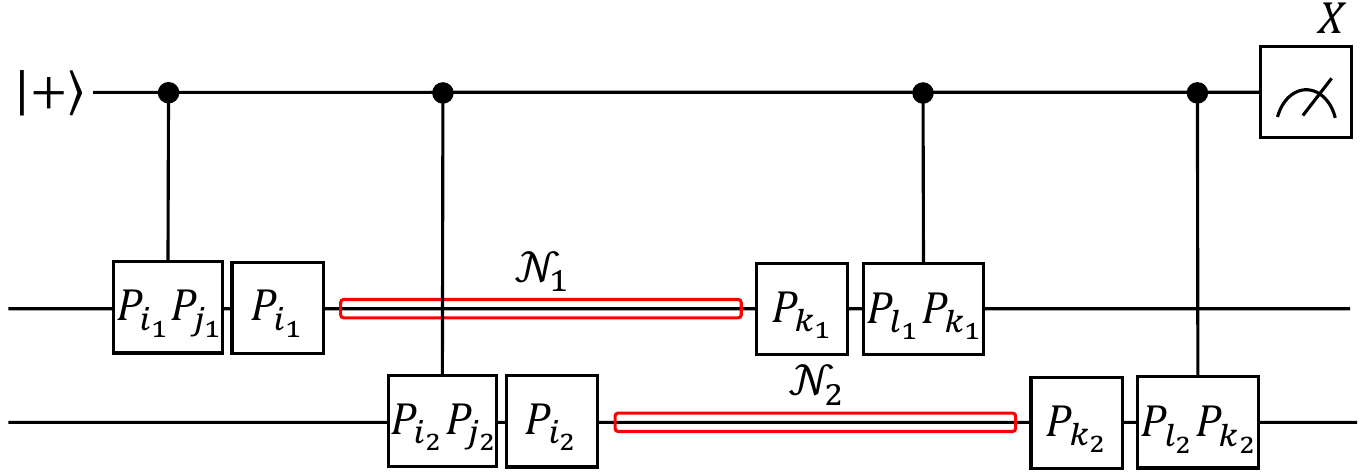}
        \caption{Schematic illustration of mitigating idling errors on multiple system qubits using a single ancilla qubit. Idling errors $\mathcal{N}_1$ and $\mathcal{N}_2$, shown in red, can be mitigated by implementing virtual SCV, where the same ancilla qubit is reused across multiple qubits.}
        \label{fig_idling_error_multi}
    \end{center}
\end{figure}

\section{Proof of the main results}
\label{sec_proof}
In this section, we provide detailed proofs of the theorems and corollaries stated in the main text, except Theorem~\ref{thm_5}.

\subsection{Proof of Theorem~\ref{thm_1}}
We begin with the proof of Theorem~\ref{thm_1}.
\begin{proof}
    As discussed in the main text, we have
    \begin{equation}
        (\Theta^{\mathrm{det}}_S(\mathcal{U}_\mathcal{N}))(\cdot) = \sum_{ij}\Pi_i\mathcal{U}_\mathcal{N}(\Pi_i \cdot \Pi_j)\Pi_j.
    \end{equation}
    Therefore, it suffices to show that $\sum_{ij}\Pi_i\mathcal{U}_\mathcal{N}(\Pi_i \cdot \Pi_j)\Pi_j \propto \mathcal{U}$.
    From the assumption $\sum_i \Pi_i N_j \Pi_i \propto I_n$, we have
    \begin{equation}
        \begin{aligned}
            \sum_{ij}\Pi_i\mathcal{U}_\mathcal{N}(\Pi_i \cdot \Pi_j)\Pi_j
            &=\sum_{ijk} \Pi_i N_k\mathcal{U}(\Pi_i \cdot \Pi_j)N_k^\dag\Pi_j \\
            &=\sum_{ijk} \Pi_i N_k\Pi_i\mathcal{U}(\cdot)\Pi_jN_k^\dag\Pi_j \\
            &\propto \mathcal{U}(\cdot),
        \end{aligned}
    \end{equation}
    completing the proof.
\end{proof}

\subsection{Proof of Corollary~\ref{cor_1}}
Next, we proceed with the proof of Corollary~\ref{cor_1}.
\begin{proof}
    From Theorem~\ref{thm_1}, $\Theta^{\mathrm{det}}_{\mathcal{Q}^{\mathrm{com}}_{U}}(\mathcal{U}_\mathcal{N}) = p_0\mathcal{U}$ implies
    \begin{equation}
        \label{eq_proof_2_1}
        \sum_jp_j\sum_{i_1\cdots i_r, i_1'\cdots i_r'} P_{j,i_1\cdots i_r}\mathcal{U}(\cdot) P_{j,i_1\cdots i_r} =  p_0\mathcal{U}(\cdot),
    \end{equation}
    where $P_{j,i_1\cdots i_r} = \Pi_{i_1}^1 \cdots \Pi_{i_r}^r P_j \Pi_{i_r}^r \cdots \Pi_{i_1}^1$ and $\Pi_{i_k}^k = \frac{1}{2}(I_n + (-1)^{i_k} Q_{k})$ represents the projector onto the eigenspace of $Q_k$.
    Since $P_{j,i_1\cdots i_r}$ satisfies
    \begin{equation}
        \sum_{i_1\cdots i_r}P_{j,i_1\cdots i_r}
        =
        \begin{cases}
            P_j  & (\forall k\in\{1,\ldots r\}, [P_j, Q_k] = 0), \\
            0    & (\mathrm{otherwise}),
        \end{cases}
    \end{equation}
    Eq.~\eqref{eq_proof_2_1} holds if and only if for all $P_j \in \mathcal{P}_{\mathcal{N}}^{\mathrm{det}}$, there exists a generator $Q_k\in\mathcal{Q}^{\mathrm{com}}_{U}$ that anti-commutes with $P_j$.
    Therefore, we obtain Eq.~\eqref{eq_cor_1_1}.

    To derive Eq.~\eqref{eq_cor_1_2}, it suffices to show
    \begin{equation}
        \label{eq_proof_2_2}
        P \notin \mathcal{Q}_{U} \Leftrightarrow \exists Q\in\mathcal{Q}_{U}^{\mathrm{com}}, [P, Q]\neq 0.
    \end{equation}
    To establish this equivalence, let us decompose $\mathcal{Q}_{U}$ as
    \begin{equation}
        \mathcal{Q}_{U} = W^\dag \qty(\qty{I,X,Y,Z}^{\otimes n_1} \otimes \qty{I,Z}^{\otimes n_2} \otimes \qty{I}^{\otimes n_3}) W,
    \end{equation}
    where $W$ is an $n$-qubit Clifford operator and $n_1+n_2+n_3 = n$.
    The existence of such a decomposition was established in Ref.~\cite{mitsuhashi2023clifford}.
    Since $\mathcal{Q}_{U}^{\mathrm{com}}$ can also be represented as 
    \begin{equation}
        \mathcal{Q}_{U}^{\mathrm{com}} = \{Q\in\mathcal{P}_n ~|~ \forall P \in \mathcal{Q}_{U}, [P, Q] = 0\},
    \end{equation}
    we have
    \begin{equation}
        \mathcal{Q}_{U}^{\mathrm{com}} = W^\dag \qty(\qty{I}^{\otimes n_1} \otimes \qty{I,Z}^{\otimes n_2} \otimes \qty{I,X,Y,Z}^{\otimes n_3}) W.
    \end{equation}
    Therefore, $P\in \mathcal{Q}_{U}$ if and only if all $Q\in\mathcal{Q}_{U}^{\mathrm{com}}$ commute with $P$, establishing Eq.~\eqref{eq_proof_2_2}.
\end{proof}

\subsection{Proof of Theorem~\ref{thm_3'}}
Then, we provide the proof of Theorem~\ref{thm_3'}.
\begin{proof}
    Let $U_{\mathrm{E}}$ be the initial half of the SCV gadget $\Theta_S^{\mathrm{det}}$ depicted in Fig.~\ref{fig_SCV}.
    Then, we have
    \begin{equation}
        \label{eq_proof_3'_1}
        U_{\mathrm{E}} \ket{\psi}\ket{0^m} = \frac{1}{2^{m/2}}\sum_{jk}\exp[\frac{2\pi i}{2^m}jk]\Pi_j\ket{\psi}\ket{k}
    \end{equation}
    for $n$-qubit quantum state $\ket{\psi}$.
    Using this $U_{\mathrm{E}}$ and some $n+m$ qubit unitary $U_{\mathrm{D}}$, we define a quantum superchannel $\Theta^{\mathrm{cor}}_S$ as in Eq.~\eqref{eq_purification_correction}.
    Since $U_{\mathrm{E}}$ commutes with $U$, we have
    \begin{equation}
        \label{eq_proof_3'_2}
        \Theta^{\mathrm{cor}}_S (\mathcal{U}_{\mathcal{N}}) = (\Theta^{\mathrm{cor}}_S (\mathcal{N}))\circ \mathcal{U}.
    \end{equation}
    
    Let us further define a code space $\mathcal{C}$ as a subspace of $(n+m)$-qubit system spanned by $U_{\mathrm{E}} \ket{\psi}\ket{0^m}$, where $\ket{\psi}$ is an arbitrary $n$-qubit quantum state.
    Then, from the quantum error-correction condition~\cite{nielsen2010quantum, gottesman2016surviving}, there exists $U_{\mathrm{D}}$ such that $\Theta^{\mathrm{cor}}_S(\mathcal{N}) = \mathcal{I}$ if and only if for all $n$-qubit quantum state $\ket{\psi}$ and $\ket{\phi}$ and $j,k\in\{0,\ldots,K\}$,
    \begin{equation}
        \label{eq_proof_3'_3}
         \bra{\psi} \bra{0^m} U_{\mathrm{E}}^\dag (N_j^\dag N_k\otimes I_{\mathrm{a}})U_{\mathrm{E}} \ket{\phi}\ket{0^m} = C_{ij} \braket{\psi}{\phi}.
    \end{equation}
    Here, $C_{ij}$ is a complex number independent of $\ket{\psi}$ and $\ket{\phi}$ and $\mathcal{I}$ is a identity channel on the $n$-qubit system.
    
    From Eq.~\eqref{eq_proof_3'_1}, the left-hand side of Eq.~\eqref{eq_proof_3'_3} can be represented as
    \begin{equation}
        \label{eq_proof_3'_4}
        \sum_i \bra{\psi} \Pi_iN_j^\dag N_k\Pi_i \ket{\phi}.
    \end{equation}
    Therefore,  Eq.~\eqref{eq_proof_3'_3} is equivalent to 
    \begin{equation}
        \sum_i \Pi_iN_j^\dag N_k\Pi_i \propto I_n,
    \end{equation}
    which completes the proof.
\end{proof}

\subsection{Proof of Corollary~\ref{cor_2}}
Then, we proceed with the proof of Corollary~\ref{cor_2}.

\begin{proof}
    From Theorem~\ref{thm_3'}, for Pauli noise $\mathcal{N}(\cdot) = \sum_{i=0}^K p_i P_i\cdot P_i$, we have $\Theta^{\mathrm{cor}}_{\mathcal{Q}^{\mathrm{com}}_{U}}(\mathcal{U}_{\mathcal{N}}) = \mathcal{U}$ if and only if for all $j,k\in\{0,\ldots.K\}$,
    \begin{equation}
        \label{eq_proof_4'_1}
        \sum_{i_1\cdots i_r} \Pi_{i_1}^1 \cdots \Pi_{i_r}^r P_jP_k \Pi_{i_r}^r \cdots \Pi_{i_1}^1 \propto I_n,
    \end{equation}
    where $\Pi_{i_k}^k = \frac{1}{2}(I_n + (-1)^{i_k} Q_{k})$.
    For $P_jP_k\neq I_n$, Eq.~\eqref{eq_proof_4'_1} holds if and only if there exists $Q\in\mathcal{Q}_U^{\mathrm{com}}$ that anti-commutes with $P_jP_k$.
    From Eq.~\eqref{eq_proof_2_2} in the proof of Corollary~\ref{cor_1}, this is equivalent to 
    \begin{equation}
        \label{eq_proof_4'_2}
        \pm P_jP_k, \pm iP_jP_k \notin \mathcal{Q}_U.
    \end{equation}
    Note that we have added a phase because $P_jP_k$ may not be a Pauli operator in the set $\mathcal{P}_n = \{I,X,Y,Z\}^{\otimes n}$ without the phase.
    Therefore, we arrive at the conclusion of Corollary~\ref{cor_2}.
\end{proof}

\subsection{Proof of Theorem~\ref{thm_3}}
Then, we provide the proof of Theorem~\ref{thm_3}.

\begin{proof}
    We first prove the necessary condition.
    Let us define a code space $\mathcal{C}$ as a subspace of $(n+m)$-qubit system spanned by $U_{\mathrm{E}} \ket{\psi}\ket{0^m}$, where $\ket{\psi}$ is an arbitrary $n$-qubit quantum state.
    We define the stabilizer group of the code space $\mathcal{C}$ as $\mathcal{S} = U_E (\qty{I}^{\otimes n } \otimes \qty{I,Z}^{\otimes m})U_E^\dag$, a set of logical Pauli operators as $\bar{\mathcal{P}}_n = U_E (\mathcal{P}_n \otimes \qty{I}^{\otimes m})U_E^\dag$, and a set $N(\mathcal{S}) = U_E (\mathcal{P}_n \otimes \qty{I,Z}^{\otimes m})U_E^\dag$.
    We also define sets with additional phase of $\pm 1$ as $\mathcal{S}^{\pm} = \qty{\pm1}\cdot\mathcal{S}$ and $N(\mathcal{S})^{\pm} = \qty{\pm1}\cdot N(\mathcal{S})^{\pm}$.
    Note that $N(\mathcal{S})^{\pm} = \bar{\mathcal{P}}_n\cdot\mathcal{S}^{\pm} = \qty{\pm1}\cdot\qty{P\in\mathcal{P}_{n+m}\;|\;\forall Q\in\mathcal{S}, [P,Q]=0}$.

    When we set $\theta = 0$, we have
    \begin{equation}
        \label{eq_proof_3_1}
        \Theta^{\mathrm{det}}_{\mathrm{Clif}}(\mathcal{N}_{P_i}) = \delta_{i0}\mathcal{I}_n.
    \end{equation}
    This means that the code space $\mathcal{C}$ can detect the set of errors $\{P_i\otimes I^{\otimes m}\}_{i=1}^{K}$.
    Therefore, from the quantum error-detection condition~\cite{nielsen2010quantum, gottesman2016surviving}, we obtain
    \begin{equation}
        \label{eq_proof_3_2}
        \forall P_i \in \mathcal{P}_{\mathcal{N}}^{\mathrm{det}},\; P_i\otimes I^{\otimes m} \notin  N(\mathcal{S})^{\pm} \backslash \mathcal{S}^{\pm}.
    \end{equation}
    
    When we consider the case where the error did not occur, we have
    \begin{equation}
        \label{eq_proof_3_3}
        \Theta^{\mathrm{det}}_{\mathrm{Clif}}(\mathcal{U}) = \mathcal{U}.
    \end{equation}
    For $\theta = 0$, this means
    \begin{equation}
        \label{eq_proof_3_4}
        U_{\mathrm{D}}U_{\mathrm{E}} \ket{\psi}\ket{0^m} = \ket{\psi}\ket{0^m}
    \end{equation}
    for all input state $\ket{\psi}$.
    In other words,
    \begin{equation}
        \label{eq_proof_3_4_2}
        U_{\mathrm{D}}U_{\mathrm{E}} (I^{\otimes n} \otimes \ketbra{0^m})  = I^{\otimes n} \otimes \ketbra{0^m},
    \end{equation}
    so the Clifford unitary $U_{\mathrm{D}}$ satisfies
    \begin{equation}
        \label{eq_proof_3_5}
        U_{\mathrm{E}}(I^{\otimes n} \otimes \ketbra{0^m})  = U_{\mathrm{D}}^\dag(I^{\otimes n} \otimes \ketbra{0^m}).
    \end{equation}
    By combining Eq.~\eqref{eq_proof_3_3} with Eq.~\eqref{eq_proof_3_5}, we obtain
    \begin{equation}
        \label{eq_proof_3_6}
        \begin{aligned}
            &(I^{\otimes n}\otimes\ketbra{0^m})U_{\mathrm{E}}^\dag (e^{i\theta H}\otimes I^{\otimes m})U_{\mathrm{E}} (I^{\otimes n}\otimes\ketbra{0^m})\\
            =&(e^{i\theta H}\otimes\ketbra{0^m}).
        \end{aligned}
    \end{equation}
    When we differentiate this equation by $\theta$ and substituting $\theta = 0$, we have
    \begin{equation}
        \label{eq_proof_3_7}
        \begin{aligned}
            &(I^{\otimes n}\otimes\ketbra{0^m})U_{\mathrm{E}}^\dag (H\otimes I^{\otimes m})U_{\mathrm{E}} (I^{\otimes n}\otimes\ketbra{0^m})\\
            =&(H\otimes\ketbra{0^m}).
        \end{aligned}
    \end{equation}
    By sandwiching this equality with $U_{\mathrm{E}}$ and $U_{\mathrm{E}}^\dag$, we obtain
    \begin{equation}
        \label{eq_proof_3_8}
        \Pi (H \otimes I^{\otimes m}) \Pi = \Pi\bar{H}\Pi,
    \end{equation}
    where $\Pi = U_{\mathrm{E}}(I^{\otimes n}\otimes\ketbra{0^m})U_{\mathrm{E}}^\dag$ is the projector to the code space $\mathcal{C}$ and $\bar{H} = U_E (H \otimes I^{\otimes m})U_E^\dag$ is the logical $H$ operator on $\mathcal{C}$.

    Let us decompose the Hamiltonian $H$ using Pauli operator $Q_i \in \mathcal{P}_n$ as $H = \sum_{i=1}^A h_i Q_i$, where $h_i \neq 0$. 
    Then, from Eq.~\eqref{eq_proof_3_8}, we have
    \begin{equation}
        \label{eq_proof_3_9}
        \sum_{i=1}^A h_i \Pi(Q_i\otimes I^{\otimes m})\Pi = \sum_{i=1}^A h_i \Pi \bar{Q}_i\Pi,
    \end{equation}
    where $\bar{Q}_i = U_E (Q_i\otimes I^{\otimes m})U_E^\dagger\in \bar{\mathcal{P}}_{n}$ is the logical $Q_i$ operator on the code space $\mathcal{C}$.
    Since the number of Pauli terms in the right-hand side and left-hand side of Eq.~\eqref{eq_proof_3_9} are both $A$, $Q_i\otimes I^{\otimes m}$ should be a logical Pauli operator on the code space $\mathcal{C}$. 
    Therefore, there exists $S_1,\ldots,S_A\in\mathcal{S}^{\pm}$ such that
    \begin{equation}
        \label{eq_proof_3_10}
        \qty{Q_i \otimes I^{\otimes m}}_{i=1}^A = \qty{\bar{Q}_i S_i}_{i=1}^A.
    \end{equation}
    
    When we consider the set of Pauli operators generated from this set, we have
    \begin{equation}
        \label{eq_proof_3_11}
        \mathcal{Q}_H \otimes \{I^{\otimes m}\} \subset \bar{\mathcal{P}}_n \cdot \mathcal{S}^{\pm} = N(\mathcal{S})^{\pm}.
    \end{equation}
    Here, the definition of $\mathcal{Q}_H$ is given in Eq.~\eqref{eq_Q_U}.
    Let us assume that there exists $Q\in\mathcal{Q}_H\backslash\{I^{\otimes n}\}$ such that $Q\otimes I^{\otimes m} = S \in \mathcal{S}^{\pm}$.
    In this case, we can show that $\abs{\mathcal{Q}_H} \geq \abs{\mathcal{Q}_H} + 1$, which is a contradiction.
    To show this, let us define the logical version of the set $\mathcal{Q}_H$ as $\bar{\mathcal{Q}}_H = U_E (\mathcal{Q}_H \otimes {I}^{\otimes m})U_E^\dagger \subset \bar{\mathcal{P}}_n$.
    Then, from Eq.~\eqref{eq_proof_3_10}, for all $\bar{Q}' \in \bar{\mathcal{Q}}_H$, there exists different $Q'' \in \mathcal{Q}_H$ such that $Q'' \otimes I^{\otimes n} = \bar{Q}' S'$ for some stabilizer $S'\in\mathcal{S}^{\pm}$.
    However, under the assumption of the existence of $Q\otimes I^{\otimes m} = S \neq I^{\otimes(n+m)}$, there exists at least two different $Q'' \in \mathcal{Q}_H$ for $\bar{Q}' = I^{\otimes n+m}$.
    This means $\abs{\mathcal{Q}_H} \geq \abs{\bar{\mathcal{Q}}_H} + 1 = \abs{\mathcal{Q}_H} + 1$, so we have $(\mathcal{Q}_H\otimes \{I^{\otimes m}\}\backslash\{I^{\otimes (n+m})\}  )\cap \mathcal{S}^{\pm} = \emptyset$. 
    Thus, we obtain 
    \begin{equation}
        \label{eq_proof_3_13}
        (\mathcal{Q}_H \backslash \qty{I^{\otimes n}})\otimes \qty{I^{\otimes m}} \subset N(\mathcal{S})^{\pm} \backslash \mathcal{S}^{\pm}.
    \end{equation}

    By combining Eq.~\eqref{eq_proof_3_2} and \eqref{eq_proof_3_13}, we obtain
    \begin{equation}
        \label{eq_proof_3_14}
        \forall P_i\in \mathcal{P}_{\mathcal{N}}^{\mathrm{det}},\; P_i \notin \mathcal{Q}_{H}\backslash\qty{I^{\otimes n}},
    \end{equation}
    which is equivalent to 
    \begin{equation}
        \mathcal{P}^{\mathrm{{det}}}_{\mathcal{N}} \cap \mathcal{Q}_{H} = \emptyset.
    \end{equation}
    
    For the sufficiency, we use SCV gadget $\Theta^{\mathrm{det}}_{\mathcal{Q}^{\mathrm{com}}_{H}}$.
    From Corollary~\ref{cor_1}, $\Theta^{\mathrm{det}}_{\mathcal{Q}^{\mathrm{com}}_{H}}$ can be used to detect errors satisfying Eq.~\eqref{eq_thm_3_2}.
\end{proof}

\subsection{Proof of Theorem~\ref{thm_4}}
 Finally, we prove Theorem~\ref{thm_4}.

 \begin{proof}
    We first prove the necessary condition.
    As in the proof of Theorem~\ref{thm_3}, let us define a code space $\mathcal{C}$ as a subspace of $(n+m)$-qubit system spanned by $U_{\mathrm{E}} \ket{\psi}\ket{0^m}$, where $\ket{\psi}$ is an arbitrary $n$-qubit quantum state.
    We define the stabilizer group of the code space $\mathcal{C}$ as $\mathcal{S} = U_E (\qty{I}^{\otimes n } \otimes \qty{I,Z}^{\otimes m})U_E^\dag$, a set of logical Pauli operators as $\bar{\mathcal{P}}_n = U_E (\mathcal{P}_n \otimes \qty{I}^{\otimes m})U_E^\dag$, and a set $N(\mathcal{S}) = U_E (\mathcal{P}_n \otimes \qty{I,Z}^{\otimes m})U_E^\dag$.
    We also define sets with additional phase of $\pm 1$ as $\mathcal{S}^{\pm} = \qty{\pm1}\cdot\mathcal{S}$ and $N(\mathcal{S})^{\pm} = \qty{\pm1}\cdot N(\mathcal{S})^{\pm}$.
    Note that $N(\mathcal{S})^{\pm} = \bar{\mathcal{P}}_n\cdot\mathcal{S}^{\pm} = \qty{\pm1}\cdot\qty{P\in\mathcal{P}_{n+m}\;|\;\forall Q\in\mathcal{S}, PQ=QP}$.

    When we set $\theta = 0$, we have
    \begin{equation}
        \label{eq_proof_4_1}
        \Theta^{\mathrm{cor}}_{\mathrm{Clif}}(\mathcal{N}) = \mathcal{I},
    \end{equation}
    where $\mathcal{I}$ is an identity operator of an $n$-qubit system.
    This means that the code space $\mathcal{C}$ can correct the noise $\mathcal{N}\otimes \mathcal{I}_a(\cdot) = \sum_{i=0}^K p_i (P_i \otimes I^{\otimes m})\cdot (P_i \otimes I^{\otimes m})$.
    Therefore, from the quantum error-correction condition~\cite{nielsen2010quantum, gottesman2016surviving}, we obtain
    \begin{equation}
        \label{eq_proof_4_2}
        \forall P\in \mathcal{P}^{\mathrm{cor}}_{\mathrm{N}},\; P\otimes I^{\otimes m} \notin  N(\mathcal{S})^{\pm} \backslash \mathcal{S}^{\pm}.
    \end{equation}

    Since no error occurs with probability $p_0 \neq 0$, we have
    \begin{equation}
        \label{eq_proof_4_3}
        \Theta^{\mathrm{cor}}_{\mathrm{Clif}}(\mathcal{U}) = \mathcal{U}.
    \end{equation}
    For $\theta = 0$, this means
    \begin{equation}
        \label{eq_proof_4_4}
        U_{\mathrm{D}}U_{\mathrm{E}} \ket{\psi}\ket{0^m} = \ket{\psi}\ket{\alpha}
    \end{equation}
    for all input state $\ket{\psi}$, where $\ket{\alpha}$ is a $m$-qubit quantum state independent of $\ket{\psi}$.
    Since the operation performed on the ancilla does not affect the system qubits, we can modulate $U_{\mathrm{D}}$ so that $\ket{\alpha} = \ket{0^{m}}$.
    In this case, we have
    \begin{equation}
        \label{eq_proof_4_5}
        (I^{\otimes n} \otimes \ketbra{0^m})U_{\mathrm{D}}U_{\mathrm{E}}  = I^{\otimes n} \otimes \ketbra{0^m},
    \end{equation}
    so the Clifford unitary $U_{\mathrm{D}}$ satisfies
    \begin{equation}
        \label{eq_proof_4_6}
        U_{\mathrm{E}}(I^{\otimes n} \otimes \ketbra{0^m})  = U_{\mathrm{D}}^\dag(I^{\otimes n} \otimes \ketbra{0^m}).
    \end{equation}
    From Eq.~\eqref{eq_proof_4_3} and Eq.~\eqref{eq_proof_4_6}, when we consider  post-selecting measurement results instead of discarding them, we obtain
    \begin{equation}
        \label{eq_proof_4_7}
        \begin{aligned}
            &(I^{\otimes n}\otimes\ketbra{0^m})U_{\mathrm{E}}^\dag (e^{i\theta H}\otimes I^{\otimes m})U_{\mathrm{E}} (I^{\otimes n}\otimes\ketbra{0^m})\\
            \propto&(e^{i\theta H}\otimes\ketbra{0^m}).
        \end{aligned}
    \end{equation}
    Then, in the same way as in the proof of Theorem~\ref{thm_3}, we obtain
    \begin{equation}
        \label{eq_proof_4_8}
        (\langle \mathcal{Q}_H \rangle \backslash \qty{I^{\otimes n}})\otimes \qty{I^{\otimes m}} \subset N(\mathcal{S})^{\pm} \backslash \mathcal{S}^{\pm}.
    \end{equation}
    By combining Eq.~\eqref{eq_proof_4_2} and \eqref{eq_proof_4_8}, we obtain
    \begin{equation}
        \label{eq_proof_4_9}
        \forall P\in \mathcal{P}^{\mathrm{cor}}_{\mathrm{N}},\; P \notin \mathcal{Q}_{H}\backslash\qty{I^{\otimes n}},
    \end{equation}
    which is equivalent to 
    \begin{equation}
        \label{eq_proof_4_10}
        \mathcal{P}^{\mathrm{{cor}}}_{\mathcal{N}} \cap \mathcal{Q}_{H} = \emptyset.
    \end{equation}

    For the sufficiency, we use SCV gadget $\Theta^{\mathrm{cor}}_{\mathcal{Q}^{\mathrm{com}}_{H}}$.
    From Theorem~\ref{thm_4}, this can be used to detect errors satisfying Eq.~\eqref{eq_cor_2_1}.
 \end{proof}

\section{Resource-Theoretic Analysis on the Limitations of Channel Purification}
\label{sec_resource}
In this section, we provide the proof of Theorem~\ref{thm_5} by employing the fundamental tools of the dynamical resource theory of magic~\cite{seddon2019quantifying,regula2021fundamental,Fang2022nogo}.
The resource theory of magic~\cite{Veitch2014resource} is the framework that allows us to quantitatively analyze the properties of the states that cannot be prepared by a stabilizer operation.
Stabilizer operations are operations consisting of (i) preparing states $\ket{0}$, (ii) applying Clifford unitaries, (iii) measuring with the computational basis, and (iv) applying the above operations conditioned on the measurement outcomes and classical randomness.
The set of states that can be prepared by stabilizer operations are called stabilizer states, and it coincides with the probabilistic mixture of pure states prepared by Clifford unitaries.
Here, we denote the set of stabilizer states as $\mathbb{F}$.

In resource-theoretic analysis, one considers the set of operations accessible in the given operational setting. In the case of the magic resource theory, the most operational choice is to take the set of stabilizer operations as such freely available operations. However, the set of stabilizer operations is theoretically hard to deal with because of its complex mathematical structure. 
It turns out that it is useful to consider a larger set of operations that contains stabilizer operations as a subset. The standard choice is to consider the set of completely stabilizer-preserving channels defined by
\begin{equation}
    \mathbb{O} = \qty{\mathcal{E}\in\mathrm{CPTP} ~|~ \forall m \in \mathbb{N}, \forall \sigma\in\mathbb{F}, \mathcal{E}\otimes\mathcal{I}_m(\sigma) \in\mathbb{F}},
\end{equation}
where $\mathrm{CPTP}$ denotes the set of completely positive trace-preserving maps, and $\mathcal{I}_m$ represents the identity channel on an $m$-qubit system.
The advantage of considering this set of operations is that it can be concisely characterized by looking at the magic property of their Choi states.
Namely, it is known that a channel $\mathcal{E}$ is completely stabilizer preserving if and only if its Choi state $J_{\mathcal{E}} = \mathcal{E}\otimes\mathcal{I}(\ketbra{\Psi})$ is a stabilizer state~\cite{seddon2019quantifying}:
\begin{equation}
    \label{eq_CSP_channel}
    \mathbb{O} = \qty{\mathcal{E}\in\mathrm{CPTP} ~|~ J_{\mathcal{E}} \in\mathbb{F}}.
\end{equation}
Here, $\ket{\Psi}= 2^{-n/2}\sum_i \ket{i}\ket{i}$ is the maximally entangled state with $\{\ket{i}\}$ being a computational basis, and $\mathcal{I}$ is the identity channel of the same dimension.

Toward the application to our setting of SCV, we now extend this framework to the dynamical resource theory of magic, where we consider the transformation of quantum channels by available processes. 
In general, the transformation of quantum channels is realized by quantum superchannels~\cite{Chiribella2008transforming,Chiribella2008quantum}, which can be implemented by applying pre- and post-processing channels before and after the given channel. 
The dynamical resource theory allows us to analyze the consequences of limiting the available superchannels.

In the setting discussed in Sec.~\ref{sec_Clifford_purification}, the allowed set of superchannels consists of pre- and post-processing Clifford unitaries followed by the partial trace, represented as
\begin{equation}
    \label{eq_purification_correction_3}
    \Theta_{\mathrm{Clif}}^{\mathrm{cor}}(\mathcal{E}):~\rho \mapsto \mathrm{tr}_{\mathrm{a}}[\mathcal{U}_{\mathrm{D}}\circ(\mathcal{E}\otimes \mathcal{I}_{\mathrm{a}})\circ\mathcal{U}_{\mathrm{E}}(\rho\otimes\ketbra{0^m})],
\end{equation}
where $\mathcal{U}_{\mathrm{E}}$ and $\mathcal{U}_{\mathrm{D}}$ are Clifford unitary channels and $\mathcal{I}_{\mathrm{a}}$ is the identity channel on arbitrary dimension.
One can easily check that these superchannels surely preserve the set of completely stabilizer-preserving channels---if we sandwich an arbitrary channel in $\mathbb{O}$ by Clifford unitaries, then the resulting channel after the partial trace is also in $\mathbb{O}$.  
This means that if we consider the set of stabilizer-preserving superchannels defined by 
\begin{equation}
    \label{eq_stab_superchanel}
    \mathbb{S} = \qty{\Theta\in\mathrm{superchannel} ~|~ \forall \mathcal{E} \in \mathbb{O}, \Theta (\mathcal{E})\in\mathbb{O}},
\end{equation}
the superchannels $\Theta_{\mathrm{Clif}}^{\mathrm{cor}}$ of the form \eqref{eq_purification_correction_3} considered in Sec.~\ref{sec_Clifford_purification} are included in $\mathbb{S}$.
This implies that, while the set $\mathbb{S}$ is strictly larger than the set of superchannels $\Theta_{\mathrm{Clif}}^{\mathrm{cor}}$ constructed from Clifford unitaries, limitations on channel purification using axiomatic superchannels $\Theta\in\mathbb{S}$ also impose fundamental limitations on operationally motivated superchannels $\Theta_{\mathrm{Clif}}^{\mathrm{cor}}$.
Therefore, we derive the fundamental limits of channel purification using stabilizer-preserving superchannels $\Theta\in\mathbb{S}$.

To achieve this, we employ the resource robustness $R_{\mathbb{F}}(\rho)$~\cite{vidal1999robustness, brandao2015reversible} and resource weight $W_{\mathbb{F}}(\rho)$~\cite{lewenstein1998separability, uola2020all} of a quantum state $\rho$, defined as
\begin{align}
    \label{eq_resource_robustness}
    R_{\mathbb{F}}(\rho) &= \min\qty{\lambda\geq1 ~|~ \exists\sigma\in\mathbb{F},~ \rho\leq \lambda\sigma}, \\
    \label{eq_resource_weight}
    W_{\mathbb{F}}(\rho) &= \max\qty{\lambda\leq1 ~|~ \exists\sigma\in\mathbb{F},~ \rho\geq \lambda\sigma}.    
\end{align}
Similarly, we define the resource robustness $R_{\mathbb{O}}(\mathcal{E})$~\cite{Diaz2018usingreusing,takagi2019general} and resource weight $W_{\mathbb{O}}(\mathcal{E})$~\cite{uola2020all} of a quantum channel $\mathcal{E}$ as
\begin{align}
    \label{eq_channel_robustness}
    R_{\mathbb{O}}(\mathcal{E}) &= \min\qty{\lambda\geq1 ~|~ \exists\mathcal{M}\in\mathbb{O},~ J_{\mathcal{E}}\leq \lambda J_{\mathcal{M}}}, \\
    \label{eq_channel_weight}
    W_{\mathbb{O}}(\mathcal{E}) &= \max\qty{\lambda\leq1 ~|~ \exists\mathcal{M}\in\mathbb{O},~ J_{\mathcal{E}}\geq \lambda J_{\mathcal{M}}}.
\end{align}
These measures quantify how far a quantum state $\rho$ and a quantum channel $\mathcal{E}$ deviate from the sets of stabilizer states $\mathbb{F}$ and completely stabilizer-preserving channels $\mathbb{O}$, respectively.
Additionally, we introduce the fidelity-based measure
\begin{align}
    F_{\mathbb{O}}(\mathcal{E}) = \max_{\mathcal{M}\in\mathbb{O}} F(\mathcal{E},\mathcal{M}),
\end{align}
where $F(\mathcal{E},\mathcal{M})$ is the worst-case fidelity defined in Eq.~\eqref{eq_worst_case_fid}.

Using these resource measures, Ref.~\cite{regula2021fundamental} derived an upper bound on the worst-case fidelity achievable through channel purification with stabilizer-preserving superchannels $\Theta\in\mathbb{S}$:
\begin{lem}
    \label{lem_1}
    (Theorem~1 of Ref.~\cite{regula2021fundamental}) Let $\mathcal{U}$ be a unitary channel, and let $\mathcal{U}_{\mathcal{N}} = \mathcal{N}\circ\mathcal{U}$ be a noisy channel affected by a noise channel $\mathcal{N}$. Then, the worst-case fidelity between the ideal channel $\mathcal{U}$ and the purified noisy channel $\Theta(\mathcal{U}_{\mathcal{N}})$ under a stabilizer-preserving superchannel $\Theta\in\mathbb{S}$ is upper bounded as
    \begin{align}
        F(\Theta(\mathcal{U}_{\mathcal{N}}), \mathcal{U}) &\leq F_{\mathbb{O}}(\mathcal{U})R_{\mathbb{O}}(\mathcal{U}_{\mathcal{N}}), \\
        F(\Theta(\mathcal{U}_{\mathcal{N}}), \mathcal{U}) &\leq 1 - (1-F_{\mathbb{O}}(\mathcal{U}))W_{\mathbb{O}}(\mathcal{U}_{\mathcal{N}}).\label{eq:weight bound}
    \end{align}
\end{lem}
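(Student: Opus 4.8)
The plan is to prove both inequalities of Lemma~\ref{lem_1} by combining a monotonicity property of the resource measures under the free superchannels $\mathbb{S}$ with a linearization of the worst-case channel fidelity against the unitary target $\mathcal{U}$. First I would establish that $R_{\mathbb{O}}$ and $W_{\mathbb{O}}$ are, respectively, non-increasing and non-decreasing under any $\Theta\in\mathbb{S}$. For robustness, writing $\lambda = R_{\mathbb{O}}(\mathcal{U}_{\mathcal{N}})$, the operator inequality $J_{\mathcal{U}_{\mathcal{N}}}\leq\lambda J_{\mathcal{M}}$ with $\mathcal{M}\in\mathbb{O}$ is equivalent to a decomposition $\mathcal{U}_{\mathcal{N}} = \lambda\mathcal{M} - (\lambda-1)\mathcal{E}'$ with $\mathcal{E}'$ a channel; applying the linear map $\Theta$ and using $\Theta(\mathcal{M})\in\mathbb{O}$ together with the fact that $\Theta$ sends channels to channels yields $R_{\mathbb{O}}(\Theta(\mathcal{U}_{\mathcal{N}}))\leq\lambda$. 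The weight case is identical after replacing the decomposition by the convex form $\mathcal{U}_{\mathcal{N}} = w\mathcal{M} + (1-w)\mathcal{E}'$ with $w = W_{\mathbb{O}}(\mathcal{U}_{\mathcal{N}})$.

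The heart of the argument is to reduce the worst-case fidelity to a family of \emph{linear} functionals. I would first restrict the minimization in Eq.~\eqref{eq_worst_case_fid} to pure inputs: since the squared Uhlmann fidelity is jointly concave and $\rho\mapsto(\mathcal{I}\otimes\mathcal{G})(\rho)$ is linear, the objective is concave in $\rho$ and hence attains its minimum at an extreme point, i.e.\ a pure state $\ket{\psi}$. For such an input the target output $(\mathcal{I}\otimes\mathcal{U})(\ketbra{\psi}) = \ketbra{\Psi_\psi}$ with $\ket{\Psi_\psi} = (I\otimes U)\ket{\psi}$ is pure, so the fidelity collapses to the overlap $\ell_\psi(\mathcal{G}) := \bra{\Psi_\psi}(\mathcal{I}\otimes\mathcal{G})(\ketbra{\psi})\ket{\Psi_\psi}$, a linear functional of the Choi state that can be written as $\mathrm{tr}[W_\psi J_{\mathcal{G}}]$ with $W_\psi\geq 0$. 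Thus $F(\mathcal{G},\mathcal{U}) = \min_\psi \ell_\psi(\mathcal{G})$ with each $\ell_\psi$ positive and linear.

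Given this, both bounds follow. For the robustness bound, the operator inequality $J_{\mathcal{G}}\leq R_{\mathbb{O}}(\mathcal{G})J_{\mathcal{M}}$ and $W_\psi\geq0$ give $\ell_\psi(\mathcal{G})\leq R_{\mathbb{O}}(\mathcal{G})\ell_\psi(\mathcal{M})$ for every $\psi$; taking the minimum over $\psi$ on both sides (valid since $a_\psi\leq b_\psi$ for all $\psi$ implies $\min_\psi a_\psi\leq\min_\psi b_\psi$) yields $F(\mathcal{G},\mathcal{U})\leq R_{\mathbb{O}}(\mathcal{G})F(\mathcal{M},\mathcal{U})\leq R_{\mathbb{O}}(\mathcal{G})F_{\mathbb{O}}(\mathcal{U})$, where the last step uses $\mathcal{M}\in\mathbb{O}$. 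Setting $\mathcal{G}=\Theta(\mathcal{U}_{\mathcal{N}})$ and invoking robustness monotonicity gives the first inequality. For the weight bound, I would decompose $\mathcal{G} = w\mathcal{M}+(1-w)\mathcal{E}'$ with $w=W_{\mathbb{O}}(\mathcal{G})$, so $\ell_\psi(\mathcal{G}) = w\ell_\psi(\mathcal{M})+(1-w)\ell_\psi(\mathcal{E}')$; evaluating at the particular $\psi^\star$ minimizing $\ell_\psi(\mathcal{M})$ and using $\ell_{\psi^\star}(\mathcal{E}')\leq1$ and $\ell_{\psi^\star}(\mathcal{M}) = F(\mathcal{M},\mathcal{U})\leq F_{\mathbb{O}}(\mathcal{U})$ gives $F(\mathcal{G},\mathcal{U})\leq wF_{\mathbb{O}}(\mathcal{U})+(1-w) = 1-(1-F_{\mathbb{O}}(\mathcal{U}))w$, and weight monotonicity (together with $1-F_{\mathbb{O}}(\mathcal{U})\geq0$) then yields the second inequality.

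The hard part will be the linearization step: justifying the reduction to pure inputs and, crucially, that the worst-case fidelity becomes a \emph{minimum of positive linear functionals} of the Choi state, which is precisely what lets the operator inequalities defining $R_{\mathbb{O}}$ and $W_{\mathbb{O}}$ pass through. This relies on the squared-fidelity convention so that the overlap $\ell_\psi$ is linear; with the root fidelity one would instead linearize via the semidefinite representation of fidelity, complicating the bookkeeping. The remaining care is in threading the minimization over inputs correctly---for robustness the term-wise inequality suffices, whereas for the weight bound one must commit to the minimizer $\psi^\star$ of $\ell_\psi(\mathcal{M})$ rather than of $\ell_\psi(\mathcal{G})$.
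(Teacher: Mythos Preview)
The paper does not give its own proof of Lemma~\ref{lem_1}; it is quoted verbatim as Theorem~1 of Ref.~\cite{regula2021fundamental} (with the weight bound also attributed to Ref.~\cite{Fang2022nogo}) and is used as a black box to derive Theorem~\ref{thm_5}. Your argument is therefore not competing with anything in the paper, but it correctly reconstructs the proof from the original reference: monotonicity of $R_{\mathbb O}$ and $W_{\mathbb O}$ under $\Theta\in\mathbb S$ via the convex/affine decompositions, reduction of the worst-case fidelity against a unitary target to $\min_\psi \ell_\psi(\cdot)$ with each $\ell_\psi$ a positive linear functional of the Choi state, and then passing the operator inequalities through. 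The handling of the two minimizations---term-wise for robustness, committing to the minimizer of $\ell_\psi(\mathcal M)$ for weight---is exactly right, as is your caveat that the linearization needs the squared-fidelity convention (which is the one used in Ref.~\cite{regula2021fundamental}). I see no gap.
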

We remark that \eqref{eq:weight bound} was also shown in Ref.~\cite{Fang2022nogo}.
Lemma~\ref{lem_1} ensures that, if we can evaluate the resource measures $R_{\mathbb{O}}(\mathcal{U}_{\mathcal{N}})$, $W_{\mathbb{O}}(\mathcal{U}_{\mathcal{N}})$, and $F_{\mathbb{O}}(\mathcal{U})$, we can derive an upper bound on the worst-case fidelity.
However, analytical computation of these measures is usually not feasible.
The examples known to come with analytical expressions have been limited to the case of the channels implementable by gate teleportation with a magic state, where the magic measures for channels can be reduced to those of the corresponding magic states~\cite{regula2021fundamental}.
The channel $\mathcal{U}_{\mathcal{N}}$ of our interest is generally not in such a class, preventing us from directly applying the previous approach.

Here, we present a new approach to computing these resource measures and obtain analytical expressions for an operationally important class of noisy unitary channels.
To this end, we consider the Pauli-$Z$ rotation gate $\mathcal{U}(\cdot) = e^{-i\frac{\theta}{2} Z}\cdot e^{i\frac{\theta}{2} Z}$ under single-qubit Pauli noise $\mathcal{N}(\cdot) = p_0 \cdot +p_x X\cdot X + p_y Y\cdot Y + p_z Z\cdot Z$.
In this case, as shown in the following lemma, the resource robustness and resource weight for quantum the counterparts of a quantum state:
\begin{lem}
    \label{lem_2}
    Let $\mathcal{U}(\cdot) = e^{-i\frac{\theta}{2} Z}\cdot e^{i\frac{\theta}{2} Z}$ be a Pauli-$Z$ rotation gate with $0\leq\theta\leq\pi/4$, and let $\mathcal{U}_{\mathcal{N}} = \mathcal{N}\circ\mathcal{U}$ be a noisy channel affected by single-qubit Pauli noise $\mathcal{N}(\cdot) = p_0 \cdot +p_x X\cdot X + p_y Y\cdot Y + p_z Z\cdot Z$. Then, the resource robustness and the resource weight of the noisy channel $\mathcal{U}_{\mathcal{N}}$ is given by
    \begin{align}
        \label{eq_lem2_1}
        \begin{aligned}
            R_{\mathbb{O}}(\mathcal{U}_{\mathcal{N}}) &= (p_0+p_z)R\qty(\theta,\frac{p_z}{p_0+p_z}) \\
        &+ (p_x+p_y)R\qty(\theta, \frac{p_y}{p_x+p_y}),
        \end{aligned} \\
        \begin{aligned}
            W_{\mathbb{O}}(\mathcal{U}_{\mathcal{N}}) &= (p_0+p_z)W\qty(\theta,\frac{p_z}{p_0+p_z}) \\
        &+ (p_x+p_y)W\qty(\theta, \frac{p_y}{p_x+p_y}).
        \end{aligned}
    \end{align}
    Here, $R(\theta, p)$ and $W(\theta, p)$ represents the resource robustness and the resource weight of the quantum state $\rho(\theta, p)=\frac{1}{2}(I + (1-2p)(\cos\theta X + \sin \theta Y))$, given by
    \begin{align}
        \label{eq_lem2_2_1}
        R(\theta, p) &=
        \begin{cases}
            1 & (y \leq 1-x)\\
            \frac{\sqrt{2}+x+y}{\sqrt{2}+1} & (y \geq 1-x)
        \end{cases},\\
        \label{eq_lem2_2_2}
        W(\theta, p) &=
        \begin{cases}
            1 & (y \leq 1-x)\\
            \frac{\sqrt{2}-(x+y)}{\sqrt{2}-1} & (1-x \leq y \leq (\sqrt{2}+1)(1-x))\\
            \frac{1-(x^2+y^2)}{2(1-x)} & (y> (\sqrt{2}+1)(1-x))
        \end{cases},
    \end{align}
    where $x = \abs{1-2p}\cos\theta$ and $y = \abs{1-2p}\sin\theta$.
\end{lem}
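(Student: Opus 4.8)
The plan is to reduce the channel measures $R_{\mathbb{O}}(\mathcal{U}_{\mathcal{N}})$ and $W_{\mathbb{O}}(\mathcal{U}_{\mathcal{N}})$, defined through the Choi state in Eqs.~\eqref{eq_channel_robustness}--\eqref{eq_channel_weight}, to the single-qubit state measures $R(\theta,p)$ and $W(\theta,p)$ by exploiting a block structure of the Choi state. First I would compute the Choi state explicitly. Writing $\ket{\psi_\theta} = (e^{-i\frac{\theta}{2}Z}\otimes I)\ket{\Psi}$, the noise acts on the output leg so that
\begin{equation}
    J_{\mathcal{U}_{\mathcal{N}}} = \sum_{k} p_k (P_k\otimes I)\ketbra{\psi_\theta}(P_k\otimes I),
\end{equation}
with $P_0=I,\,P_x=X,\,P_y=Y,\,P_z=Z$. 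The crucial observation is that the $\{I,Z\}$ terms are supported on $V_1=\mathrm{span}\{\ket{00},\ket{11}\}$ while the $\{X,Y\}$ terms are supported on the orthogonal $V_2=\mathrm{span}\{\ket{01},\ket{10}\}$, so $J_{\mathcal{U}_{\mathcal{N}}}$ is block diagonal. Identifying each $V_i$ with a single qubit, a direct computation shows the normalized block on $V_1$ equals $\rho(\theta,\tfrac{p_z}{p_0+p_z})$ with trace weight $p_0+p_z$, and the block on $V_2$ equals $\rho(\theta,\tfrac{p_y}{p_x+p_y})$ up to the Clifford $X$-conjugation $Y\mapsto-Y$ with trace weight $p_x+p_y$, and the latter leaves $R_{\mathbb{F}}$ and $W_{\mathbb{F}}$ invariant.

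Second, I would reduce the channel optimizations to two independent single-qubit state optimizations. The Choi state commutes with $Z\otimes Z$, which acts as $+1$ on $V_1$ and $-1$ on $V_2$; since $Z\otimes Z$-conjugation is Clifford and preserves both $\mathbb{O}$-membership and validity of the Choi state, I can twirl any feasible $\mathcal{M}\in\mathbb{O}$ and assume $J_{\mathcal{M}}=q_1\tau_1\oplus q_2\tau_2$ is block diagonal with $q_1+q_2=1$. The operator inequality $J_{\mathcal{U}_{\mathcal{N}}}\le\lambda J_{\mathcal{M}}$ (resp. $\ge$ for the weight) then splits block by block, and because each block-diagonal stabilizer state restricts to a single-qubit stabilizer state $\tau_i\in\mathbb{F}$, each per-block problem is exactly the definition of the single-qubit $R_{\mathbb{F}}$ or $W_{\mathbb{F}}$. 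For the converse I twirl a generic optimizer as above; for achievability I build an explicit block-diagonal $\mathcal{M}$ from the optimal state witnesses. Optimizing the weights $q_i$ subject to $q_1+q_2=1$ yields the additive expressions, e.g. $R_{\mathbb{O}}(\mathcal{U}_{\mathcal{N}})=(p_0+p_z)R(\theta,\tfrac{p_z}{p_0+p_z})+(p_x+p_y)R(\theta,\tfrac{p_y}{p_x+p_y})$.

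The main obstacle is the trace-preserving (valid-Choi) constraint on $J_{\mathcal{M}}$, which couples the two blocks through their diagonal entries and a priori forbids choosing the blocks independently optimally. I expect to resolve this by showing that an optimal single-qubit stabilizer witness can always be taken in the $XY$-plane: since $\rho(\theta,p)$ has its Bloch vector in the $XY$-plane, projecting any candidate witness $\vec{s}$ to $s_z=0$ stays inside the stabilizer octahedron and only shrinks $\abs{\lambda\vec{s}-\vec{r}}$ (resp. $\abs{\vec{r}-\mu\vec{s}}$), hence relaxes both the robustness and the weight inequality, so an optimizer with $s_z=0$ exists. Such a witness has maximally mixed diagonal, so the partial trace of the corresponding block-diagonal $J_{\mathcal{M}}$ is automatically $I/2$ and the trace-preserving constraint becomes vacuous, decoupling the blocks. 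The remaining task is the routine single-qubit geometry: computing $R(\theta,p)$ and $W(\theta,p)$ for $\rho(\theta,p)=\tfrac12(I+(1-2p)(\cos\theta X+\sin\theta Y))$ by locating the nearest stabilizer-octahedron facet or vertex for $0\le\theta\le\pi/4$, which yields the piecewise formulas in Eqs.~\eqref{eq_lem2_2_1}--\eqref{eq_lem2_2_2}; the weight is more case-heavy because its optimizer migrates between a facet, an edge, and the octahedron interior as $x$ and $y$ vary.
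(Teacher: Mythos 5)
Your proposal is correct and follows essentially the same route as the paper's proof: block-diagonalizing the Choi state with respect to the $Z\otimes Z$ eigenspaces, identifying each block with a single-qubit state via the repetition-code encoding (your $V_2$ block being Clifford-equivalent to $\rho(\theta,\tfrac{p_y}{p_x+p_y})$), reducing the channel measures to the single-qubit ones with additive weights, and disposing of the trace-preserving constraint by showing the optimal stabilizer witnesses lie in the $XY$-plane. Your $Z\otimes Z$ twirling of the witness is the same operation as the paper's sandwiching by the projectors $\Pi_i$, and the remaining single-qubit Bloch-sphere geometry you defer is exactly the routine computation the paper carries out.
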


\begin{proof}
    Since the proof for the resource weight is essentially the same as that for the resource robustness, we concentrate on the proof for the resource robustness.
    We first show Eq.~\eqref{eq_lem2_1}, and then derive Eq.~\eqref{eq_lem2_2_1} and Eq.~\eqref{eq_lem2_2_2}.
    
    From Eq.~\eqref{eq_CSP_channel} and Eq.~\eqref{eq_channel_robustness}, resource robustness $R_{\mathbb{O}}(\mathcal{U}_{\mathcal{N}})$ can be represented as
    \begin{equation}
        \label{eq_proof_b2_1}
        R_{\mathbb{O}}(\mathcal{U}_{\mathcal{N}}) = \min\qty{\lambda\geq1 ~|~ \exists\sigma\in\mathbb{F},~ \mathrm{tr}_{\mathrm{S}}[\sigma]=I/2,~ J_{\mathcal{U}_{\mathcal{N}}}\leq \lambda \sigma}.
    \end{equation}
    Here, $\mathrm{tr}_{\mathrm{S}}$ represents the partial trace over the subspace where $\mathcal{U}_{\mathcal{N}}$ acts, and $\mathrm{tr}_{\mathrm{S}}[\sigma]=I/2$ represents the trace-preserving condition of the channel $\mathcal{M}\in\mathbb{O}$.
    Since the maximally entangled state can be represented as
    \begin{equation}
        \label{eq_proof_b2_2}
        \ketbra{\Psi} = \Pi_0 \rho_L(0, 0)\Pi_0,
    \end{equation}
    where $\Pi_i = \frac{1}{2}(II + (-1)^iZZ)$ is the projector onto the eigenspace of $ZZ$ with eigenvalue $(-1)^i$ and $\rho_L(\theta, p) = \frac{1}{2}(II + (1-2p)(\cos\theta XX+\sin\theta YX))$, the Choi state $J_{\mathcal{U}_{\mathcal{N}}}$ can be expressed as
    \begin{equation}
        \begin{aligned}
            J_{\mathcal{U}_{\mathcal{N}}} 
            &= (p_0+p_z)\Pi_0 \rho_L\qty(\theta, \frac{p_z}{p_0+p_z})\Pi_0 \\
            &+ (p_x+p_y)\Pi_1 \rho_L\qty(-\theta, \frac{p_y}{p_x+p_y})\Pi_1.
        \end{aligned}
    \end{equation}
    We note that $\Pi_i \rho_L(\theta, p)\Pi_i$ can be interpreted as a logical state on the $n=2$ repetition code, where the quantum state $\rho(\theta, p) = \frac{1}{2}(I + (1-2p)(\cos\theta X+\sin\theta Y))$ is encoded.
    Below, we denote $q_0 = p_0 + p_z$, $q_1 = p_x + p_y$, $\rho_0 = \rho(\theta, \frac{p_z}{p_0+p_z})$, $\rho_1 = \rho(-\theta, \frac{p_y}{p_x+p_y})$, $\rho_{0,L} = \rho_L(\theta, \frac{p_z}{p_0+p_z})$ and $\rho_{1,L} = \rho_L(-\theta, \frac{p_y}{p_x+p_y})$ for simplicity.
    Under this notation, we have
    \begin{equation}
        \label{eq_proof_b2_3}
        J_{\mathcal{U}_{\mathcal{N}}} = q_0 \Pi_0\rho_{0,L}\Pi_0 + q_1 \Pi_1\rho_{1,L}\Pi_1.
    \end{equation}

    For single-qubit quantum state $\rho_{i}$ and the corresponding logical state $\Pi_i\rho_{i,L}\Pi_i$, let us show
    \begin{equation}
        \label{eq_proof_b2_4}
        R_{\mathbb{F}}(\Pi_i\rho_{i,L}\Pi_i) 
        = R_{\mathbb{F}}(\rho_i).
    \end{equation}
    The essential point is that for any single-qubit quantum state $\sigma_{i}=\frac{1}{2}(I + x_i X+y_i Y +  z_i Z)$ and the corresponding logical state $\Pi_i\sigma_{i,L}\Pi_i$ with $\sigma_{i,L} = \frac{1}{2}(I + x_i XX+y_i YX +  z_i ZI)$,
    two inequalities
    \begin{align}
        \rho_i &\leq \lambda \sigma_i, \\
        \Pi_i\rho_{i,L}\Pi_i &\leq \lambda \Pi_i \sigma_{i,L}\Pi_i
    \end{align}
    are equivalent.
    This equivalence can be shown by the fact that there exist isometries $V_i$ for encoding that satisfy $V_i\rho_iV_i^\dagger = \Pi_i\rho_{i,L}\Pi_i$ and $V_i\sigma_iV_i^\dagger = \Pi_i\sigma_{i,L}\Pi_i$.
    Therefore, it is sufficient to show that the optimality in Eq.~\eqref{eq_resource_robustness} can be achieved with a logical state $\Pi_i\sigma_{i,L}\Pi_i$ belonging to the $(-1)^i$ eigenspace of $ZZ$.
    Let $\sigma_{i,\mathrm{opt}}\in\mathbb{F}$ be a 2-qubit stabilizer state satisfying $\Pi_i\rho_{i,L}\Pi_i \leq R_{\mathbb{F}}(\Pi_i\rho_{i,L}\Pi_i)\sigma_{i,\mathrm{opt}}$.
    Then, by sandwiching this state with $\Pi_i$, we obtain
    \begin{align}
        \label{eq_proof_b2_5}
        \Pi_i\rho_{i,L}\Pi_i 
        &\leq R_{\mathbb{F}}(\Pi_i\rho_{i,L}\Pi_i) \mathrm{tr}[\Pi_i\sigma_{i,\mathrm{opt}}\Pi_i] \frac{\Pi_i\sigma_{i,\mathrm{opt}}\Pi_i}{\mathrm{tr}[\Pi_i\sigma_{i,\mathrm{opt}}\Pi_i]}.
    \end{align}
    Since $\frac{\Pi_i\sigma_{i,\mathrm{opt}}\Pi_i}{\mathrm{tr}[\Pi_i\sigma_{i,\mathrm{opt}}\Pi_i]} \in \mathbb{F}$, we have $\mathrm{tr}[\Pi_i\sigma_{i,\mathrm{opt}}\Pi_i] = 1$ from the definition of $R_{\mathbb{F}}(\Pi_i\rho_{i,L}\Pi_i)$.
    This implies $\sigma_{i,\mathrm{opt}} = \Pi_i \sigma_{i,\mathrm{opt}} \Pi_i$, meaning that the optimal state $\sigma_{i,\mathrm{opt}}$ is always a logical state belonging to the $(-1)^i$ eigenspace of $ZZ$.
    Therefore, we have Eq.~\eqref{eq_proof_b2_4}.
    We note that the optimal state $\sigma_i\in\mathbb{F}$ achieving $\rho_i \leq R_{\mathbb{F}}(\rho) \sigma_i$ lies on the $xy$-plane within the Bloch sphere.
    Therefore, the optimal state $\Pi_i\sigma_{i,L}\Pi_i\in\mathbb{F}$ satisfying $\Pi_i\rho_{i,L}\Pi_i \leq R_{\mathbb{F}}(\Pi_i\rho_{i,L}\Pi_i)\Pi_i\sigma_{i,L}\Pi_i$ can be represented as
    \begin{equation}
        \label{eq_proof_b2_8}
        \Pi_i\sigma_{i,L}\Pi_i = \Pi_i\frac{1}{2}\qty(II + x_iXX+y_iYX)\Pi_i
    \end{equation}
    for some $x_i,y_i\in\mathbb{R}$.

    Next, let us show
    \begin{equation}
        \label{eq_proof_b2_9}
        R_{\mathbb{F}}(J_{\mathcal{U}_{\mathcal{N}}}) 
        = q_0 R_{\mathbb{F}}(\rho_{0}) + q_1 R_{\mathbb{F}}(\rho_{1}).
    \end{equation}
    Let $\sigma_{\mathrm{opt}}\in\mathbb{F}$ be a 2-qubit stabilizer state satisfying $J_{\mathcal{U}_{\mathcal{N}}} \leq R_{\mathbb{F}}(J_{\mathcal{U}_{\mathcal{N}}})\sigma_{\mathrm{opt}}$.
    By sandwiching this inequality with $\Pi_i$, we obtain
    \begin{equation}
        \label{eq_proof_b2_10}
        q_i\Pi_i\rho_{i,L}\Pi_i \leq R_{\mathbb{F}}(J_{\mathcal{U}_{\mathcal{N}}}) \mathrm{tr}[\Pi_i\sigma_{\mathrm{opt}}\Pi_i] \frac{\Pi_i\sigma_{\mathrm{opt}}\Pi_i}{\mathrm{tr}[\Pi_i\sigma_{\mathrm{opt}}\Pi_i]}.
    \end{equation}
    Since $\frac{\Pi_i\sigma_{\mathrm{opt}}\Pi_i}{\mathrm{tr}[\Pi_i\sigma_{\mathrm{opt}}\Pi_i]} \in\mathbb{F}$, this inequality implies
    \begin{equation}
        \label{eq_proof_b2_11}
        R_{\mathbb{F}}(J_{\mathcal{U}_{\mathcal{N}}}) \mathrm{tr}[\Pi_i\sigma_{\mathrm{opt}}\Pi_i] \geq q_iR_{\mathbb{F}} (\Pi_i\rho_{i,L}\Pi_i) = q_iR_{\mathbb{F}} (\rho_{i}),
    \end{equation}
    where we use Eq.~\eqref{eq_proof_b2_4}. 
    Since $\mathrm{tr}[\Pi_0\sigma_{\mathrm{opt}}\Pi_0] + \mathrm{tr}[\Pi_1\sigma_{\mathrm{opt}}\Pi_1] = 1$, we obtain
    \begin{equation}
        \label{eq_proof_b2_12}
        R_{\mathbb{F}}(J_{\mathcal{U}_{\mathcal{N}}})
        \geq q_0 R_{\mathbb{F}}(\rho_{0}) + q_1 R_{\mathbb{F}}(\rho_{1}).
    \end{equation}
    Furthermore, from Eq.~\eqref{eq_proof_b2_3} and the convexity of the resource robustness, we obtain 
    \begin{equation}
        \label{eq_proof_b2_13}
        R_{\mathbb{F}}(J_{\mathcal{U}_{\mathcal{N}}})
        \leq q_0 R_{\mathbb{F}}(\rho_{0}) + q_1 R_{\mathbb{F}}(\rho_{1}).
    \end{equation}
    From Eq.~\eqref{eq_proof_b2_12} and Eq.~\eqref{eq_proof_b2_13}, we obtain Eq.~\eqref{eq_proof_b2_9}.
    
    The optimal state $\sigma_{\mathrm{opt}}\in\mathbb{F}$ satisfying $J_{\mathcal{U}_{\mathcal{N}}} \leq R_{\mathbb{F}}(J_{\mathcal{U}_{\mathcal{N}}})\sigma_{\mathrm{opt}}$ can be represented as
    \begin{equation}
        \sigma_{\mathrm{opt}} = \frac{q_0R_{\mathbb{F}}(\rho_{0}) \Pi_0\sigma_{0,L}\Pi_0 + q_1R_{\mathbb{F}}(\rho_{1}) \Pi_1\sigma_{1,L}\Pi_1}{q_0 R_{\mathbb{F}}(\rho_{0}) + q_1 R_{\mathbb{F}}(\rho_{1})},
    \end{equation}
    where $\sigma_{i,L}$ is defined in Eq.~\eqref{eq_proof_b2_8}.
    Since this state satisfies the trace-preserving condition $\mathrm{tr}_{\mathrm{S}}[\sigma_{\mathrm{opt}}]=I/2$, we obtain
    \begin{equation}
         R_{\mathbb{O}}(\mathcal{U}_{\mathcal{N}}) = R_{\mathbb{F}}(J_{\mathcal{U}_{\mathcal{N}}}) = q_0 R_{\mathbb{F}}(\rho_{0}) + q_1 R_{\mathbb{F}}(\rho_{1})
    \end{equation}
    from Eq.~\eqref{eq_proof_b2_1} and Eq.~\eqref{eq_proof_b2_9}.

    Finally, let us derive Eq.~\eqref{eq_lem2_2_1} and Eq.~\eqref{eq_lem2_2_2}.
    The key point is that the definition of the resource robustness depicted in Eq~\eqref{eq_resource_robustness} can also be represented as
    \begin{equation}
        R_{\mathbb{F}}(\rho) = \min\qty{\lambda\geq1 ~|~ \exists \tau\in\mathbb{D}, ~ \frac{\rho+(\lambda-1)\tau}{\lambda}\in\mathbb{F}},
    \end{equation}
    where $\mathbb{D}$ is the set of density matrices (or quantum states).
    Since we consider $0\leq \theta \leq \pi/4$, we analyze the resource robustness for a quantum state $\rho = \frac{1}{2}(I + xX + yY)$ satisfying $0 \leq y \leq x$.
    When $\rho \in \mathbb{F}$, i.e., $y \leq 1-x$, we have $\frac{\rho+(1-1)\tau}{1}\in\mathbb{F}$.
    Therefore, $R_{\mathbb{F}}(\rho) = 1$.
    On the other hand, when $\rho \notin \mathbb{F}$, i.e., $y > 1-x$, the optimal state $\tau\in\mathbb{D}$ satisfying $\frac{\rho+(\lambda-1)\tau}{\lambda}\in\mathbb{F}$ can be derived from a geometrical analysis on the Bloch sphere, resulting in $\tau = \frac{1}{2}\qty(I - \frac{1}{\sqrt{2}}X - \frac{1}{\sqrt{2}}Y)$.
    Therefore, we obtain 
    \begin{equation}
        R_{\mathbb{F}}(\rho) = \frac{\sqrt{2}+x+y}{\sqrt{2}+1}.
    \end{equation}
    We note that this result coincides with the value obtained from a general formula shown in Proposition~5 of Ref.~\cite{Rubboli2024mixedstate}.

    For the resource weight, its definition depicted in Eq.~\eqref{eq_resource_weight} can also be represented as 
    \begin{equation}
        W_{\mathbb{F}}(\rho) = \max\qty{\lambda\leq1 ~|~ \exists \sigma\in\mathbb{F}, \exists\tau\in\mathbb{D}, ~ \lambda\sigma + (1-\lambda)\tau = \rho}.
    \end{equation}
    For $\rho \in \mathbb{F}$, i.e., $y \leq 1-x$, we obtain $W_{\mathbb{F}}(\rho) = 1$ by setting $\sigma=\rho$.
    For $\rho \notin \mathbb{F}$, i.e., $y > 1-x$, we employ a geometrical analysis on the Bloch sphere.
    We can show that when $y \leq (\sqrt{2}+1)(1-x)$, the optimal $\tau$ can be represented as $\tau = \frac{1}{2}\qty(I + \frac{1}{\sqrt{2}}X + \frac{1}{\sqrt{2}}Y)$, and when $y \leq (\sqrt{2}+1)(1-x)$, the optimal $\sigma$ can be represented as $\sigma = \frac{1}{2}(I+X)$.
    Therefore, when $\rho \notin \mathbb{F}$, i.e., $y > 1-x$, we obtain
    \begin{equation}
        W(\theta, p) =
        \begin{cases}
            \frac{\sqrt{2}-(x+y)}{\sqrt{2}-1} & (y \leq (\sqrt{2}+1)(1-x))\\
            \frac{1-(x^2+y^2)}{2(1-x)} & (y > (\sqrt{2}+1)(1-x))
        \end{cases}.
    \end{equation}
\end{proof}

For the fidelity-based measure $F_{\mathbb{O}}(\mathcal{U})$, we have
\begin{align}
     F_{\mathbb{O}}(\mathcal{U}) 
     &= \max_{\mathcal{M}\in\mathbb{O}}\min_{\rho} F(\mathcal{I}\otimes\mathcal{U}(\rho), \mathcal{I}\otimes\mathcal{M}(\rho)) \label{eq_fidelity_1} \\
     &\leq \max_{\mathcal{M}\in\mathbb{O}}F(J_{\mathcal{U}}, J_{\mathcal{M}}) \label{eq_fidelity_2}\\
     &= \max_{\mathcal{M}\in\mathbb{O}} \mathrm{tr}[(\Pi_0 \rho_L\qty(\theta)\Pi_0) (\Pi_0J_{\mathcal{M}}\Pi_0)] \label{eq_fidelity_3}\\
     &\leq \max_{\sigma\in\mathbb{F}} \mathrm{tr}\qty[\Pi_0 \rho_L\qty(\theta)\Pi_0 \frac{\Pi_0\sigma\Pi_0}{\mathrm{tr}[\Pi_0\sigma\Pi_0]}] \label{eq_fidelity_4}\\
     &\leq \max_{\sigma\in\mathbb{F}} \mathrm{tr}[\rho(\theta) \sigma] \label{eq_fidelity_5}\\
     &= \frac{1+\cos\theta}{2}. \label{eq_fidelity_6}
\end{align}
Here, $\Pi_0 = \frac{1}{2}(II + ZZ)$ is the projector onto the eigenspace of $ZZ$ with eigenvalue $+1$, $\rho_L(\theta) = \frac{1}{2}(II + \cos\theta XX+\sin\theta YX)$, and $\rho(\theta) = \frac{1}{2}(I + \cos\theta X+\sin\theta Y)$.
Eq.~\eqref{eq_fidelity_1} is the definition of the fidelity-based measure $F_{\mathbb{O}}(\mathcal{U})$.
In Eq.~\eqref{eq_fidelity_2}, we set $\rho = \ketbra{\Psi}$.
Eq.~\eqref{eq_fidelity_3} can be obtained from a simple calculation, and we use Eq.~\eqref{eq_CSP_channel} to derive Eq.~\eqref{eq_fidelity_4}.
Since $\Pi_0 \rho_L\qty(\theta)\Pi_0$ and $\frac{\Pi_0\sigma\Pi_0}{\mathrm{tr}[\Pi_0\sigma\Pi_0]}$ are a logical state on the 2-qubit repetition code where a single-qubit state is encoded, we obtain Eq.~\eqref{eq_fidelity_5}.
Finally, Eq.~\eqref{eq_fidelity_6} is obtained because the maximum is achieved for $\sigma= \frac{1}{2}(I+X)$, where we have used $0\leq \theta\leq\pi/4$.

By combining Lemma~\ref{lem_1}, Lemma~\ref{lem_2}, and Eq.~\eqref{eq_fidelity_5}, we obtain the following theorem.
\begin{thm}
    \label{thm_5'}
    (Restatement of Theorem~\ref{thm_5})  
    Let $\mathcal{U}(\cdot) = e^{-i\frac{\theta}{2} Z}\cdot e^{i\frac{\theta}{2} Z}$ be a Pauli-$Z$ rotation gate with $0\leq\theta\leq\pi/4$, and let $\mathcal{U}_{\mathcal{N}} = \mathcal{N}\circ\mathcal{U}$ be a noisy channel affected by single-qubit Pauli noise $\mathcal{N}(\cdot) = p_0 \cdot +p_x X\cdot X + p_y Y\cdot Y + p_z Z\cdot Z$. Then, the worst-case fidelity between the ideal channel $\mathcal{U}$ and the purified noisy channel $\Theta(\mathcal{U}_{\mathcal{N}})$ under a stabilizer-preserving superchannel $\Theta\in\mathbb{S}$ is upper bounded as
    \begin{equation}
        \begin{aligned}
            F(\Theta(\mathcal{U}_{\mathcal{N}}), \mathcal{U})
            \leq \frac{1+\cos\theta}{2}\biggl((&p_0+p_z)R\qty(\theta,\frac{p_z}{p_0+p_z})\\
            + (&p_x+p_y)R\qty(\theta, \frac{p_y}{p_x+p_y})\biggr),
        \end{aligned}
    \end{equation}
    \begin{equation}
        \begin{aligned}
            F(\Theta(\mathcal{U}_{\mathcal{N}}), \mathcal{U})
            \leq 1-\frac{1-\cos\theta}{2}\biggl((&p_0+p_z)W\qty(\theta,\frac{p_z}{p_0+p_z})\\
            + (&p_x+p_y)W\qty(\theta, \frac{p_y}{p_x+p_y})\biggr).
        \end{aligned}
    \end{equation}
    Here, $R(\theta,p)$ and $W(\theta,p)$ are defined in Eq.~\eqref{eq_lem2_2_1} and Eq.~\eqref{eq_lem2_2_2}.
\end{thm}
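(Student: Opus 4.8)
The plan is to obtain Theorem~\ref{thm_5'} purely by assembling the three ingredients already established: the generic fidelity bound of Lemma~\ref{lem_1}, the closed-form resource measures of Lemma~\ref{lem_2}, and the explicit value of the fidelity-based measure $F_{\mathbb{O}}(\mathcal{U})$ computed in Eqs.~\eqref{eq_fidelity_1}--\eqref{eq_fidelity_6}. Since the operationally motivated superchannels $\Theta^{\mathrm{cor}}_{\mathrm{Clif}}$ form a subset of the stabilizer-preserving superchannels $\mathbb{S}$, any bound valid for all $\Theta\in\mathbb{S}$ applies in particular to $\Theta^{\mathrm{cor}}_{\mathrm{Clif}}$; I would therefore prove the statement for the larger axiomatic class, which is exactly the regime of Lemma~\ref{lem_1}.

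First I would invoke Lemma~\ref{lem_1} with the unitary channel $\mathcal{U}(\cdot)=e^{-i\frac{\theta}{2}Z}\cdot e^{i\frac{\theta}{2}Z}$ and the noisy channel $\mathcal{U}_{\mathcal{N}}=\mathcal{N}\circ\mathcal{U}$, yielding the two inequalities $F(\Theta(\mathcal{U}_{\mathcal{N}}),\mathcal{U})\leq F_{\mathbb{O}}(\mathcal{U})\,R_{\mathbb{O}}(\mathcal{U}_{\mathcal{N}})$ and $F(\Theta(\mathcal{U}_{\mathcal{N}}),\mathcal{U})\leq 1-(1-F_{\mathbb{O}}(\mathcal{U}))\,W_{\mathbb{O}}(\mathcal{U}_{\mathcal{N}})$. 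Next I would substitute $F_{\mathbb{O}}(\mathcal{U})=\tfrac{1+\cos\theta}{2}$ from Eq.~\eqref{eq_fidelity_6}, recording also $1-F_{\mathbb{O}}(\mathcal{U})=\tfrac{1-\cos\theta}{2}$; this is where the hypothesis $0\leq\theta\leq\pi/4$ enters, since it guarantees that the maximizing stabilizer state in the fidelity computation is $\tfrac{1}{2}(I+X)$ (and the same range is assumed in Lemma~\ref{lem_2}). Finally I would insert the expressions for $R_{\mathbb{O}}(\mathcal{U}_{\mathcal{N}})$ and $W_{\mathbb{O}}(\mathcal{U}_{\mathcal{N}})$ from Eq.~\eqref{eq_lem2_1} of Lemma~\ref{lem_2}, written in terms of the single-qubit state robustness $R(\theta,p)$ and weight $W(\theta,p)$ of Eqs.~\eqref{eq_lem2_2_1}--\eqref{eq_lem2_2_2}. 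Carrying out these substitutions reproduces the two displayed bounds of Theorem~\ref{thm_5'}, and the first of them is precisely the bound quoted as Theorem~\ref{thm_5} in the main text.

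Because every nontrivial computation has already been discharged in the auxiliary lemmas, the argument for Theorem~\ref{thm_5'} itself is essentially bookkeeping: matching the prefactors $\tfrac{1\pm\cos\theta}{2}$ with the corresponding robustness and weight terms and confirming the range of $\theta$ is used consistently. The genuine difficulty of the overall result lies one level down, in Lemma~\ref{lem_2}, whose proof must show that the channel resource measures $R_{\mathbb{O}}(\mathcal{U}_{\mathcal{N}})$ and $W_{\mathbb{O}}(\mathcal{U}_{\mathcal{N}})$ collapse to single-qubit state measures. The key obstacle there, which I would flag as the heart of the whole proof, is exploiting the block structure of the Choi state $J_{\mathcal{U}_{\mathcal{N}}}$ with respect to the $ZZ$ eigenspaces: sandwiching an optimal stabilizer state with the projectors $\Pi_i$ shows it may be taken block-diagonal, so the robustness (resp. weight) optimization factorizes across the two $ZZ$ sectors, and each sector reduces, via the repetition-code isometry, to a planar Bloch-sphere geometry problem for a single qubit. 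Granting that reduction, Theorem~\ref{thm_5'} follows immediately.
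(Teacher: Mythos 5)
Your proposal is correct and follows essentially the same route as the paper: the paper's proof of this theorem is exactly the assembly of Lemma~\ref{lem_1}, Lemma~\ref{lem_2}, and the bound $F_{\mathbb{O}}(\mathcal{U})\leq\frac{1+\cos\theta}{2}$ from Eqs.~\eqref{eq_fidelity_1}--\eqref{eq_fidelity_6}, with the genuine work relegated to Lemma~\ref{lem_2} precisely as you describe. One minor note: the paper only establishes the \emph{upper bound} $F_{\mathbb{O}}(\mathcal{U})\leq\frac{1+\cos\theta}{2}$, not equality as your wording suggests, but this is harmless since both inequalities in Lemma~\ref{lem_1} are monotonically increasing in $F_{\mathbb{O}}(\mathcal{U})$ (using $R_{\mathbb{O}},W_{\mathbb{O}}\geq 0$), so substituting the upper bound yields valid bounds.
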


Notably, for $\theta = \pi/4$ and assuming $\frac{p_z}{p_0+p_z}, \frac{p_y}{p_x+p_y}< \frac{2-\sqrt{2}}{4}$, the two bound simplifies to the same bound of
\begin{equation}
    F(\Theta^{\mathrm{cor}}(\mathcal{U}_{\mathcal{N}}), \mathcal{U}) \leq 1-p_y-p_z,
\end{equation}
which can be saturated by SCV under Pauli symmetry.
While the original bounds are based on two distinct resource measures, which are resource robustness and resource weight, they boil down to the same inequality.
This may suggest an as-yet-unexplored relationship between the two resource measures.
Moreover, while the original bound in Ref.~\cite{regula2021fundamental} (Lemma~\ref{lem_1} in our manuscript) applies to axiomatically defined superchannels $\Theta\in\mathbb{S}$ in Eq.~\eqref{eq_stab_superchanel}—which is much larger than the set of operationally motivated superchannels $\Theta_{\mathrm{Clif}}^{\mathrm{cor}}$ and thus was expected to only give loose bounds for practical settings—the bound can, in fact, be achieved using our operational protocols.
This result not only demonstrates the effectiveness of SCV but also provides theoretical insights into dynamical resource theories by establishing the tightness and practical significance of this general bound.

\bibliography{bib.bib}
\end{document}